\pdfoutput=1
\documentclass[11pt]{article}
\usepackage[margin=1in]{geometry}
\usepackage[T1]{fontenc}
\usepackage[utf8]{inputenc}
\usepackage{lmodern}
\usepackage{microtype}
\usepackage{amsmath,amssymb,amsthm,mathtools,mathrsfs}
\usepackage{booktabs,multirow,array,tabularx}
\usepackage{graphicx,subcaption,float}
\usepackage{placeins}
\usepackage{enumitem}
\usepackage[round]{natbib}
\usepackage{xcolor}
\usepackage{url}
\usepackage{hyperref}
\hypersetup{colorlinks=true,linkcolor=blue,citecolor=blue,urlcolor=blue}
\graphicspath{{numerics/}}

\newtheorem{theorem}{Theorem}[section]
\newtheorem{proposition}[theorem]{Proposition}
\newtheorem{lemma}[theorem]{Lemma}
\newtheorem{corollary}[theorem]{Corollary}
\theoremstyle{definition}

\newtheorem{example}[theorem]{Example}

\numberwithin{equation}{section}

\newcommand{\EE}{\mathbb E}
\newcommand{\Pbb}{\mathbb P}
\newcommand{\R}{\mathbb R}
\newcommand{\Z}{\mathbb Z}
\newcommand{\eps}{\varepsilon}
\newcommand{\Ga}{\Gamma}
\newcommand{\bK}{\boldsymbol{K}}
\newcommand{\bGa}{\boldsymbol{\Gamma}}
\newcommand{\blfootnote}[1]{%
  \begingroup
  \renewcommand{\thefootnote}{}%
  \begin{NoHyper}%
  \footnotetext{#1}%
  \end{NoHyper}%
  \endgroup
}

\title{Unifying Privacy Accounting:\\ Information Equivalence and Information Loss}

\author{%
  Buxin Su\textsuperscript{1,*}
  \and Qiaoshi Yang\textsuperscript{1,*}
  \and Yiding Su\textsuperscript{2,*}
  \and Chendi Wang\textsuperscript{3,\(\dagger\)}%
}
\date{\today}

\begin{document}
\maketitle
\blfootnote{\textsuperscript{1}\,University of Pennsylvania; Emails:
\texttt{subuxin@sas.upenn.edu} and \texttt{yangqs@sas.upenn.edu}.}
\blfootnote{\textsuperscript{2}\,University of Warwick; Email:
\texttt{yiding.su@warwick.ac.uk}.}
\blfootnote{\textsuperscript{3}\,Xiamen University; Email:
\texttt{chendi.wang@xmu.edu.cn}.}
\blfootnote{\textsuperscript{*}\,Equal contribution.}
\blfootnote{\textsuperscript{\(\dagger\)}\,Corresponding author.}

\begin{abstract}
Differential privacy (DP) admits several notions, but the choice among them may affect both privacy analysis and utility. In this paper, we consider four mainstream curve-based privacy notions within a unified information-theoretic framework. For a fixed ordered pair of output distributions, we establish information equivalence among the two directional privacy profiles of $(\varepsilon,\delta)$-DP, the pair of hypothesis-testing trade-off functions, and the extended privacy-loss distribution. The exact R\'enyi differential privacy (RDP) curve joins this equivalence class whenever it is finite at some order greater than one. Under this mild condition, choosing among these notions changes only their semantic interpretation and computational requirements.
In contrast, taking the maximum of the directional privacy profiles or compressing the RDP curve into a single zero-concentrated differential privacy (zCDP) parameter can lose information. We quantify the information loss between the exact RDP curve and its zCDP bound for standard noise mechanisms. This gap is zero for Gaussian noise but generally positive for Gaussian-mixture, Laplace, discrete Gaussian, and Poisson-subsampled Gaussian mechanisms.
Moreover, this gap grows linearly with the number of independently composed mechanisms. Our information-theoretic perspective has practical consequences. At the same certified privacy level, retaining the full RDP curve rather than using zCDP reduces the required noise variance by up to $45\%$ for Gaussian-mixture noise in workloads comparable in size to the American Community Survey. For DP-SGD on Fashion-MNIST under Poisson subsampling, an RDP-based privacy accountant improves test accuracy by up to $8.73$ percentage points compared to a zCDP-based accountant when both are calibrated to the same $(\varepsilon,\delta)$ guarantee. These results clarify when different privacy notions have equivalent information (when the gap is zero or nearly zero) and when retaining more detailed privacy information improves utility (when the gap is large).
\end{abstract}

\section{Introduction}
\label{sec:intro}

\begin{figure}[t]
\centering
\includegraphics[width=\linewidth]{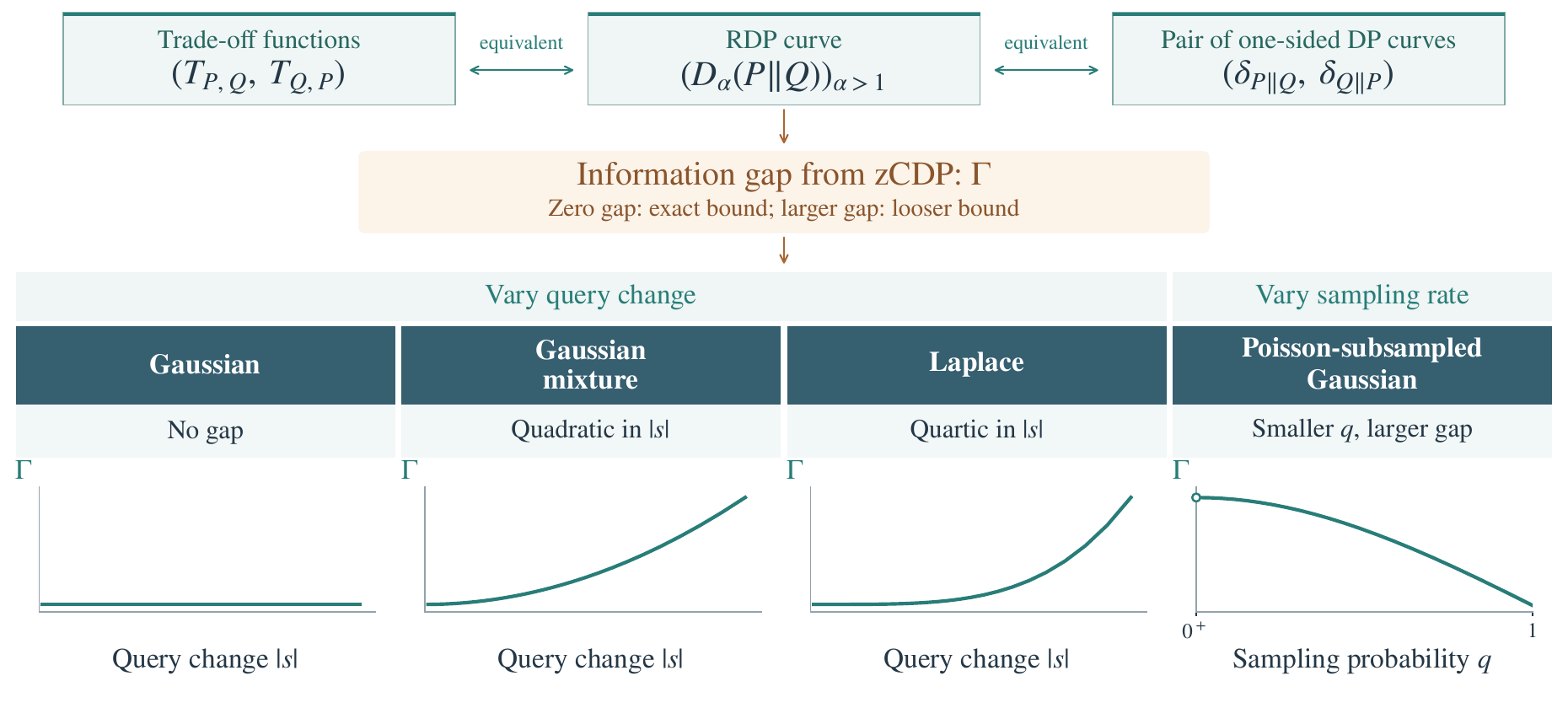}
\caption{Exact privacy curves and the zCDP--RDP gap.
Top: three of the four curve-based notions in Theorem~\ref{thm:equivalence}; the RDP curve joins them under a mild condition.
Bottom: how the information gap varies with the query change or the sampling probability for four mechanisms.
\textbf{Left three:} $s$ is the query change between neighboring datasets.
At fixed R\'enyi order $\alpha=2$, the Gaussian gap is zero, while the
symmetric Gaussian-mixture and Laplace gaps scale as $|s|^2$ and $|s|^4$,
respectively, as $|s|$ tends to $0$.
\textbf{Right:} $q$ is the sampling rate of a Poisson-subsampled Gaussian mechanism. At fixed $\alpha=2$, smaller $q>0$
yields a larger gap; $q=1$ corresponds to no subsampling, and the open
circle indicates the limit as $q\rightarrow0$.
All panels show trends in $\Ga(t)$, defined in~\eqref{eq:gap-definition}.
Detailed curves and parameter settings appear in
Figure~\ref{fig:common-mechanism-gaps}.}
\label{fig:privacy-overview}
\end{figure}

Differential privacy (DP) is a rigorous mathematical framework for
quantifying changes in the output distributions of a randomized
algorithm on two neighboring datasets that differ in at most one
element \citep{dwork2006differential,dwork2014algorithmic}.
For a fixed ordered pair of output distributions induced by a mechanism
on neighboring datasets, different privacy notions serve different purposes:
R\'enyi differential privacy (RDP) supports convenient composition
through moment accounting and is widely used for DP-SGD
\citep{mironov2017renyi,abadi2016deep,wang2019subsampled};
$f$-DP provides a hypothesis-testing trade-off function and a
tensor-product composition rule
\citep{Dong2022Gaussian,su2025statistical,su2025census};
privacy-loss distributions (PLDs) convolve under product composition
and support sharp numerical accounting
\citep{Koskela2020computing,gopi2021numerical,koskela2021tight};
and zero-concentrated differential privacy (zCDP) compresses the
RDP curve into a single parameter $\rho$, enabling simple scalar
composition \citep{bun2016concentrated}.
Meanwhile, $(\eps,\delta)$-DP remains the standard reporting format,
and its optimal composition admits an operational hypothesis-testing
interpretation \citep{kairouz2017composition}.

However, a single $(\varepsilon,\delta)$ pair gives only one operating
point of the full privacy profile curve. Such pairs also do not induce a
total ordering: for example, $(2,10^{-6})$ and $(1,10^{-5})$ are generally
incomparable. Recent work therefore studies mechanism comparison beyond
a single calibration point \citep{balle2020privacy,kifer2022bayesian,kaissis2024beyond,gomez2025reporting}.
Yet two mechanisms can share the same two-sided 
$(\varepsilon,\delta)$ profile, $\max(\delta_{P\|Q}(\varepsilon), \delta_{Q\|P}(\varepsilon))$, while having different exact RDP curves
(Section~\ref{sec:information-equivalence}).
Likewise, Gaussian and mixture noise can share the same optimal
zCDP parameter $\rho$ but have different exact RDP curves
(Section~\ref{sec:defect}).
This leads to two basic questions:
(1) Under what conditions do the four full privacy notions, with
directional information retained, contain the same privacy information
about a fixed pair of output distributions?
(2) How much information is lost when a full privacy curve is
compressed into a single zCDP parameter?

Answering these questions helps us decide which privacy notion to use.
The first question asks whether we can choose a privacy notion to suit
our needs without losing privacy information. 
Prior work establishes duality between complete $(\eps,\delta)$-curve
guarantees and symmetric $f$-DP \citep{Dong2022Gaussian}, as well as
optimal conversions from RDP bounds to $(\eps,\delta)$-DP and $f$-DP
\citep{asoodeh2021three,riess2026optimal}.
These conversions are optimal given the RDP bounds, but need not recover the exact privacy curve of a particular distribution pair.
In contrast, we ask whether the full exact RDP curve of a fixed ordered pair $(P,Q)$ uniquely determines its directional $(\eps,\delta)$ profiles, trade-off functions, and PLD.
The second question matters in practice: losing privacy information leads to more noise to satisfy the same privacy requirement and reduces accuracy. For example, an
$f$-DP re-accounting of the 2020 U.S.~Census \citep{abowd20222020}
showed that the discrete-Gaussian variance proxy $\sigma^2$ could be
reduced by $15.08\%$--$24.82\%$ while maintaining nearly the same certified
privacy guarantees \citep{su2025census,su2026sieve}.
In DP-SGD, using the full RDP curve rather than a zCDP summary
can also reduce the noise required during privacy calibration, as shown in 
Figure~\ref{fig:dpsgd-composition-sweep}.

\paragraph{Our Contributions.}
We answer the two questions above through the following contributions,
summarized in Figure~\ref{fig:privacy-overview}.
\begin{itemize}
\item \textbf{Information equivalence of curve-based privacy notions.}
We show that, for a fixed ordered pair $(P,Q)$, the pair of one-sided
$(\eps,\delta)$ profiles, the pair of trade-off functions, and
the extended PLD contain the same information, as does the
full exact RDP curve under mild conditions.
Our counterexamples show that taking the maximum of the one-sided profiles, $\max(\delta_{P\|Q}(\varepsilon), \delta_{Q\|P}(\varepsilon))$, can lose information.

\item \textbf{Information loss of zCDP for common mechanisms.}
We introduce the zCDP--RDP gap, the distance between the zCDP bound and the exact RDP curve, as a measure of the
information discarded when the curve is compressed into a single
parameter. The gap adds up under independent composition, and we expand
it for small query changes $s$ (Theorem~\ref{thm:sensitivity-gap}). It is
identically zero for Gaussian noise, of order $s^2$ for Gaussian-mixture
noise, and of order $s^4$ for Laplace noise. Under Poisson subsampling
with rate $q$ it grows as $q$ decreases and approaches the entire zCDP
bound as $q\to0$: zCDP discards the privacy amplification that
subsampling provides.

\item \textbf{Practical guidance for privacy accounting.}
Experiments on private data release and DP-SGD show that information
loss can require additional noise and reduce accuracy. Our results
support using the full RDP curve or a privacy notion with equivalent information
when tighter accounting is needed, and zCDP when the zCDP--RDP gap
is negligible.

\end{itemize}

\begin{figure}[!t]
\centering
\includegraphics[width=.82\linewidth]{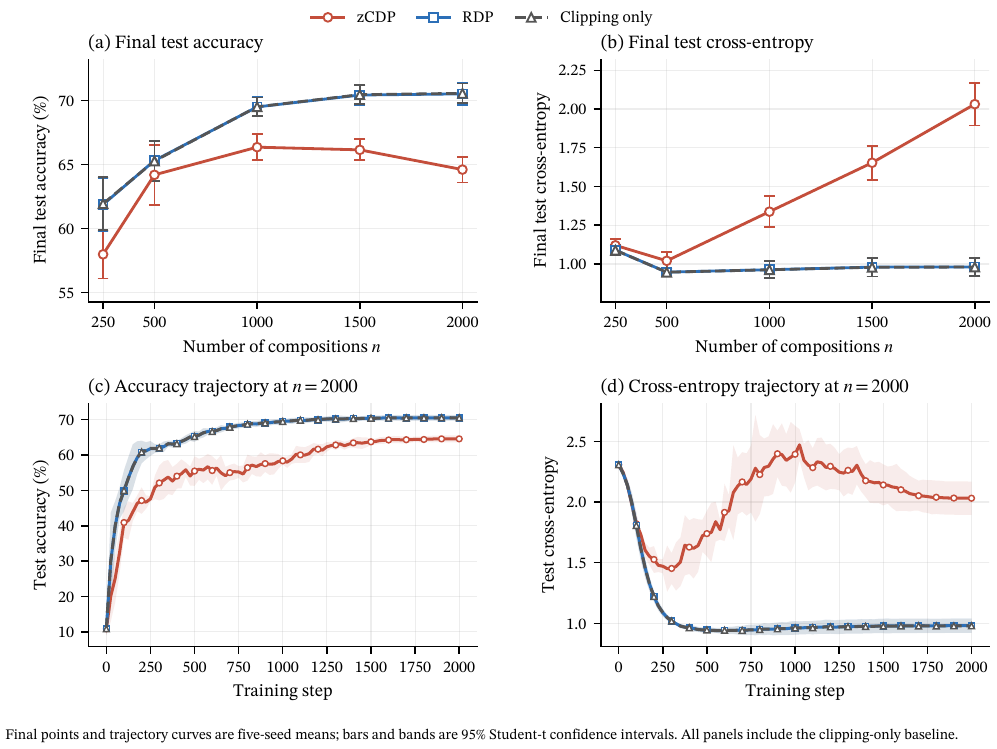}
\caption{DP-SGD on Fashion-MNIST with noise calibrated under RDP and under
zCDP to the same certified privacy $(\eps,\delta)=(3,10^{-5})$;
``clipping only'' adds no noise. \textup{(a)}--\textup{(b)}: final test
accuracy and cross-entropy as the number of training steps $n$ varies at
$q=0.01$, with the noise recalibrated for each $n$.
\textup{(c)}--\textup{(d)}: training histories for $n=2000$, using the noise
calibrated for $n=2000$ throughout. Curves are five-seed means; bars and
bands are 95\% confidence intervals. Setup in
Section~\ref{subsec:controlled-dpsgd}.}
\label{fig:dpsgd-composition-sweep}
\label{fig:dpsgd-composition-full}
\end{figure}

\paragraph{Related Work.}
Previous work connects the curve-based privacy notions through hypothesis
testing and transforms of privacy loss. The statistical view of differential
privacy uses hypothesis testing to compare privacy guarantees
\citep{su2025statistical}; convex duality connects one-sided
$(\eps,\delta)$-DP curves and trade-off functions
\citep{Dong2022Gaussian,balle2020hypothesis}.
Privacy-loss random variables give an equivalent description of a full
privacy curve \citep{gopi2021numerical}, and their characteristic functions
support conversions between one-sided curves, trade-off functions, and PLDs
\citep{zhu2022optimal}. The connection to RDP comes from exponential
moments: a finite privacy-loss CGF identifies the PLD
\citep{wang2019subsampled}, while Laplace-transform identities relate an
extended privacy curve and the RDP curve under absolute-continuity
conditions \citep{chourasia2025laplace}.

Theorem~\ref{thm:equivalence} brings these connections together for arbitrary
probability measures, including singular pairs. Finiteness at one real
order greater than one lets the full exact RDP curve recover the PLD,
the trade-off functions, and both one-sided curves. This recovery uses only
real-order RDP values and does not require mutual absolute continuity or
complex-order reports. The counterexamples in
Section~\ref{sec:information-equivalence} explain why the directional
information and the finite-order condition matter. This is a different
question from finding the best guarantee consistent with reported RDP
bounds \citep{asoodeh2021three,riess2026optimal}, as discussed above:
here the object to recover is the exact privacy of a fixed pair.
To our knowledge, prior work has not quantified RDP-to-zCDP information
loss through a gap at each order or studied its local expansion and
accumulation under independent composition.

\section{Information Equivalence between Exact Privacy Notions}
\label{sec:information-equivalence}

Fix two neighboring datasets and let $P$ and $Q$ be the output distributions
of the mechanism on them. For $\eps\geq0$, the \emph{one-sided
$(\eps,\delta)$-DP curve}
\[
 \delta_{P\|Q}(\eps):=\sup_A\{P(A)-e^\eps Q(A)\}
\]
is the smallest
$\delta$ such that $P(A)\leq e^\eps Q(A)+\delta$ for every event $A$;
exchanging $P$ and $Q$ gives $\delta_{Q\|P}$. The usual \emph{two-sided} curve
is $\delta(\eps)=\max\{\delta_{P\|Q}(\eps),\delta_{Q\|P}(\eps)\}$; we keep the
two one-sided curves separate.

For testing $P$ against $Q$, the trade-off function $T_{P,Q}(a)$ is the
smallest type II error among tests whose type I error is at most $a$:
\[
 T_{P,Q}(a):=\inf_{\phi}\{1-\EE_Q[\phi]:\EE_P[\phi]\leq a\},
 \qquad 0\leq a\leq1,
\]
where $\phi$ ranges over randomized tests with values in $[0,1]$.
A mechanism is $f$-DP if $T_{P,Q}\geq f$ for every ordered neighboring
pair \citep{Dong2022Gaussian}.
We write $L$ for the privacy loss, whose distribution under $P$ is the PLD.
For densities or probability mass functions $p$ and $q$,
$L=\log\{p(X)/q(X)\}$ with $X\sim P$ \citep{zhu2022optimal};
its value is $+\infty$ if an output is possible under $P$ but impossible
under $Q$. Appendix~\ref{app:proof-information-equivalence} gives the
definition for arbitrary probability measures.
We write $K(t):=\log\EE_P[e^{tL}]$ for its cumulant-generating function
(CGF). Throughout, $t=\alpha-1$. At order $\alpha=1+t>1$ the exact RDP
curve is $D_{1+t}(P\|Q)=K(t)/t$. These quantities may be infinite;
we set $K(0)=0$.

An $(\alpha,\eps_\alpha)$-RDP guarantee bounds $D_\alpha(P\|Q)$ by
$\eps_\alpha$ for every ordered neighboring pair \citep{mironov2017renyi}.
A mechanism is $\widetilde\rho$-zCDP if
$D_\alpha(P\|Q)\leq\widetilde\rho\alpha$ for every $\alpha>1$ and every
ordered neighboring pair \citep{bun2016concentrated}.
For a fixed pair, the optimal zCDP parameter is
\[
 \rho(P,Q):=\sup_{\alpha>1}\frac{D_\alpha(P\|Q)}{\alpha}.
\]
Thus zCDP replaces the RDP curve by a single number.

We say that a notion $\mathcal A$ \emph{determines} a notion $\mathcal B$ if
any two pairs $(P,Q)$ and $(P',Q')$ with the same $\mathcal A$ also have the
same $\mathcal B$; two notions that determine each other carry the same
information. For instance, the RDP curve determines $\rho$, but $\rho$ does not
determine the RDP curve (Section~\ref{sec:defect}).

\begin{theorem}[Information equivalence between curve-based notions]
\label{thm:equivalence}
Let $P$ and $Q$ be arbitrary probability measures on
$(\Omega,\mathcal F)$.  The following three curve-based notions
determine one another:
\begin{enumerate}[label=\textup{(\alph*)},leftmargin=2.6em,itemsep=0pt,topsep=1pt]
\item the pair of one-sided $(\eps,\delta)$-DP curves
$(\delta_{P\|Q},\delta_{Q\|P})$;
\item the pair of trade-off functions $(T_{P,Q},T_{Q,P})$;
\item the law of the extended privacy-loss random variable $L$ under $P$.
\end{enumerate}
Each notion determines the exact RDP curve
$\bigl(D_\alpha(P\|Q)\bigr)_{\alpha>1}$.
Conversely, if $D_{\alpha_0}(P\|Q)<\infty$ for some real
$\alpha_0>1$, the exact RDP curve determines all three curve-based notions above.
\end{theorem}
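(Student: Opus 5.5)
We plan to reduce everything to a single canonical object: the law of the likelihood ratio. Let $\mu=P+Q$ with densities $p=dP/d\mu$ and $q=dQ/d\mu$. Set $L=\log(p/q)\in[-\infty,+\infty]$. Let $\nu_P$ be the law of $L$ under $P$; it lives on $(-\infty,+\infty]$ and its atom at $+\infty$ equals $P(q=0)$. The law of $L$ under $Q$ is then forced. On $\{L<\infty\}$ we have $dQ=e^{-\ell}\,d\nu_P(\ell)$. The remaining mass $1-\int_{\R}e^{-\ell}\,d\nu_P(\ell)$ sits at $-\infty$, since $Q(p=0)$ is exactly this defect. Hence (c) determines the joint ``likelihood-ratio experiment'' up to sufficiency. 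Every quantity in the theorem is a functional of $(P,Q)$ through $L$ alone, because $L$ is a sufficient statistic.

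\textbf{(c)$\Rightarrow$(a),(b), RDP.} By Neyman--Pearson, $\delta_{P\|Q}(\eps)=\EE_P[(1-e^{\eps-L})_+]$, with the atom at $+\infty$ contributing fully. Similarly $\delta_{Q\|P}(\eps)=\EE_Q[(1-e^{\eps+L})_+]$, which is expressed through $\nu_P$ plus the $-\infty$ atom. The trade-off functions are obtained by thresholding $L$, with randomization at atoms. So $T_{P,Q}$ and $T_{Q,P}$ are explicit functionals of the two laws. Finally $e^{tD_{1+t}}=\EE_P[e^{tL}]$, which equals $+\infty$ if $\nu_P(\{+\infty\})>0$.

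\textbf{(a)$\Leftrightarrow$(b).} We use the convex duality of \citet{Dong2022Gaussian}: $\delta_{P\|Q}(\eps)=1+T^*_{Q,P}(-e^\eps)$ up to the orientation convention, where $T^*$ is the convex conjugate. The relation also holds for $\eps<0$ through the identity $\delta_{P\|Q}(-\eps)=1-e^{-\eps}+e^{-\eps}\delta_{Q\|P}(\eps)$. Since trade-off functions are convex and continuous, Fenchel--Moreau lets us recover them from their conjugates. Keeping both directions is what supplies slopes in $(-\infty,-1)$ and $(-1,0)$.

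\textbf{(a)$\Rightarrow$(c).} Extend $\delta_{P\|Q}$ to all $\eps\in\R$ via the identity above. Write $g(\eps)=\EE_P[(1-e^{\eps-L})_+]$. Then $-e^{-\eps}g'(\eps^+)=-P(L>\eps)$, so the right derivative recovers the distribution function of $\nu_P$ on $\R$. The limit $\lim_{\eps\to\infty}\delta_{P\|Q}(\eps)=P(L=+\infty)$ gives the atom at $+\infty$.

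\textbf{RDP $\Rightarrow$ (c) under $D_{\alpha_0}<\infty$.} This is the main obstacle. Finiteness at $t_0=\alpha_0-1>0$ forces $\nu_P(\{+\infty\})=0$. It also makes $M(t)=\EE_P[e^{tL}]$ finite on $[0,t_0]$, where $e^{tL}$ vanishes on $\{L=-\infty\}$. Moreover $M(-1)=Q(p>0)\le1$ is finite, so $M$ is finite on the whole strip $[-1,t_0]$. The RDP curve gives $M(t)=e^{tD_{1+t}}$ only for $t\in(0,t_0]$. Our plan is as follows.
\begin{enumerate}[label=(\roman*),leftmargin=2em,itemsep=0pt]
\item $M$ extends analytically to the strip $\{-1<\Re z<t_0\}$, using the bound $e^{\Re z\,\ell}\le e^{-\ell}+e^{t_0\ell}$.
\item Its values on the real interval $(0,t_0)$ then determine it on the whole strip by the identity theorem, in particular on $\Re z=0$.
\item On $\Re z=0$ we have $M(i s)=\int e^{is\ell}\,d\nu_P(\ell)$. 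The possible atom at $-\infty$ must be excluded here: under $P$ it has mass $P(p=0)=0$.
\end{enumerate}
Fourier uniqueness then identifies $\nu_P$, and the previous steps recover (a) and (b). We expect the delicate points to be the measure-theoretic bookkeeping of the $\pm\infty$ atoms, and justifying analyticity via dominated convergence on the strip. The extension to $\Re z<0$ is not needed; the interval $(0,t_0)$ together with the boundary line $\Re z=0$ suffices, using continuity up to $\Re z=0$ by dominated convergence.
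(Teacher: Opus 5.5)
Your proposal is correct. Its overall structure matches the paper's: reduce everything to the law of $L$ built from $\mu=P+Q$, use the hockey-stick and Neyman--Pearson formulas for (c)$\Rightarrow$(a),(b), convex duality for (a)$\Leftrightarrow$(b), and right derivatives for (a)$\Rightarrow$(c).

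The genuine difference is the converse step from RDP to (c).
\begin{itemize}
\item \textbf{Paper's route.} The RDP curve gives $K$ only on $(0,c)$, which is not a neighborhood of $0$. So the paper exponentially tilts the privacy-loss law at some $t_0\in(0,c)$. The tilted law has moment-generating function $\exp\{K(t_0+u)-K(t_0)\}$, known for $u$ in a neighborhood of zero. The standard MGF-uniqueness theorem then identifies the tilted law, and untilting recovers $\nu_P$.
\item \textbf{Your route.} You note that $M(z)=\EE_P[e^{zL}]$ is analytic on the strip $-1<\Re z<t_0$, using $\EE_P[e^{-L}]=Q(p>0)\le1$ and your convexity bound. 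Its values on $(0,t_0)$ therefore fix it on the imaginary axis, and Fourier uniqueness finishes the argument.
\item \textbf{Comparison.} Your argument is essentially the proof of MGF uniqueness made explicit. It avoids the tilt, and because $\Re z=0$ lies in the interior of the strip, no boundary-continuity argument is needed. It also produces the characteristic function directly, the object used in characteristic-function accounting. The paper's route is shorter because it cites a textbook theorem and never leaves the real line.
\end{itemize}

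In the duality step the two routes are equivalent. Your extension of $\delta_{P\|Q}$ to $\eps<0$ via $\delta_{P\|Q}(-\eps)=1-e^{-\eps}+e^{-\eps}\delta_{Q\|P}(\eps)$ supplies the same slopes as the paper's two families of support values with $\gamma\geq1$. It likewise plays the role of the paper's separate identity $P\{L<-x\}=-e^{-x}\delta'_{Q\|P,+}(x)$.

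There is one computational slip in (a)$\Rightarrow$(c). The right derivative is $g'(\eps^+)=-e^{\eps}\EE_P[e^{-L}\mathbf 1_{\{L>\eps\}}]$. Hence $-e^{-\eps}g'(\eps^+)=\EE_P[e^{-L}\mathbf 1_{\{\eps<L<\infty\}}]$, which is nonnegative, not $-P(L>\eps)$. The correct tail identity is $P(L>\eps)=g(\eps)-g'(\eps^+)$. Either expression determines $\nu_P$ on $\R$; the first works because $e^{-\ell}>0$ there. So your conclusion is unaffected.
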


\textbf{Remark.}
The condition $D_{\alpha_0}(P\|Q)<\infty$ holds for every mechanism in
Sections~\ref{sec:practical-implications} and~\ref{sec:real-data-experiments}.
It fails only when $\EE_P[e^{tL}]=\infty$ for every $t>0$, for example when $P$
puts mass outside the support of $Q$ or the privacy loss is very heavy-tailed.
Although weak, the condition cannot be dropped, nor weakened to a finite mean
privacy loss.

\begin{example}[Finite mean privacy loss is insufficient]
\label{ex:finite-order-needed}
Write
$\zeta(w)=\sum_{k=1}^\infty k^{-w}$ and
$\operatorname{Li}_w(z)=\sum_{k=1}^\infty z^k/k^w$.  On
$\Omega=\{\partial,1,2,\ldots\}$, for $w>2$, define
\begin{align*}
 P_w\{k\}&=\frac{k^{-w}}{\zeta(w)},& P_w\{\partial\}&=0,\\
 Q_w\{k\}&=P_w\{k\}e^{-k},&
 Q_w\{\partial\}&=1-\frac{\operatorname{Li}_w(e^{-1})}{\zeta(w)}.
\end{align*}
Then $Q_w$ is a probability measure, $P_w\ll Q_w$, and the privacy loss
$L_w$ of $(P_w,Q_w)$ satisfies
\[
 \EE_{P_w}L_w=\frac{\zeta(w-1)}{\zeta(w)},\qquad
 \delta_{P_w\|Q_w}(1)
 =\frac{1}{\zeta(w)}\sum_{k=2}^\infty\frac{1-e^{1-k}}{k^w}.
\]
For $w=3$ and $w=4$, respectively,
\begin{align*}
 \EE_{P_3}L_3&=1.36843\ldots,
 &\delta_{P_3\|Q_3}(1)&=0.124864\ldots,\\
 \EE_{P_4}L_4&=1.11063\ldots,
 &\delta_{P_4\|Q_4}(1)&=0.053061\ldots.
\end{align*}
Both mean privacy losses are finite, and the forward curves
$\delta_{P_3\|Q_3}$ and $\delta_{P_4\|Q_4}$ differ. Nevertheless, $D_\alpha(P_w\|Q_w)=+\infty$ for every
$\alpha>1$ and either value of $w$. Thus finite mean privacy loss is
insufficient for recovery from the RDP curve.
\end{example}

\textbf{Both one-sided curves are needed.}
\phantomsection\label{subsec:asymmetric-noise}
Example~\ref{ex:one-side-counterexample} adds one of two integer noise
distributions to a count of sensitivity $1$. The two distributions have full
support, likelihood ratios at most $4$, the same right tail, and different left
tails; call the resulting pairs $(P_A,Q_A)$ and $(P_B,Q_B)$. For $\eps\geq0$,
the forward curve $\delta_{P\|Q}(\eps)$ involves only outputs where $P$ exceeds
$Q$, here the shared right tail, so both pairs have
$\delta_{P\|Q}(\eps)=\tfrac{2}{11}(4-e^\eps)_+$. The reverse curve involves the
left tail and differs: $\delta_{Q_A\|P_A}(\log3)=3/22$ but
$\delta_{Q_B\|P_B}(\log3)=0$. Because the reverse curve lies below the forward
curve for both pairs, the usual curve $\max\{\delta_{P\|Q},\delta_{Q\|P}\}$ is
the same for both, and the two mechanisms would be reported as equally
private. Yet their one-sided curves differ, so by
Theorem~\ref{thm:equivalence} their trade-off functions, PLDs, and RDP curves
differ too; indeed $\exp(D_2(P_A\|Q_A))=265/88$, whereas
$\exp(D_2(P_B\|Q_B))=3$. The equivalence in Theorem~\ref{thm:equivalence} is
therefore with the \emph{pair} of one-sided curves; their maximum, the usual
curve, carries less information.

\begin{example}[The same two-sided curve with different one-sided curves]
\label{ex:one-side-counterexample}
Let $c=6/11$. The two noise distributions described above have mass functions
\[
 q_A(k)=q_B(k)=c4^{-k}\quad(k\geq0),
\]
and, for $j\geq1$,
\[
 q_A(-j)=c2^{-(j+1)},\qquad q_B(-j)=c3^{-j}.
\]
For $i\in\{A,B\}$, let $Q_i=\operatorname{Law}(Z_i)$ and
$P_i=\operatorname{Law}(Z_i+1)$ when $Z_i$ has mass function $q_i$.
For $\eps\geq0$, their one-sided curves are
\begin{align*}
 \delta_{P_i\|Q_i}(\eps)
 &=\frac{4c}{3}\left(1-\frac{e^\eps}{4}\right)_+,\\
 \delta_{Q_A\|P_A}(\eps)
 &=c\left(1-\frac{e^\eps}{4}\right)_+
   +\frac c2\left(1-\frac{e^\eps}{2}\right)_+,\\
 \delta_{Q_B\|P_B}(\eps)
 &=\frac{3c}{2}\left(1-\frac{e^\eps}{3}\right)_+.
\end{align*}
The common forward curve dominates both reverse curves for every
$\eps\geq0$. Hence the two-sided curves, and the strongest symmetric
$f$-DP guarantees obtained from them, are the same for the two pairs.
\end{example}

\textbf{When a single curve suffices.}
Taking the maximum loses no information when the two one-sided curves coincide. This
happens for additive noise symmetric about zero, such as Gaussian and Laplace
noise: if $P$ and $Q$ are the distributions of $s+Z$ and $Z$ with
$Z\overset{d}{=}-Z$, the reflection $x\mapsto s-x$ exchanges $P$ and $Q$, so
$\delta_{P\|Q}=\delta_{Q\|P}$, $T_{P,Q}=T_{Q,P}$, and
$D_\alpha(P\|Q)=D_\alpha(Q\|P)$. The usual two-sided curve then carries all the
information (Corollary~\ref{cor:symmetric-additive} in
Appendix~\ref{app:symmetric-additive}).

\textbf{Scope of the equivalence.}
First, Theorem~\ref{thm:equivalence} concerns information, not computation:
converting one notion into another may require the entire curve and need not
have a closed form; practical conversion and composition methods are developed
in \citet{Koskela2020computing,koskela2021tight,gopi2021numerical,doroshenko2022connect,ghazi2022faster}.
Second, it concerns one fixed pair $(P,Q)$, not a whole mechanism. A
mechanism-level RDP guarantee takes, at each order $\alpha$, the maximum over
all neighboring pairs; the maximizing pair may change with $\alpha$, so the
resulting curve need not be the RDP curve of any single pair, and a conversion
from it must hold for every pair consistent with it
\citep{zhu2022optimal,asoodeh2021three,balle2020hypothesis}.
Proofs of the equivalence theorem and the two examples are given in
Appendix~\ref{app:proof-information-equivalence}.
By contrast, replacing the exact RDP curve by the zCDP parameter $\rho$ loses
information even for a fixed pair; Section~\ref{sec:defect} quantifies this
loss.

\section{The Information Gap}\label{sec:defect}

By Theorem~\ref{thm:equivalence}, the exact RDP curve, the one-sided curves,
the trade-off functions, and the PLD carry the same information. zCDP keeps
only $\rho$, and many curves share the same $\rho$: for a query change $s$,
Gaussian noise $N(0,1)$ and the mixture $\tfrac12N(-m,1)+\tfrac12N(m,1)$,
$m>0$, both have $\rho=s^2/2$ but different RDP curves
(Proposition~\ref{prop:common-mechanisms}). Throughout this section and
Section~\ref{sec:practical-implications}, we consider mutually absolutely
continuous pairs $(P,Q)$ with finite $\rho$.

The parameter $\rho$ gives the bound $D_\alpha(P\|Q)\leq\rho\alpha$ at every
order $\alpha>1$; we call the line $\alpha\mapsto\rho\alpha$ the \emph{zCDP
envelope}. What zCDP discards is how far the exact curve lies below this line,
which we measure by the \emph{information gap} defined as
\begin{equation}
\label{eq:gap-definition}
 \Ga(t)
 :=\rho t(t+1)-K(t)
 =t\left\{(t+1)\rho-D_{t+1}(P\|Q)\right\},
 \quad \hbox{for any }t>0.
\end{equation}
Here, $(t+1)\rho$ is the zCDP upper bound and
$D_{t+1}(P\|Q)$ is the exact RDP value at the same order $\alpha=t+1$,
so $\Ga(t)$ is their difference multiplied by $t$.
The gap has three useful properties (illustrated in Figure~\ref{fig:gap-metric}).
\emph{(1)} $\Ga(t)/t$ is the vertical distance at order $1+t$ between the
envelope and the exact RDP curve, so $\Ga\geq0$, with $\Ga\equiv0$ exactly when
the envelope reproduces the curve; the factor $t$ puts $\Ga$ on the scale of
$K$, which adds up under independent composition. \emph{(2)} For Gaussian
noise, $L\sim N(\rho,2\rho)$ under $P$, so $K(t)=\rho t(t+1)$ and $\Ga\equiv0$:
the Gaussian RDP curve \emph{is} its envelope. \emph{(3)} The gap controls the reported $\eps$ after order optimization
(Proposition~\ref{prop:optimized-gap}).

\begin{figure}[!htbp]
\centering
\includegraphics[width=.7\linewidth]{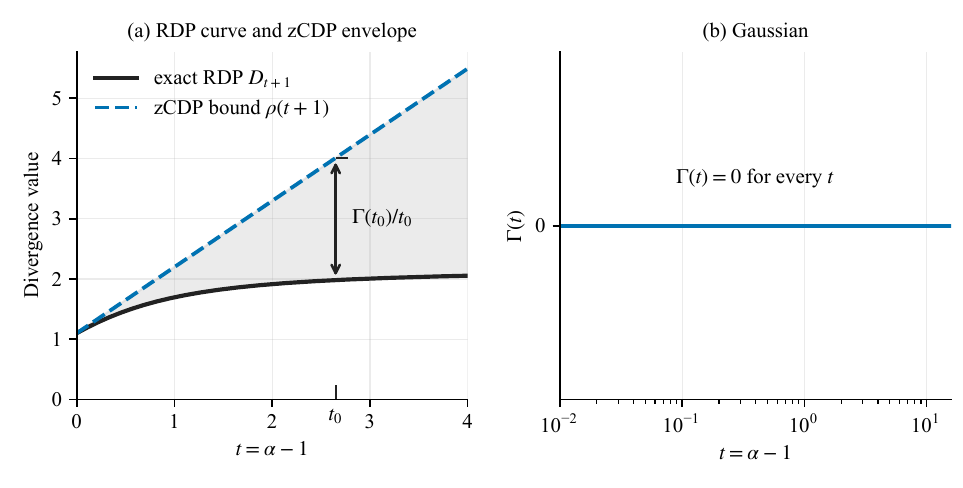}
\caption{The information gap. Left: $\Ga(t)/t$ is the vertical distance
between the exact RDP curve and its zCDP envelope; here $P=(9,3,3,1)/16$ and
$Q=(1,3,3,9)/16$ on four points, $\rho=\log3$. Right: for
Gaussian noise ($\rho=1/2$), $\Ga(t)=0$ for every $t>0$.}
\label{fig:gap-metric}
\end{figure}

\subsection{Local Expansion}
\label{subsec:local-expansion-theorem}

How large is the gap when the noise is large compared with the change one
record can cause? For the additive mechanism $\mathcal M(D)=\theta(D)+Z$, where
$Z$ has density or probability mass function $p$, let
$s:=\theta(D)-\theta(D')$ be the query change between neighboring datasets, so
$|s|$ is at most the global sensitivity of $\theta$. Subtracting $\theta(D')$
from both outputs gives the pair $(P_s,Q)$ with densities
\begin{equation}\label{eq:location-pair-convention}
 p_s(x)=p(x-s),\qquad q(x)=p(x).
\end{equation}
More generally, we consider any family $(P_s,Q_s)$ with $P_0=Q_0$, and study
small nonzero $s$ of either sign. For each $s$, let $L_s$, $K$, $\rho(s)$, and
$\Ga$ be the privacy loss, CGF, optimal zCDP parameter, and gap of
$(P_s,Q_s)$, and let $\kappa_j(L_s)$ be the $j$th cumulant of $L_s$ under
$P_s$ ($\kappa_1$ is the mean, $\kappa_2$ the variance; $\kappa_3$ and
$\kappa_4$ measure skewness and tail weight). Since $K$ and the supremum
defining $\rho$ rarely have closed forms, the next theorem reduces $\Ga$ for
small $s$ to the mean and the third and fourth cumulants of the privacy loss.

\begin{samepage}
\begin{theorem}[Local expansion of the zCDP--RDP gap]
\label{thm:sensitivity-gap}
Fix $T>0$. Suppose that, for all sufficiently small $s\neq0$, $P_s$ and $Q_s$
are mutually absolutely continuous, $\rho(s)<\infty$, and the CGF
$K(u)=\log\EE_{P_s}[e^{uL_s}]$ is five times continuously differentiable on
$[-1,T]$ with fifth derivative satisfying
$\sup_{-1\leq u\leq T}|K^{(5)}(u)|\leq M_T|s|^5$, where $M_T$ does not depend
on $s$. Then, for all such $s$ and all $0<t\leq T$, we have
\begin{align}
 \Ga(t)&=t(t+1)\bigl(\rho(s)-\EE_{P_s}L_s\bigr)
 -\frac{t^2(t+1)}6\kappa_3(L_s)
 +\frac{t^2(1-t^2)}{24}\kappa_4(L_s)+\mathcal R_s(t),
 \label{eq:unified-profile-cumulant-expansion}\\
 |\mathcal R_s(t)|&\leq\frac{M_T}{120}(t^2+t^5)|s|^5.
 \label{eq:one-pair-fifth-order-bound}
\end{align}
\end{theorem}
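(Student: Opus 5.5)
The approach is a Taylor expansion of $K$ about $u=0$, using two exact constraints that tie the CGF at $u=0$ and $u=-1$. Mutual absolute continuity gives $K(0)=0$ and $K(-1)=\log\EE_{P_s}[e^{-L_s}]=\log Q_s(\Omega)=0$. Writing $\kappa_j=K^{(j)}(0)$, Taylor's theorem with Lagrange remainder gives, for $u\in[-1,T]$,
\[
K(u)=\kappa_1u+\frac{\kappa_2}{2}u^2+\frac{\kappa_3}{6}u^3+\frac{\kappa_4}{24}u^4+\frac{K^{(5)}(\xi_u)}{120}u^5 .
\]
Evaluating at $u=-1$ and using $K(-1)=0$ yields one linear relation among the cumulants:
\[
\kappa_1-\frac{\kappa_2}{2}+\frac{\kappa_3}{6}-\frac{\kappa_4}{24}+\frac{K^{(5)}(\xi_{-1})}{120}=0 .
\]
Solving this for $\kappa_2$ eliminates the variance, which is why the statement involves only $\kappa_1,\kappa_3,\kappa_4$:
\[
\frac{\kappa_2}{2}=\kappa_1+\frac{\kappa_3}{6}-\frac{\kappa_4}{24}+\frac{K^{(5)}(\xi_{-1})}{120}.
\]

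Next, substitute this into the expansion of $K(t)$ for $0<t\le T$ and collect terms. The linear and quadratic parts combine to $\kappa_1t+\kappa_1t^2=\kappa_1t(t+1)$. The $\kappa_3$ terms give $\frac{\kappa_3}{6}(t^3+t^2)=\frac{\kappa_3}{6}t^2(t+1)$. The $\kappa_4$ terms give $\frac{\kappa_4}{24}(t^4-t^2)=-\frac{\kappa_4}{24}t^2(1-t^2)$. Subtracting from $\rho t(t+1)$ in \eqref{eq:gap-definition} gives exactly \eqref{eq:unified-profile-cumulant-expansion}, with
\[
\mathcal R_s(t)=-\frac{1}{120}\bigl(K^{(5)}(\xi_t)t^5+K^{(5)}(\xi_{-1})t^2\bigr).
\]
The hypothesis $\sup_{[-1,T]}|K^{(5)}|\le M_T|s|^5$ then gives \eqref{eq:one-pair-fifth-order-bound}.

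Some bookkeeping remains. $K$ is finite on $[-1,T]$ because the theorem assumes it is $C^5$ there, so the Taylor expansions are legitimate on both sides of $0$. The derivatives $K^{(j)}(0)$ are the cumulants of $L_s$ under $P_s$; since $K$ is finite on a neighborhood of $0$, the moment generating function is analytic there and differentiation under the expectation is justified. The intermediate points $\xi_t\in(0,t)$ and $\xi_{-1}\in(-1,0)$ both lie in $[-1,T]$, so the uniform bound applies to each.

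The main obstacle is conceptual rather than computational: one must recognize that the constraint $K(-1)=0$ should be used to eliminate $\kappa_2$. Once that is seen, the result is a one-line algebraic identity, and no estimate of $\rho(s)$ itself is needed because $\rho$ enters the gap only linearly.
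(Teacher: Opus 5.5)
Your proposal is correct and essentially identical to the paper's proof (given via Lemma~\ref{lem:plrv-quartic}). Both Taylor-expand $K$ to fourth order about $0$, use $K(-1)=0$ to eliminate $\kappa_2$, and bound the resulting remainder $t^2\Delta_s(-1)-\Delta_s(t)$ by $\frac{M_T}{120}(t^2+t^5)|s|^5$; your Lagrange-form $\mathcal R_s(t)$ is exactly that remainder.
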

\end{samepage}

\textbf{Remark.}
The first term measures how far $\rho(s)$ exceeds the mean privacy loss
$\EE_{P_s}L_s$, which is the KL divergence of $P_s$ from $Q_s$; the other two
involve $\kappa_3$ and $\kappa_4$. For Gaussian noise all three vanish, since
$\rho(s)=\EE_{P_s}L_s$ and $\kappa_3=\kappa_4=0$. In general the three terms
are of order $s^2$, $s^3$, and $s^4$; Section~\ref{subsec:mechanism-verdicts}
finds the leading one for common mechanisms.

\subsection{From the Gap to Privacy Bounds}
\label{subsec:optimized-certificates}
We next relate the information gap to the privacy bounds obtained by
applying the same RDP-to-$(\eps,\delta)$ conversion to the exact RDP
curve and its zCDP envelope.
These comparisons keep the output pair,
including its noise scale, unchanged and do not require a small-shift
assumption. For a fixed pair $(P,Q)$ satisfying the standing assumptions of this
section, let $c_t:=t^t/(1+t)^{1+t}$. The conversion of
\citet{balle2020hypothesis} implies
\begin{equation}\label{eq:rdp-delta-bound}
 \delta_{P\|Q}(\eps)\leq c_t\exp\{K(t)-t\eps\},\qquad t>0.
\end{equation}
Its derivation is given in
Appendix~\ref{subsec:proof-pairwise-gap}.

\paragraph{The $\eps$ Bound at a Fixed $\delta$.}
Fix $\delta\in(0,1)$ and define
\begin{align*}
 F_{\rm R}(t)
 &:=\frac{K(t)+\log c_t-\log\delta}{t},
 \\
 F_{\rm Z}(t)
 &:=\rho(t+1)+\frac{\log c_t-\log\delta}{t}
 =F_{\rm R}(t)+\frac{\Ga(t)}{t}.
\end{align*}
Minimizing these objectives over $t>0$ gives the $\eps$ upper bounds
obtained from the exact RDP curve and its zCDP envelope, respectively.
The following proposition quantifies the difference after order
optimization.

\begin{proposition}[Compression loss after order optimization]
\label{prop:optimized-gap}
Suppose that
\[
 \eps_{\rm R}:=\inf_{t>0}F_{\rm R}(t),
 \qquad
 \eps_{\rm Z}:=\inf_{t>0}F_{\rm Z}(t)
\]
are attained at $t_{\rm R},t_{\rm Z}\in(0,\infty)$, respectively.
Then we have
\begin{equation}
 \frac{\Ga(t_{\rm Z})}{t_{\rm Z}}
 \leq \eps_{\rm Z}-\eps_{\rm R}
 \leq \frac{\Ga(t_{\rm R})}{t_{\rm R}}.
 \label{eq:optimized-gap-sandwich}
\end{equation}
Consequently, if an interval $I\subset(0,\infty)$ contains
$t_{\rm R}$ and
$\sup_{t\in I}\Ga(t)/t\leq\eta$, then
$0\leq\eps_{\rm Z}-\eps_{\rm R}\leq\eta$.
\end{proposition}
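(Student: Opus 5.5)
The proof is a standard sandwich argument for the difference of two infima, built on the pointwise identity $F_{\rm Z}(t)=F_{\rm R}(t)+\Ga(t)/t$ stated just before the proposition. This identity follows from the definitions: $K(t)=\rho t(t+1)-\Ga(t)$ by \eqref{eq:gap-definition}, so $K(t)/t=\rho(t+1)-\Ga(t)/t$. Note that $K(t)<\infty$ for all $t>0$, since $\rho<\infty$ and $K(t)=tD_{t+1}(P\|Q)\le \rho t(t+1)$. Hence $\Ga(t)$ is finite and nonnegative, and both objectives are finite, real-valued functions on $(0,\infty)$.

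\textbf{Upper bound.} Since $\eps_{\rm Z}$ is an infimum over $t>0$, evaluating at the minimizer $t_{\rm R}$ of $F_{\rm R}$ gives
\[
\eps_{\rm Z}\le F_{\rm Z}(t_{\rm R})=F_{\rm R}(t_{\rm R})+\frac{\Ga(t_{\rm R})}{t_{\rm R}}=\eps_{\rm R}+\frac{\Ga(t_{\rm R})}{t_{\rm R}}.
\]

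\textbf{Lower bound.} Symmetrically, evaluating $\eps_{\rm R}$ at the minimizer $t_{\rm Z}$ of $F_{\rm Z}$ gives
\[
\eps_{\rm R}\le F_{\rm R}(t_{\rm Z})=F_{\rm Z}(t_{\rm Z})-\frac{\Ga(t_{\rm Z})}{t_{\rm Z}}=\eps_{\rm Z}-\frac{\Ga(t_{\rm Z})}{t_{\rm Z}}.
\]
Rearranging the two displays yields \eqref{eq:optimized-gap-sandwich}.

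\textbf{Consequence.} Since $\Ga\ge0$, the lower bound gives $\eps_{\rm Z}-\eps_{\rm R}\ge\Ga(t_{\rm Z})/t_{\rm Z}\ge0$. If $t_{\rm R}\in I$, the upper bound gives $\eps_{\rm Z}-\eps_{\rm R}\le \Ga(t_{\rm R})/t_{\rm R}\le\sup_{t\in I}\Ga(t)/t\le\eta$.

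There is no real obstacle here. The only point needing care is finiteness: the subtractions above must not involve $\infty-\infty$. This is guaranteed by the standing assumption $\rho<\infty$, which makes $K(t)$ and $\Ga(t)$ finite, and by the attainment hypotheses, which make $\eps_{\rm R}$ and $\eps_{\rm Z}$ finite real numbers.
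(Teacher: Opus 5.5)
Your proof is correct and follows the paper's argument: you use the pointwise identity $F_{\rm Z}(t)=F_{\rm R}(t)+\Ga(t)/t$, evaluate each infimum at the other objective's minimizer to get the two sides of the sandwich, and then use $\Ga\geq0$ and $t_{\rm R}\in I$ for the consequence. Your remark that $\rho<\infty$ keeps $K$ and $\Ga$ finite, so no $\infty-\infty$ arises, is a check the paper leaves implicit.
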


The relevant quantity is therefore $\Ga(t)/t$ near the optimizing
orders, rather than the gap at an arbitrary reference order.
A large value of $\Ga(t_{\rm Z})/t_{\rm Z}$ certifies a nonnegligible
compression cost, whereas a uniform upper bound over an interval
containing $t_{\rm R}$ certifies that the cost is small.
Locating such an interval and bounding the gap without computing the
full RDP curve requires mechanism-specific analysis.

These bounds compare privacy guarantees at a fixed noise scale.
Turning them into bounds on the required noise variance also requires
control of how the privacy guarantee changes with the noise scale.

\paragraph{Using the Local Expansion.}
Theorem~\ref{thm:sensitivity-gap} makes the upper bound explicit when
the optimizing order lies in the range of the local expansion.
Under its assumptions and those of Proposition~\ref{prop:optimized-gap},
if $0<t_{\rm R}\leq T$, then
\begin{equation}\label{eq:local-calibration-upper-bound}
\begin{aligned}
 0\leq\eps_{\rm Z}-\eps_{\rm R}
 \leq{}&(t_{\rm R}+1)\bigl(\rho(s)-\EE_{P_s}L_s\bigr)
 -\frac{t_{\rm R}(t_{\rm R}+1)}6\kappa_3(L_s)\\
 &+\frac{t_{\rm R}(1-t_{\rm R}^2)}{24}\kappa_4(L_s)
 +\frac{M_T}{120}(t_{\rm R}+t_{\rm R}^4)|s|^5.
\end{aligned}
\end{equation}
This follows by substituting
\eqref{eq:unified-profile-cumulant-expansion} and
\eqref{eq:one-pair-fifth-order-bound} into
\eqref{eq:optimized-gap-sandwich}. It separates the contribution of the
mean privacy loss from the third- and fourth-cumulant terms and the
remainder. The condition $t_{\rm R}\leq T$ is needed to apply the remainder
bound at the optimizing order.

\paragraph{The $\delta$ Bound at a Fixed $\eps$.}
If $\eps$ is fixed instead, optimizing over $t$ gives a $\delta$ upper
bound. For any valid zCDP upper bound $\widetilde\rho\geq\rho(P,Q)$, define
\begin{equation}\label{eq:zcdp-delta-bound}
 \delta_{\rm zCDP}(\eps;\widetilde\rho)
 :=\inf_{t>0}c_t\exp\{\widetilde\rho t(t+1)-t\eps\}.
\end{equation}
For $\widetilde\rho>0$, the minimizer $t_\eps(\widetilde\rho)$ is unique
and satisfies
\begin{equation}\label{eq:chernoff-stationarity}
 \log\frac{t_\eps(\widetilde\rho)}{1+t_\eps(\widetilde\rho)}
 +\widetilde\rho\{2t_\eps(\widetilde\rho)+1\}-\eps=0.
\end{equation}
When $\rho>0$, taking $\widetilde\rho=\rho$ gives
\begin{equation}\label{eq:pairwise-transfer}
 \delta_{P\|Q}(\eps)
 \leq e^{-\Ga(t_\eps)}\delta_{\rm zCDP}(\eps;\rho),
 \qquad t_\eps=t_\eps(\rho).
\end{equation}
Thus evaluating the exact RDP curve at the zCDP optimizing order
reduces this $\delta$ upper bound by a factor $e^{-\Ga(t_\eps)}$.
Appendix~\ref{subsec:proof-pairwise-gap} proves these statements.

\subsection{Information Loss under Independent Composition}
\label{subsec:composition-main}

We next show how the information gap accumulates under independent
composition. For component pairs $(P_i,Q_i)$, let $K_i$ be their
privacy-loss CGFs, $\rho_i<\infty$ their optimal zCDP parameters, and
$\Ga_i(t):=\rho_i t(t+1)-K_i(t)$ their information gaps. Define
\[
 \bK(t):=\sum_{i=1}^n K_i(t),
 \qquad
 \rho_{\mathrm{sum}}:=\sum_{i=1}^n\rho_i.
\]
The gap relative to the standard composed zCDP envelope is
\[
 \bGa(t):=\rho_{\mathrm{sum}}t(t+1)-\bK(t).
\]

\begin{theorem}[Independent composition]
\label{thm:composition-extension}
Let $(P_i,Q_i)$, $i=1,\ldots,n$, be independent component pairs with
optimal zCDP parameters $\rho_i<\infty$. Set
$P^{(n)}:=\bigotimes_{i=1}^n P_i$ and
$Q^{(n)}:=\bigotimes_{i=1}^n Q_i$.
Then, for every $t>0$, it holds that
$tD_{t+1}(P^{(n)}\|Q^{(n)})=\bK(t)$ and
\begin{equation}\label{eq:gap-additivity}
 \bGa(t)=\sum_{i=1}^n\Ga_i(t).
\end{equation}
In particular, for $n$ independent copies of a fixed pair $(P,Q)$
with optimal zCDP parameter $\rho$, the product pair has optimal
zCDP parameter $n\rho$ and information gap $n\Ga(t)$.
\end{theorem}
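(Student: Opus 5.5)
The plan is to work directly with likelihood ratios and reduce everything to the factorization of the product density. Because each pair $(P_i,Q_i)$ is mutually absolutely continuous (the standing assumption of this section), the product pair $(P^{(n)},Q^{(n)})$ is also mutually absolutely continuous. Its Radon--Nikodym derivative factorizes:
\[
 \frac{dP^{(n)}}{dQ^{(n)}}(x_1,\ldots,x_n)=\prod_{i=1}^n\frac{dP_i}{dQ_i}(x_i).
\]
The product privacy loss is therefore $L^{(n)}=\sum_{i=1}^n L_i(X_i)$, where $X_i\sim P_i$ are independent under $P^{(n)}$. Independence then gives, for every $t>0$,
\[
 \EE_{P^{(n)}}\bigl[e^{tL^{(n)}}\bigr]=\prod_{i=1}^n\EE_{P_i}\bigl[e^{tL_i}\bigr].
\]
This is an identity in $[0,\infty]$, valid by Tonelli since the integrands are nonnegative. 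Taking logarithms yields $tD_{t+1}(P^{(n)}\|Q^{(n)})=\sum_i K_i(t)=\bK(t)$, using $D_{1+t}=K(t)/t$.

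The infinite case needs a remark: each factor is positive, so no $0\cdot\infty$ issue arises. Moreover, $K_i(t)\le\rho_i t(t+1)<\infty$ by the definition of $\rho_i$, so every factor is in fact finite.

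Additivity of the gap is then pure algebra. Every term is finite, so
\[
 \bGa(t)=\sum_i\rho_i t(t+1)-\sum_i K_i(t)=\sum_i\bigl(\rho_i t(t+1)-K_i(t)\bigr)=\sum_i\Ga_i(t).
\]

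For the i.i.d.\ statement, the previous step gives $D_\alpha(P^{\otimes n}\|Q^{\otimes n})=nD_\alpha(P\|Q)$. Hence
\[
 \rho(P^{\otimes n},Q^{\otimes n})=\sup_{\alpha>1}\frac{nD_\alpha(P\|Q)}{\alpha}=n\rho,
\]
since a positive constant factor pulls out of the supremum. The gap of the product pair is then
\[
 n\rho\, t(t+1)-nK(t)=n\Ga(t),
\]
which coincides with $\bGa$ because $\rho_{\mathrm{sum}}=n\rho$.

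I do not expect a real obstacle here. The only step requiring care is justifying the factorization of the likelihood ratio and the resulting privacy loss for general (non-density) measures. Mutual absolute continuity makes this standard via product Radon--Nikodym derivatives. A side point worth noting: for non-identical components, $\rho_{\mathrm{sum}}$ is only an upper bound on the product's optimal $\rho$. That is why $\bGa$ is defined relative to the composed envelope rather than the product's own optimal parameter.
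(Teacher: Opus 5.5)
Your proposal is correct and follows the paper's proof. The product likelihood ratio factorizes, so the CGFs add by independence; subtracting from $\rho_{\mathrm{sum}}t(t+1)$ gives additivity; and in the i.i.d.\ case the factor $n$ comes out of the supremum. Your additional care with Tonelli and with finiteness of each $K_i(t)$ (via $K_i(t)\le\rho_i t(t+1)$) supplies details the paper leaves implicit.
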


Thus, at a fixed order and with the component pair unchanged,
the gap grows linearly with the number of independent copies.
A component with $P_i=Q_i$ contributes zero.

\textbf{Standard versus optimal composition.}
For nonidentical components, the optimal zCDP parameter of the
product pair may be smaller than the sum of the component parameters:
\[
 \rho_\star:=
 \sup_{t>0}\frac{\bK(t)}{t(t+1)}
 \leq \rho_{\mathrm{sum}}.
\]
Consequently, we have
\begin{equation}\label{eq:slack-decomposition-main}
 \bGa(t)=(\rho_{\mathrm{sum}}-\rho_\star)\,t(t+1)
 +\bigl\{\rho_\star t(t+1)-\bK(t)\bigr\}.
\end{equation}
Both terms are nonnegative. The first records the difference between
adding the component parameters and optimizing the parameter of the
product pair; it vanishes for identical components. The second is the
gap between the product pair's exact RDP curve and its optimal zCDP
envelope.

Under the assumptions of Theorem~\ref{thm:sensitivity-gap} for each
component family, the local expansions also add componentwise.

\section{Information Gaps for Common Mechanisms}
\label{sec:practical-implications}

\subsection{Continuous Additive Noise Mechanisms}
\label{subsec:mechanism-verdicts}

We evaluate $\Ga$ for three continuous additive noise distributions,
Gaussian, Gaussian-mixture, and Laplace, in the setting
of~\eqref{eq:location-pair-convention}. All three are symmetric about zero, so
one direction suffices (Section~\ref{sec:information-equivalence}). By
Theorem~\ref{thm:sensitivity-gap}, the leading power of $s$ in $\Ga(t)$
measures the loss: $\Ga\equiv0$ means none, and the lower the power, the
larger the loss for small $s$.

\begin{proposition}[Gaps for common additive noise]
\label{prop:common-mechanisms}
Let $(P_s,Q)$ be as in~\eqref{eq:location-pair-convention} and fix $T>0$. The
remainders below hold as $s\to0$, uniformly over $0<t\leq T$.
\begin{enumerate}[label=(\alph*),leftmargin=2.2em,itemsep=1pt,topsep=2pt]
\item \emph{Gaussian.} If $Z\sim N(0,\sigma^2)$, then
$\rho(s)=s^2/(2\sigma^2)$ and $\Ga(t)=0$ for every $s$ and $t>0$.
\item \emph{Gaussian mixture.} If $Z\sim\tfrac12N(-m,1)+\tfrac12N(m,1)$ with
$m\geq0$, then we have $\rho(s)=s^2/2$ for every $s$, and
\begin{equation}\label{eq:gaussian-mixture-leading}
 \Ga(t)
 =\frac{m^2}{2}\EE[\operatorname{sech}^2(mZ)]\,t(t+1)s^2+O(s^4).
\end{equation}
The coefficient is positive when $m>0$; for $m=0$ this is case~(a) with
$\sigma=1$.
\item \emph{Laplace.} If $Z$ has density $\tfrac12e^{-|x|}$, then we have
$\rho(s)=|s|+e^{-|s|}-1$ for every $s$ \citep{harrison2025exact}, and
\[
 \Ga(t)=\frac{[t(t+1)]^2}{12}s^4+O(|s|^5).
\]
\end{enumerate}

\end{proposition}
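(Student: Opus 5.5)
Each of the three parts will be handled by computing, or bounding, the privacy-loss CGF $K(t)$ and the optimal parameter $\rho(s)$. The leading behaviour of $\Ga(t)$ will then be read off either directly or through Theorem~\ref{thm:sensitivity-gap}.

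\emph{Gaussian.} The privacy loss is $L=\log\{p(X-s)/p(X)\}$ with $X\sim P_s$. This is affine in $X$, so $L\sim N(s^2/(2\sigma^2),s^2/\sigma^2)$. Hence $K(t)=\frac{s^2}{2\sigma^2}t(t+1)$ exactly. Therefore $D_\alpha/\alpha=\frac{s^2}{2\sigma^2}$ for every $\alpha$, which gives $\rho(s)=s^2/(2\sigma^2)$ and $\Ga\equiv0$.

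\emph{Gaussian mixture.} I will first prove $\rho(s)=s^2/2$.

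For the upper bound, write the mixture as the convolution $N(0,1)*\mu$ with $\mu=\frac12(\delta_{-m}+\delta_m)$. Joint convexity of $e^{tD_{1+t}}$, coupling each component $N(\pm m+s,1)$ with $N(\pm m,1)$, gives $D_\alpha(P_s\|Q)\le \alpha s^2/2$.

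For the lower bound, $K(t)/(t(t+1))\to \mathrm{Var}(L)/2=I(s)\cdot$ as $t\to0$ is not obviously $s^2/2$, so I will instead use the limit $t\to\infty$. Then $D_\alpha/\alpha\to\frac12\operatorname{ess\,sup}$-type growth of order $s^2/2$. Concretely, the log-likelihood ratio is $sx-s^2/2+\log\cosh(m(x-s))-\log\cosh(mx)$. This is $sx+O(1)$ as $x\to\pm\infty$, and Gaussian tails then give $K(t)=\frac{s^2}{2}t^2+O(t)$. So $\sup_t K(t)/(t(t+1))\ge s^2/2$.

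For the expansion, I will apply Theorem~\ref{thm:sensitivity-gap}. The fifth-derivative hypothesis follows by writing $L_s$ as $s$ times a smooth function with Gaussian-integrable derivatives and expanding in $s$. Each cumulant $\kappa_j(L_s)=O(s^j)$ for $j\ge 3$, so only the first term contributes at order $s^2$.

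It remains to compute $\EE_{P_s}L_s=\mathrm{KL}(P_s\|Q)=\frac{s^2}{2}\mathcal I+O(s^4)$. Here $\mathcal I$ is the Fisher information of the location family; the odd order vanishes by symmetry of $p$. Differentiating $\log p(x)=-x^2/2+\log\cosh(mx)+c$ twice gives $\mathcal I=1-m^2\EE[\operatorname{sech}^2(mX)]$, with $X$ the mixture variable. By the symmetry of $\operatorname{sech}^2$ and the structure $X=\pm m+Z$, I will reconcile this with the stated $\EE[\operatorname{sech}^2(mZ)]$. This is a routine Gaussian identity check, and I expect it to be where care is needed.

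Then $\rho-\EE L=\frac{m^2}{2}\EE[\operatorname{sech}^2]s^2+O(s^4)$, which gives \eqref{eq:gaussian-mixture-leading}. The coefficient is positive for $m>0$.

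\emph{Laplace.} I will take $\rho(s)$ from \citet{harrison2025exact}. The CGF is explicit: integrating $e^{tL}$ piecewise over $(-\infty,0)$, $(0,s)$ and $(s,\infty)$ gives a closed form for $e^{K(t)}$ of the shape $\frac{(1+t)e^{ts}-te^{-(1+t)s}}{1+2t}$-type expressions. I will Taylor expand $K(t)$ in $s$ to order $s^4$ with an explicit remainder uniform on $[0,T]$.

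Separately, $\rho(s)=s^2/2-s^3/6+s^4/24+O(s^5)$. For the cumulant route, $L$ is bounded by $|s|$, so $\kappa_j=O(s^j)$ and the $K^{(5)}$ bound holds. This lets me alternatively use Theorem~\ref{thm:sensitivity-gap}.

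The expected obstacle here is showing cancellation through order $s^3$. The $s^2$ term $t(t+1)(\rho-\EE L)$ must combine with the $\kappa_3$ term so that only $[t(t+1)]^2s^4/12$ survives. I would first verify this by direct expansion of the closed-form $K$, since that is mechanical, and use the cumulant form only as a consistency check.
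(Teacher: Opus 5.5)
\textbf{Gap in part (b): the remainder order.} Your argument for part (b) establishes the leading term but not the stated $O(s^4)$ remainder. You only use $\kappa_j(L_s)=O(|s|^j)$. Then, in Theorem~\ref{thm:sensitivity-gap}, the term $-\frac{t^2(t+1)}{6}\kappa_3(L_s)$ is controlled only to $O(|s|^3)$. So what you actually prove is $\Ga(t)=\frac{m^2}{2}\EE[\operatorname{sech}^2(mZ)]\,t(t+1)s^2+O(|s|^3)$.

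The missing step is $\kappa_3(L_s)=O(s^4)$. The paper gets it from \eqref{eq:smooth-third-cumulant}, $\kappa_3(L_s)=\kappa_3(S_m)s^3+\frac12\kappa_4(S_m)s^4+O(|s|^5)$, where $\kappa_3(S_m)=0$ because the score $S_m=Z-m\tanh(mZ)$ is symmetric. In your framework, the parity you already use for $\EE_{P_s}L_s$ also works here. Since $f_m$ is even, the reflection $Z\mapsto -Z$ shows that $L_{-s}$ under $P_{-s}$ has the same law as $L_s$ under $P_s$. Hence $\kappa_3(L_s)$ is a smooth even function of $s$, and an even function that is $O(|s|^3)$ has zero cubic coefficient.

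For the $K^{(5)}$ hypothesis, your justification should be made concrete. The needed input is $|L_s|\le|s|(|Z|+m+|s|/2)$, which gives uniformly bounded fifth moments of $L_s/s$ under all tilts $u\in[-1,T]$.

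\textbf{The rest of (a)--(b).} This is correct and essentially the paper's argument. Your joint-convexity coupling is the paper's data-processing upper bound. Your lower bound uses $L=sX-s^2/2+R(X)$ with $|R|\le m|s|$ and the exact mixture MGF, giving $K(t)=s^2t^2/2+O(t)$; this is a clean variant of the paper's restriction to $z\geq0$.

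The ``reconciliation'' you anticipate is a non-issue. In the statement, $Z$ is the mixture variable, so $\mathcal I=1-m^2\EE[\operatorname{sech}^2(mZ)]$ is already the stated constant. Rewriting it through the standard normal component would change its value. Also, as $t\downarrow0$, $K(t)/\{t(t+1)\}\to K'(0)=\EE_{P_s}L_s$, not $\operatorname{Var}(L)/2$; this is harmless because you do not use it.

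\textbf{Part (c): a different, valid route.} The paper differentiates the closed form to get $\rho(s)=K'(0)=\EE_{P_s}L_s$ exactly, $\kappa_3(U_s)=-s+O(s^2)$, and $\kappa_4(U_s)=-2+O(|s|)$. It then substitutes these into Theorem~\ref{thm:sensitivity-gap}. Your joint expansion of $K$ and $\rho$ bypasses the theorem and its fifth-derivative hypothesis, and it works.

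The closed form needs a plus sign: $e^{K(t)}=\{(1+t)e^{t|s|}+te^{-(1+t)|s|}\}/(1+2t)$. With your minus sign, $K$ does not vanish at $s=0$. With $a=t(t+1)$ and the correct form, one gets $K(t)=\frac a2s^2-\frac a6|s|^3+\frac{a-2a^2}{24}s^4+O(|s|^5)$ uniformly on $(0,T]$. Hence $a\rho(s)-K(t)=a^2s^4/12+O(|s|^5)$.

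Your description of the obstacle is also off. There is no cancellation between $t(t+1)(\rho-\EE L)$ and the $\kappa_3$ term: the former is identically zero, and $\kappa_3(L_s)=-s^4+O(|s|^5)$ already. The quartic coefficient $1/12$ comes from the $\kappa_3$ and $\kappa_4$ terms together.
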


\begin{figure}[t]
\centering
\includegraphics[width=\linewidth]{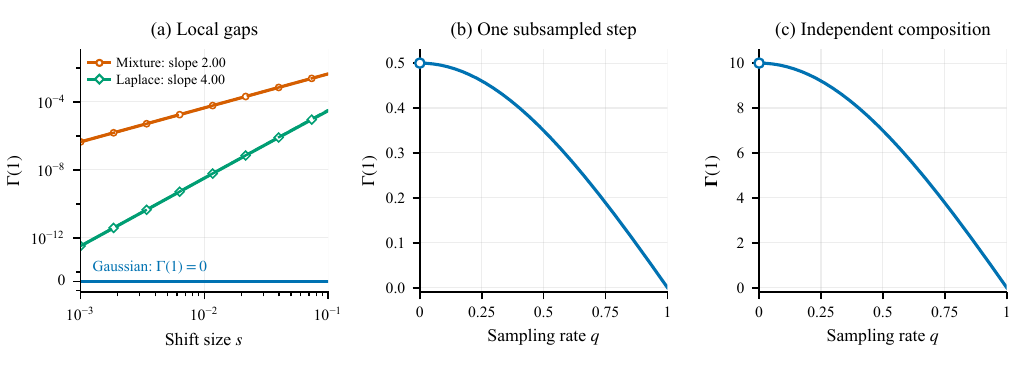}
\caption{Information gaps at $t=1$. (a) Gaussian noise has zero gap; the
Gaussian-mixture ($m=1$) and Laplace gaps are quadratic and quartic in $s$
(solid: numerical; dashed: leading terms of
Proposition~\ref{prop:common-mechanisms}; slopes fitted over $s\leq0.01$).
(b)--(c) Single-step and composed gaps of the subsampled Gaussian decrease in
$q$ ($n=20$, $\rho_{\mathrm{sum}}=5$, $z=\sqrt2$; open circles are limits as
$q\downarrow0$). Appendix~\ref{app:alternative-order-robustness} repeats these
comparisons at $t=1.5$.}
\label{fig:common-mechanism-gaps}
\end{figure}

In short, zCDP loses no information for Gaussian noise, loses at order $s^2$ for the
Gaussian mixture, and only at order $s^4$ for Laplace noise;
Figure~\ref{fig:common-mechanism-gaps}(a) confirms these powers, with fitted
log--log slopes $2.00$ and $4.00$. Discrete noise is considered in
Section~\ref{subsec:discrete-noise}.
By~\eqref{eq:gap-additivity}, a workload of $d$
queries has $d$ times the one-query gap; Section~\ref{subsec:acs-composition}
shows what this costs in noise variance.

\subsection{Poisson Subsampling}
\label{subsec:poisson-subsampling}

At every step, DP-SGD clips each per-example gradient to norm at most $C$,
sums them over a Poisson subsample that includes each record independently
with probability $q$, and adds Gaussian noise $N(0,z^2C^2I)$ with \emph{noise
multiplier} $z$ \citep{abadi2016deep}. Subsampling amplifies privacy.
Finite-order RDP captures this amplification, whereas standard zCDP, whose
guarantee must hold uniformly over all R\'enyi orders, does not retain it for
the sampled Gaussian mechanism \citep{zhu2019poission,bun2018composable}.
Under add-or-remove neighboring datasets, one step reduces to the sampled
Gaussian mechanism \citep{mironov2019sampled}:
\[
 Q=N(0,z^2C^2I),\qquad
 P_q=(1-q)Q+qN(\mathbf v,z^2C^2I),
\]
where $\mathbf v$ is the clipped gradient
of the extra record, with $\|\mathbf v\|=C$ in the worst case.

\begin{proposition}[Poisson subsampling at a fixed zCDP budget]
\label{prop:subsampled-gaussian}
Fix $n\geq1$, $C>0$, and $\rho_{\mathrm{sum}}>0$, set
$z_n^2:=n/(2\rho_{\mathrm{sum}})$, and let $(P_q,Q)$ be the one-step pair above
with $z=z_n$ and $\|\mathbf v\|=C$.
\begin{enumerate}[label=\textup{(\alph*)},leftmargin=2.2em,itemsep=1pt,topsep=2pt]
\item For $0<q\leq1$, we have
\begin{align}
 K(t)
 &=\log\EE_{G\sim N(0,1)}\!\left[
 \left\{1-q+q\exp\left(\frac{G}{z_n}-\frac{1}{2z_n^2}\right)\right\}^{t+1}
 \right],\notag\\
 \rho(P_q,Q)&=\frac{\rho_{\mathrm{sum}}}{n},
 \qquad \hbox{and} \qquad
 \Ga(t)
 =\frac{\rho_{\mathrm{sum}}}{n}t(t+1)-K(t).
 \label{eq:subsampled-one-step-gap}
\end{align}
For every fixed $t>0$, $\Ga(t)$ is strictly decreasing in $q$ on $(0,1]$,
equals $0$ at $q=1$, and tends to $\frac{\rho_{\mathrm{sum}}}{n}t(t+1)$ as
$q$ tends to $0$. At $t=1$, we have
\[
 \Ga(1)
 =\frac{2\rho_{\mathrm{sum}}}{n}
 -\log\left\{1+\left(e^{2\rho_{\mathrm{sum}}/n}-1\right)q^2\right\}.
\]
\item For $n$ independent copies of this pair, the product pair has optimal
zCDP parameter $\rho_{\mathrm{sum}}$, and by~\eqref{eq:gap-additivity} its gap is
\[
 \bGa(t)=\rho_{\mathrm{sum}}t(t+1)-\bK(t)=n\Ga(t),
\]
which has the same monotonicity in $q$, vanishes at $q=1$, and tends to
$\rho_{\mathrm{sum}}t(t+1)$ as $q$ tends to $0$.
\end{enumerate}
\end{proposition}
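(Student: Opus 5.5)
We begin with the CGF formula in (a). Because $P_q\ll Q$ with density ratio $p_q/q=1-q+q\exp\{\langle\mathbf v,x\rangle/(z_n^2C^2)-1/(2z_n^2)\}$, we have $\EE_{P_q}[e^{tL}]=\EE_Q[(p_q/q)^{t+1}]$. The Gaussian $Q$ is rotation invariant, so we may project onto the direction of $\mathbf v$. Writing $\langle\mathbf v,X\rangle/(z_nC^2)=G\sim N(0,1)$ under $Q$ reduces the expression to the displayed one-dimensional expectation. Expanding the square gives $t=1$ directly: $\EE[(1-q+qY)^2]=1+q^2(\EE Y^2-1)$, where $Y=\exp(G/z_n-1/(2z_n^2))$ satisfies $\EE Y=1$ and $\EE Y^2=e^{1/z_n^2}=e^{2\rho_{\mathrm{sum}}/n}$. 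This yields the stated $\Ga(1)$.

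Next comes the identity $\rho(P_q,Q)=1/(2z_n^2)=\rho_{\mathrm{sum}}/n$, which we prove by matching bounds. For the upper bound, convexity of $u\mapsto u^{t+1}$ gives $(1-q+qY)^{t+1}\le 1-q+qY^{t+1}$, so $e^{K(t)}\le 1-q+q e^{t(t+1)/(2z_n^2)}\le e^{t(t+1)/(2z_n^2)}$. Dividing by $t(t+1)$ shows $K(t)/(t(t+1))\le 1/(2z_n^2)$. For the lower bound, we take $\alpha\to\infty$. On the event $G\ge g$ we have $1-q+qY\ge qY$, hence $e^{K(t)}\ge q^{t+1}\EE[Y^{t+1}\mathbf 1\{G\ge 0\}]$. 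This last expectation equals $e^{t(t+1)/(2z_n^2)}\Pbb(N((t+1)/z_n,1)\ge0)\ge\tfrac12 e^{t(t+1)/(2z_n^2)}$. Consequently $K(t)/(t(t+1))\ge 1/(2z_n^2)+\{(t+1)\log q-\log 2\}/(t(t+1))\to1/(2z_n^2)$. Because $\rho=\sup_{t>0}K(t)/(t(t+1))$, the two bounds together give equality, and $\Ga$ then has the stated form by definition.

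For monotonicity we fix $t>0$ and set $\phi(q)=\EE[(1+q(Y-1))^{t+1}]$. Differentiating twice gives $\phi''(q)=t(t+1)\EE[(Y-1)^2(1+q(Y-1))^{t-1}]>0$, because $Y$ is nondegenerate and $1+q(Y-1)>0$. Also $\phi'(0)=(t+1)\EE[Y-1]=0$. Strict convexity together with $\phi'(0)=0$ forces $\phi'>0$ on $(0,1]$, so $K$ is strictly increasing in $q$ and $\Ga$ is strictly decreasing. The endpoints follow. At $q=1$ the pair is the Gaussian pair, so $K(t)=t(t+1)/(2z_n^2)$ and $\Ga=0$, consistent with Proposition~\ref{prop:common-mechanisms}(a). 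As $q\to0$, $\phi(q)\to1$ by dominated convergence, using the bound $(1+q(Y-1))^{t+1}\le\max(1,Y)^{t+1}$, which is integrable. Hence $K(t)\to0$ and $\Ga(t)\to\rho_{\mathrm{sum}}t(t+1)/n$.

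Part (b) follows from Theorem~\ref{thm:composition-extension}. That theorem gives $n$ copies the optimal parameter $n\rho=\rho_{\mathrm{sum}}$ and gap $n\Ga(t)$, so monotonicity and both limits transfer from (a) after multiplying by $n$. We expect the main obstacle to be the lower bound for $\rho$. The supremum is approached only as $\alpha\to\infty$, and not at any finite order, so the argument must control the small-$q$ mixture uniformly enough that the $\log q$ penalty is seen to vanish after dividing by $t(t+1)$; the tail-event bound above is meant to handle exactly this.
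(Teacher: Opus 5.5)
Your proposal is correct and follows the paper's proof essentially step for step. It uses the Jensen upper bound $e^{K(t)}\le 1-q+q\,e^{t(t+1)/(2z_n^2)}$, a lower bound from $1-q+qY\ge qY$ with $t\to\infty$, the direct expansion at $t=1$, strict convexity of $q\mapsto\EE[(1+q(Y-1))^{t+1}]$ with zero derivative at $q=0$, and Theorem~\ref{thm:composition-extension} for part (b). The one cosmetic difference is that you restrict to $\{G\ge0\}$ and lose a harmless factor $\tfrac12$, whereas the paper applies $1-q+qY\ge qY$ pointwise to get $K(t)\ge(t+1)\log q+t(t+1)/(2z_n^2)$ directly; both give the same limit.
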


At $q=1$ there is no subsampling, each step is Gaussian, and zCDP loses
no information. As $q$ tends to $0$, the exact RDP curve tends to zero while the zCDP
parameter stays at $\rho_{\mathrm{sum}}$, so the gap approaches its largest
possible value $\rho_{\mathrm{sum}}t(t+1)$: zCDP charges for privacy loss that
does not occur (Figure~\ref{fig:common-mechanism-gaps}(b)--(c)). This is one source of
the utility differences in Section~\ref{subsec:controlled-dpsgd}.

\FloatBarrier

\subsection{Allocation under a Fixed zCDP Composition Parameter}
\label{subsec:privacy-allocation}

Independent composition adds the gaps, but fixing the total zCDP
parameter does not fix the total gap. Its value can depend on how that
parameter is divided among the releases. We study this dependence for
$n$ independent applications of the same one-dimensional additive-noise
mechanism.  Let $Z_1,\ldots,Z_n$ be independent copies of a
real-valued random variable $Z$, and, for shift magnitudes $r_i\geq0$, set
$Q_i=\operatorname{Law}(Z_i)$ and
$P_{i,r_i}=\operatorname{Law}(Z_i+r_i)$ for $i=1,\ldots,n$.
The noise draws have the same distribution, but the shift magnitudes
may differ.

Fix $\rho_{\mathrm{sum}}>0$. To express each shift in terms of its
zCDP parameter, assume
that $\rho(0)=0$ and that $r\mapsto\rho(r)$ is continuous and
strictly increasing on a range whose image contains $[0,\rho_{\mathrm{sum}}]$.  Its
inverse on this interval is denoted by
$r=\psi(x)$, so $\rho\bigl(\psi(x)\bigr)=x$ for
$0\leq x\leq\rho_{\mathrm{sum}}$.
At $r=0$, we use $K(t)=\Ga(t)=0$.  Set $\rho_i=\rho(r_i)$ and
$r_i=\psi(\rho_i)$.  We study allocations
satisfying
\begin{equation}\label{eq:fixed-zcdp-composition-budget}
 \rho_i\geq0,
 \qquad
 \sum_{i=1}^n\rho_i=\rho_{\mathrm{sum}}.
\end{equation}
Each $\rho_i$ is the optimal zCDP parameter of its component pair.
For fixed $t$, write the information gap as a function of this parameter:
\[
 \Phi_t(x):=\left.\Ga(t)\right|_{r=\psi(x)},
 \qquad 0\leq x\leq\rho_{\mathrm{sum}}.
\]
Equivalently, $\Phi_t(x)=xt(t+1)-K(t)$, where $K(t)$ is evaluated for the
single pair with $r=\psi(x)$.  Substituting $r_i=\psi(\rho_i)$ into
\eqref{eq:gap-additivity} gives
\[
 \bGa(t)=\rho_{\mathrm{sum}}t(t+1)-\bK(t)
 =\sum_{i=1}^n\Phi_t(\rho_i).
\]
The same function $\Phi_t$ applies to every component because the
noise distribution is fixed. Its convexity determines whether spreading
the zCDP parameter evenly increases or decreases the total gap.

\begin{proposition}[Allocation of a fixed total zCDP parameter]
\label{prop:zcdp-budget-allocations}
Fix $t>0$ and consider all allocations satisfying
\eqref{eq:fixed-zcdp-composition-budget}.
\begin{enumerate}[label=\textup{(\alph*)}]
\item If $\Phi_t$ is convex on $[0,\rho_{\mathrm{sum}}]$, then every feasible
allocation satisfies
\begin{equation}\label{eq:convex-zcdp-allocation-bounds}
 n\Phi_t\!\left(\frac{\rho_{\mathrm{sum}}}{n}\right)
 \leq \bGa(t)
 \leq \Phi_t(\rho_{\mathrm{sum}}).
\end{equation}
Hence the uniform zCDP allocation
\[
 \rho_i=\frac{\rho_{\mathrm{sum}}}{n},
 \qquad
 r_i=\psi\left(\frac{\rho_{\mathrm{sum}}}{n}\right),
 \qquad i=1,\ldots,n,
\]
minimizes the gap.  Every permutation of the concentrated allocation
\[
 (\rho_1,\ldots,\rho_n)=(\rho_{\mathrm{sum}},0,\ldots,0),
 \qquad
 (r_1,\ldots,r_n)=(\psi(\rho_{\mathrm{sum}}),0,\ldots,0),
\]
maximizes it.
\item If $\Phi_t$ is concave on
$[0,\rho_{\mathrm{sum}}]$, then
\[
 \Phi_t(\rho_{\mathrm{sum}})
 \leq \bGa(t)
 \leq n\Phi_t\!\left(\frac{\rho_{\mathrm{sum}}}{n}\right),
\]
so concentration minimizes the gap and uniform allocation maximizes it.
\end{enumerate}
If the relevant function is strictly convex or strictly concave, the
uniform extremum is unique, and the concentrated extrema are exactly the
$n$ permutations displayed above.
\end{proposition}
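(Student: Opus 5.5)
The plan is to use the additive form $\bGa(t)=\sum_{i=1}^n\Phi_t(\rho_i)$, derived just before the statement from \eqref{eq:gap-additivity}, and to turn both parts into the standard extremal problem for a separable sum of a convex (or concave) function over the simplex $\Delta:=\{\rho_i\ge0,\ \sum_i\rho_i=\rho_{\mathrm{sum}}\}$. The key normalization is $\Phi_t(0)=0$, which holds by the convention $K(t)=\Ga(t)=0$ at $r=0$ together with $\psi(0)=0$ (since $\rho(0)=0$ and $\rho$ is strictly increasing). I will prove part (a); part (b) then follows by applying (a) to $-\Phi_t$, which is convex, vanishes at $0$, and has all inequalities reversed.

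\emph{Lower bound in (a).} This is Jensen's inequality for convex $\Phi_t$ at the barycenter:
\[
 \frac1n\sum_{i=1}^n\Phi_t(\rho_i)\ \ge\ \Phi_t\!\left(\frac1n\sum_{i=1}^n\rho_i\right)=\Phi_t\!\left(\frac{\rho_{\mathrm{sum}}}{n}\right).
\]
Equality at the uniform allocation is immediate, so the uniform allocation attains the minimum.

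\emph{Upper bound in (a).} This is where the normalization $\Phi_t(0)=0$ is used. For $x\in[0,\rho_{\mathrm{sum}}]$, write $x=\lambda\rho_{\mathrm{sum}}+(1-\lambda)\cdot0$ with $\lambda=x/\rho_{\mathrm{sum}}$. Convexity gives the chord bound
\[
 \Phi_t(x)\ \le\ \frac{x}{\rho_{\mathrm{sum}}}\,\Phi_t(\rho_{\mathrm{sum}}).
\]
Summing over $i$ and using $\sum_i\rho_i=\rho_{\mathrm{sum}}$ yields $\bGa(t)\le\Phi_t(\rho_{\mathrm{sum}})$. Any permutation of $(\rho_{\mathrm{sum}},0,\dots,0)$ attains this, again because $\Phi_t(0)=0$. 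The corresponding shifts are $r_i=\psi(\rho_i)$, which are well defined because the image of $\rho$ contains $[0,\rho_{\mathrm{sum}}]$.

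\emph{Uniqueness under strict convexity.} For the minimum, strict Jensen forces all $\rho_i$ to be equal, so the uniform allocation is the unique minimizer. For the maximum, the strict chord inequality $\Phi_t(x)<(x/\rho_{\mathrm{sum}})\Phi_t(\rho_{\mathrm{sum}})$ holds for every $x\in(0,\rho_{\mathrm{sum}})$. So any allocation with some $\rho_i$ strictly between $0$ and $\rho_{\mathrm{sum}}$ gives a strict total inequality. Equality therefore forces each $\rho_i\in\{0,\rho_{\mathrm{sum}}\}$, and the sum constraint leaves exactly the $n$ permutations of $(\rho_{\mathrm{sum}},0,\dots,0)$. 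When $n=1$ all of these allocations coincide, and the claims are trivial.

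The only potential obstacle is making sure that the boundary value $\Phi_t(0)=0$ is legitimately part of the domain on which convexity is assumed. This is guaranteed because convexity is assumed on the closed interval $[0,\rho_{\mathrm{sum}}]$ and the convention fixes the value at $r=0$.
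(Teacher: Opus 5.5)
Your proposal is correct and follows the paper's own proof: the additive form $\bGa(t)=\sum_i\Phi_t(\rho_i)$, Jensen's inequality for the lower bound, the chord inequality between $0$ and $\rho_{\mathrm{sum}}$ (using $\Phi_t(0)=0$) for the upper bound, reversal of both under concavity, and the strict versions of the two inequalities for the uniqueness claims. Your treatment of the equality cases is more explicit than the paper's one-line remark, but the argument is the same.
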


For a convex $\Phi_t$, spreading the zCDP parameter evenly gives the
smallest gap; concentrating it in one release gives the largest. Concavity
reverses these choices. This comparison concerns one fixed RDP order,
and the best allocation need not be the same at every order.

Figure~\ref{fig:zcdp-budget-allocation} compares Laplace and Gaussian-mixture
noise as the allocation moves from uniform to concentrated. Both panels
use $t=1$, $n=8$, $\rho_{\mathrm{sum}}=1/2$, and the path
\begin{equation}\label{eq:zcdp-allocation-path}
 \rho_1(\omega)
 =\frac{\rho_{\mathrm{sum}}}{n}\{1+(n-1)\omega\},\qquad
 \rho_i(\omega)
 =\frac{\rho_{\mathrm{sum}}}{n}(1-\omega),\quad i=2,\ldots,n,
\end{equation}
with $r_i(\omega)=\psi\bigl(\rho_i(\omega)\bigr)$ and $0\leq\omega\leq1$.
Thus $\sum_i\rho_i(\omega)=\rho_{\mathrm{sum}}$ for every $\omega$,
$\omega=0$ is the uniform zCDP allocation, and $\omega=1$ is the
concentrated allocation, with shifts
$\{\psi(\rho_{\mathrm{sum}}),0,\ldots,0\}$.

\begin{figure}[!htbp]
\centering
\includegraphics[width=.85\linewidth]{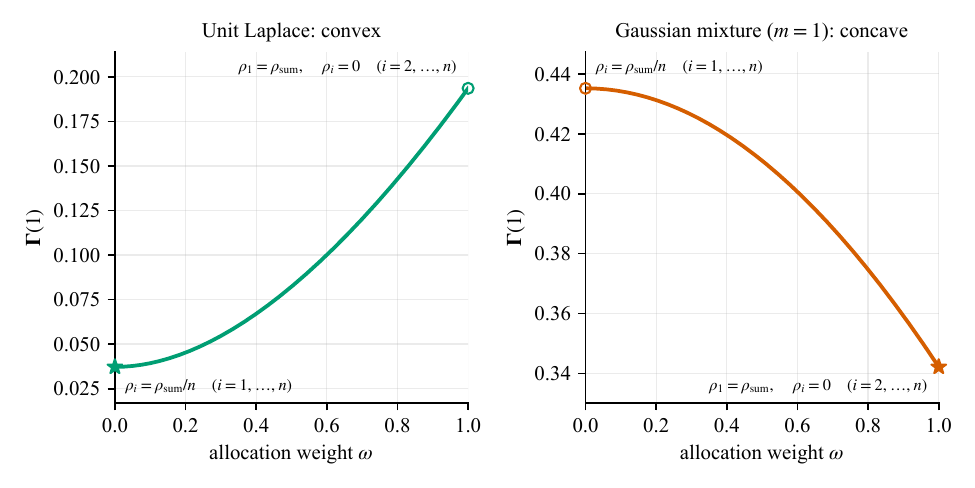}
\caption{Information gaps under independent composition as a fixed
total zCDP parameter is reallocated. Both panels use
$t=1$, $n=8$, $\rho_{\mathrm{sum}}=1/2$, and the path
\eqref{eq:zcdp-allocation-path}, from uniform ($\omega=0$) to concentrated
($\omega=1$). Left: for unit-scale Laplace noise, strict convexity of
$\Phi_1$ proves that the uniform allocation is the unique minimum and
concentrated allocations are the maxima. Right: for Gaussian-mixture
noise with $m=1$, the computed gap decreases along the path; centered
second differences of $\Phi_1$ are negative on the displayed grid at both
quadrature resolutions. Numerical settings are in Appendix~\ref{app:numerics};
Appendix~\ref{app:alternative-order-robustness} repeats the comparison
at $t=1.5$.}
\label{fig:zcdp-budget-allocation}
\end{figure}

For unit-scale Laplace noise,
$\rho(r)=r+e^{-r}-1$, and the proof verifies that
$\Phi_1$ is strictly convex.  The composed gap therefore
increases from $0.037202015$ at the uniform allocation to $0.193535993$ at
the concentrated allocation.  For the symmetric Gaussian mixture with
$m=1$ in Proposition~\ref{prop:common-mechanisms},
$\rho(r)=r^2/2$, so $\psi(x)=\sqrt{2x}$.
At two Gauss--Hermite resolutions, centered second differences of
$\Phi_1$, computed from the exact CGF, are negative on the displayed
grid in $[0,1/2]$, providing numerical evidence for concavity. Along this path, the gap decreases from
$0.435179700$ at the uniform allocation to $0.342190807$ at a
concentrated allocation. Unlike the Laplace convexity result, this
Gaussian-mixture comparison is a numerical observation on the displayed
grid.

\FloatBarrier

\subsection{Tail Parameters and the Information Gap}
\label{subsec:tail-weight}

We return to additive noise and ask how the gap changes as the tail
parameter varies within a noise family. We compare one ordered location
pair at a time, so composition does not affect this comparison.
For Student-$t_\nu$ noise, $\rho_\nu(s)$ equals the mean privacy loss
for sufficiently small shifts. The local expansion therefore gives a
quartic gap, as for Laplace noise, but its coefficient need not decrease
as the tails become lighter.

\begin{proposition}[Student-$t_\nu$ location pair]
\label{prop:student-general}
Let
\[
 f_\nu(x)=
 \frac{\Gamma((\nu+1)/2)}
 {\sqrt{\nu\pi}\,\Gamma(\nu/2)}
 \left(1+\frac{x^2}{\nu}\right)^{-(\nu+1)/2},
 \qquad \nu\geq1,
\]
and let $Z$ have density $f_\nu$, $P_s=\operatorname{Law}(Z+s)$, and
$Q=\operatorname{Law}(Z)$.  Write $\rho_\nu(s):=\rho(P_s,Q)$.  Then
$\rho_\nu(s)<\infty$ for every $s\in\R$.  For every fixed $\nu\geq1$,
one has $\rho_\nu(s)=\EE_{P_s}L_s$ for all sufficiently
small nonzero $s$.  Moreover, for every fixed $T>0$, uniformly for
$0<t\leq T$ and all sufficiently small nonzero $s$,
\begin{equation}\label{eq:student-general-gap}
 \Ga(t)=c_{4,\nu}[t(t+1)]^2s^4+O(|s|^6),
 \qquad
 c_{4,\nu}:=
 \frac{3(\nu+1)^2(\nu^2+4\nu-1)}
 {4\nu(\nu+3)^2(\nu+5)(\nu+7)}.
\end{equation}
The coefficient has a unique maximum at
$\nu=6.013052641\ldots$ and satisfies
$c_{4,\nu}=3/(4\nu)+O(\nu^{-2})\to0$ as $\nu\to\infty$.
\end{proposition}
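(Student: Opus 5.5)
The proof splits into three parts: finiteness of $\rho_\nu$, the identity $\rho_\nu(s)=\EE_{P_s}L_s$ for small $s$, and the quartic expansion via Theorem~\ref{thm:sensitivity-gap}.

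\textbf{Finiteness of $\rho_\nu(s)$.} The log-likelihood ratio is
\[
L_s(x)=\tfrac{\nu+1}{2}\log\frac{\nu+x^2}{\nu+(x-s)^2}.
\]
It is continuous and tends to $0$ as $|x|\to\infty$, so it is bounded in absolute value by some $B(s)$. Hence $D_\alpha(P_s\|Q)\le B(s)$ for every $\alpha$, and $\rho_\nu(s)\le\sup_{\alpha>1}B(s)/\alpha\le B(s)$.

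\textbf{The identity $\rho_\nu(s)=\EE_{P_s}L_s$.} Since $D_\alpha$ is nondecreasing in $\alpha$ and tends to $\EE_{P_s}L_s$ as $\alpha\downarrow1$, the ratio $D_\alpha/\alpha$ has limit $\EE_{P_s}L_s$ at $\alpha=1$. It therefore suffices to show
\[
h(t):=\frac{K(t)}{t(t+1)}\le\kappa_1\quad\text{for all }t>0.
\]
I would write $K(t)=\kappa_1t+\kappa_2t^2/2+\sum_{j\ge3}\kappa_jt^j/j!$ and use symmetry. Reflecting $x\mapsto s-x$ shows that $L$ under $Q$ has the law of $-L$ under $P$, which gives $K(-1-t)=K(t)$. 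In particular $K(-1)=0$, i.e.\ $\kappa_1-\kappa_2/2+\cdots=0$, so $\kappa_2=2\kappa_1+O(s^4)$. Since $\kappa_j=O(s^j)$, this gives
\[
h(t)=\kappa_1+\frac{t(t+1)\,c\,s^4+O(s^5)}{t(t+1)}\;\text{-type corrections}.
\]
The obstacle is showing that the net correction is $\le0$ for all $t$, including large $t$. For bounded $t\le T_0$ I would use the expansion with explicit cumulants: the $s^4$ coefficient of $\kappa_1 t(t+1)-K(t)$ should be $c_{4,\nu}[t(t+1)]^2\ge0$. This is not circular, since it is exactly the computation done below. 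For large $t$, the bound $K(t)\le t\sup L_s\le t(\nu+1)|s|/\sqrt\nu$ (up to constants) gives $h(t)\le C|s|/t$, which is $\le\kappa_1\asymp s^2$ once $t\ge C'/|s|$. The intermediate range $T_0\le t\le C'/|s|$ is the delicate part. There I would use $|L_s|\le C|s|$ together with the convexity-type bound $K(t)\le\kappa_1t+\tfrac{t^2}{2}\kappa_2\cdot\frac{e^{tC|s|}-1-tC|s|}{(tC|s|)^2/2}$ (a Bennett-type inequality). This reduces the claim to an elementary inequality in $u=t|s|$, which I expect to hold after tuning constants using $\kappa_2\le2\kappa_1(1+O(s^2))$. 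I expect this uniform-in-$t$ step to be the main obstacle.

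\textbf{Expansion via Theorem~\ref{thm:sensitivity-gap}.} With $\rho_\nu(s)=\kappa_1$, the first term of \eqref{eq:unified-profile-cumulant-expansion} vanishes. The CGF hypothesis holds because $L_s$ is bounded by $C|s|$ and is smooth in $s$, so $K^{(5)}=O(s^5)$ uniformly on $[-1,T]$. It remains to compute $\kappa_3$ and $\kappa_4$ to order $s^4$. The symmetry $K(-1-t)=K(t)$ forces the odd-in-$(t+\tfrac12)$ structure. Combining the two cumulant terms with the fact that the remainder is $O(s^6)$ by evenness in $s$ (reflection $s\to-s$ together with $x\to-x$) yields the form $c_{4,\nu}[t(t+1)]^2s^4$. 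For this I need $\kappa_4=A s^4+O(s^6)$ and $\kappa_3=-2\kappa_4\cdot\tfrac12+\cdots$; precisely, matching gives $\kappa_3=\kappa_4+O(s^6)$-type relations. The constant $A$ is obtained by Taylor expanding
\[
L_s=s\,\ell'(x)+\tfrac{s^2}{2}\ell''(x)+\cdots,\qquad \ell=-\log f_\nu,
\]
and computing Student-$t$ moments of $\ell'$ and $\ell''$ via Beta integrals. This produces the stated rational function $c_{4,\nu}$.

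\textbf{Properties of $c_{4,\nu}$.} The maximum at $\nu\approx6.013$ follows from a sign analysis of $\frac{d}{d\nu}\log c_{4,\nu}$, a rational function whose numerator has a unique positive root in $[1,\infty)$, located numerically. The asymptotic $c_{4,\nu}=3/(4\nu)+O(\nu^{-2})$ follows by expanding leading powers: the numerator is $\sim3\nu^4$ and the denominator is $\sim4\nu^5$.
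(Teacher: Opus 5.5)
Your three-range plan for proving $\rho_\nu(s)=\EE_{P_s}L_s$ has a genuine gap in the intermediate range, and the Bennett-type bound you propose cannot fill it, whatever the constants.

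Bennett gives $K(t)\le\kappa_1t+\tfrac{t^2}{2}\kappa_2\phi(tb)$, where $\phi(u)=2(e^u-1-u)/u^2\ge1+u/3$ and $b\asymp|s|$ (since $\sup L_s\approx B_\nu|s|$). So $K(t)\le\kappa_1t(t+1)$ would require $\phi(tb)\le2\kappa_1/\kappa_2$. Now $K(-1)=0$, together with \eqref{eq:smooth-third-cumulant}--\eqref{eq:smooth-fourth-cumulant} and $\kappa_3(S_\nu)=0$, gives $2\kappa_1-\kappa_2=\tfrac{|\kappa_4(S_\nu)|}{12}s^4+O(|s|^5)$, while $\kappa_2\asymp I_\nu s^2$. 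The available slack is therefore only $2\kappa_1/\kappa_2-1\asymp s^2$. Since $\phi(tb)-1\ge tb/3\asymp t|s|$, Bennett certifies the inequality only for $t=O(|s|)$, not on $[T_0,C'/|s|]$. Your form $\kappa_2\le2\kappa_1(1+O(s^2))$ even discards the sign $\kappa_2<2\kappa_1$, which is all the slack there is.

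The obstruction is structural. With $\tau=t|s|$ fixed and $s\to0$, you have $K(\tau/|s|)\to\Lambda_S(\tau)=\log\EE e^{\tau S_\nu}$ and $\kappa_1t(t+1)\to\tfrac{I_\nu}{2}\tau^2$. So on the window $t|s|\asymp1$, your claim becomes, in the limit, the sub-Gaussian bound $\EE e^{\tau S_\nu}\le e^{I_\nu\tau^2/2}$, with variance proxy equal to the variance $I_\nu=(\nu+1)/(\nu+3)$. This bound is necessary: if it failed at some $\tau_0$, then $\rho_\nu(s)>\EE_{P_s}L_s$ for all small $s$. Generic inequalities (Hoeffding, Bernstein, Bennett) all give a proxy strictly larger than the variance, so the bound must come from the specific law of the Student score.

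The paper supplies exactly this ingredient. Since $Y=Z^2/(\nu+Z^2)\sim\mathrm{Beta}(1/2,\nu/2)$ and $S_\nu^2=(\nu+1)^2Y(1-Y)/\nu$, the even moments $\mu_{2j,\nu}$ have closed forms with $\mu_{2j+2,\nu}/\mu_{2j,\nu}<(2j+1)I_\nu$. Termwise comparison with the moment series of $N(0,I_\nu)$ then yields $\EE e^{\tau S_\nu}<e^{I_\nu\tau^2/2}$ for $\tau\neq0$. Combined with $\kappa_4(S_\nu)<0$ near $\tau=0$ and $\Lambda_S(\tau)\le B_\nu\tau$ for large $\tau$, this gives \eqref{eq:quant-subg-4}. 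Lemma~\ref{lem:localization}, whose three regimes in $\tau$ mirror your three ranges in $t$, then finishes the argument.

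Your bounded-$t$ step can be extended to all $0<t\le\eta/|s|$: the bound $|L_s|\le B_\nu|s|$ makes every tilted fifth cumulant $O(|s|^5)$ uniformly in the tilt. So only the window $t|s|\asymp1$ actually needs the new idea.

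There are two smaller slips in your expansion step.
\begin{itemize}
\item The symmetry $K(-1-t)=K(t)$ makes $K$ even, not odd, about $t=-\tfrac12$, i.e.\ a function of $t(t+1)$. That is what produces the factor $[t(t+1)]^2$.
\item The relation you need is $\kappa_3(L_s)=\tfrac12\kappa_4(L_s)+O(|s|^5)$, not $\kappa_3=\pm\kappa_4$. With $\kappa_3=\kappa_4$, the cubic and quartic terms do not combine into $[t(t+1)]^2$.
\end{itemize}

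Your finiteness argument and your analysis of $c_{4,\nu}$ agree with the paper.
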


The leading coefficient first increases and then decreases as $\nu$
grows, even though the tails become lighter throughout. It reaches its
maximum near $\nu=6.013$ and tends to zero in the Gaussian limit.
Thus the tail parameter alone does not give a monotone ordering of the gap.

The generalized Gaussian family provides a second comparison: $\beta=1$
gives Laplace noise and $\beta=2$ gives Gaussian noise. Here the local
expansion can also contain a contribution from
$\rho_\beta(s)-\EE_{P_s}L_s$, so the quartic term alone need not give
the leading gap.

\begin{proposition}[Generalized Gaussian location pair]
\label{prop:gg-single}
Let $f_\beta(x)\propto\exp(-|x|^\beta/\beta)$ for $1\leq\beta\leq2$,
and let $Z$ have density $f_\beta$, $P_s=\operatorname{Law}(Z+s)$, and
$Q=\operatorname{Law}(Z)$.  Write $\rho_\beta(s):=\rho(P_s,Q)$ and define
\[
 \chi_\beta:=
 \frac{\beta^{4-4/\beta}}{24\Gamma(1/\beta)^2}
 \left\{
  3\Gamma(2-1/\beta)^2
  -\Gamma(4-3/\beta)\Gamma(1/\beta)
 \right\}\geq0.
\]
The optimal pairwise zCDP parameter $\rho_\beta(s)$ is finite for
every $s\in\R$.  For every fixed $\beta\in[1,2]$ and $T>0$, uniformly for
$0<t\leq T$ and all sufficiently small nonzero $s$,
\begin{equation}\label{eq:gg-single-gap}
 \Ga(t)
 =t(t+1)\{\rho_\beta(s)-\EE_{P_s}L_s\}
 +\chi_\beta[t(t+1)]^2s^4+o(|s|^4).
\end{equation}
At $\beta=1$, $\rho_1(s)=\EE_{P_s}L_s=|s|+e^{-|s|}-1$ and
$\chi_1=1/12$, recovering Laplace noise. At $\beta=2$,
$\rho_2(s)=\EE_{P_s}L_s=s^2/2$ and $\chi_2=0$, and $\Ga(t)=0$
identically for this Gaussian endpoint.
\end{proposition}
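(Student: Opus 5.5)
The plan is to obtain \eqref{eq:gg-single-gap} from Theorem~\ref{thm:sensitivity-gap}. That theorem already supplies the term $t(t+1)\{\rho_\beta(s)-\EE_{P_s}L_s\}$. It therefore suffices to show that $\kappa_3(L_s)=o(s^4)$, that $\kappa_4(L_s)=24\chi_\beta s^4\cdot(\text{appropriate normalization})+o(s^4)$, and that the remainder is $o(s^4)$.

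\textbf{Step 1: finiteness of $\rho_\beta(s)$.} Write $\phi(x)=|x|^\beta/\beta$, so that $L_s(x)=\phi(x)-\phi(x-s)$. Since $\phi'$ is bounded on bounded sets and $\beta\le2$, we have $|L_s(x)|\le |s|\,\sup_{[x-s,x]}|\phi'|\le |s|(|x|+|s|)^{\beta-1}$. Hence $K(t)\le \log\EE\exp\{t|s|(|Z|+2|s|)^{\beta-1}\}$. For $\beta<2$ this grows like $t^{\beta/(\beta-1)}$, which is subquadratic when $\beta\ge 1$ only in the sense that $\beta/(\beta-1)\ge2$; the case $\beta<2$ needs more care. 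So I would instead use the convexity-based bound $L_s\le |s|\,|\phi'(x)|$, giving $K(t)\lesssim c\,t^{\beta/(\beta-1)}|s|^{\beta/(\beta-1)}$ with a Gaussian-type comparison. One must check that $K(t)/(t(t+1))$ stays bounded as $t\to\infty$ and as $t\to0$. For $t\to0$ this holds because $K(t)\approx t\,\mathrm{KL}$; for $t\to\infty$ one uses $|L_s|\le C|s|(1+|x|)^{\beta-1}$ together with the tail $e^{-|x|^\beta/\beta}$, which makes $K(t)=O(t^{2})$ since $\beta/(\beta-1)\ge2$. The endpoint $\beta=1$, where $L_s$ is bounded by $|s|$, is immediate.

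\textbf{Step 2: cumulant asymptotics.} This is the main obstacle. For $\beta<2$, $\phi$ is not smooth at $0$, and $\phi''\sim|x|^{\beta-2}$ is integrable but $\phi'''$ is not, so the naive Taylor expansion of $L_s$ in $s$ fails beyond low order. Theorem~\ref{thm:sensitivity-gap} also requires $\sup|K^{(5)}|\le M|s|^5$, and this may fail. I would therefore not verify its hypothesis literally. Instead I would repeat its proof with a fourth-order Taylor expansion of $K$ around $0$ on $[-1,T]$ and an $o(s^4)$ remainder, using the symmetry $K(u)=K(-1-u)$; this symmetry holds because $Z\overset{d}{=}-Z$ swaps $P_s$ and $Q$.

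The moments are then computed via $\EE_Q$. Write $e^{L_s}-1$ for the likelihood-ratio expansion $f(x-s)/f(x)-1$ and use the scaling substitution $x=|s|y$ near the kink. The odd cumulants come out to $o(s^4)$ by the symmetry $K(u)=K(-1-u)$: this symmetry forces $\kappa_3=-\tfrac32\kappa_2+\ldots$ combinations that cancel into the given form. For the fourth cumulant, the smooth part of $\phi$ contributes $\EE[\phi'(Z)^4]s^4-3(\EE[\phi'(Z)^2])^2s^4$. Evaluating with $\EE|Z|^{p}=\beta^{p/\beta}\Gamma((p+1)/\beta)/\Gamma(1/\beta)$ yields the Gamma expression in $\chi_\beta$, after the combinatorial factor $1/24$ and the $[t(t+1)]^2$ structure that the symmetry imposes on even terms. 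Concretely, $\EE\phi'^2$ gives $\Gamma(2-1/\beta)$ and $\EE\phi'^4$ gives $\Gamma(4-3/\beta)$. Nonnegativity of $\chi_\beta$ then follows from $\kappa_4$ relations, or directly from a log-convexity (Gautschi-type) inequality for $\Gamma$. The kink region $|x|\lesssim|s|$ contributes $O(|s|^{1+4(\beta-1)+\ldots})$, and I must check this is $o(s^4)$ for $\beta>1$. For $\beta=1$ I would instead treat the problem exactly, since the Laplace case is known.

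\textbf{Step 3: endpoints.} At $\beta=1$, use Proposition~\ref{prop:common-mechanisms}(c) and the explicit $\rho_1(s)=|s|+e^{-|s|}-1$, which equals the KL divergence; this gives $\chi_1=(3-2)/24\cdot2=1/12$. At $\beta=2$, use Proposition~\ref{prop:common-mechanisms}(a) with $\sigma=1$; there $3\Gamma(3/2)^2=\Gamma(5/2)\Gamma(1/2)$ gives $\chi_2=0$.
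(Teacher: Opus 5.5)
\textbf{The gap: the third cumulant is not negligible.} Your reduction rests on the claim that $\kappa_3(L_s)=o(s^4)$ (``the odd cumulants come out to $o(s^4)$ by the symmetry''), and this claim is false. In fact $\kappa_3(L_s)=-12\chi_\beta s^4+o(s^4)$; for Laplace noise, for instance, $\kappa_3(L_s)=-s^4+O(|s|^5)$.

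The symmetry $K(u)=K(-1-u)$ does not make $\kappa_3$ negligible. Differentiating it three times gives $K'''(0)=-K'''(-1)$, and expanding $K'''$ on $[-1,0]$ yields $\kappa_3(L_s)=\tfrac12\kappa_4(L_s)+O(|s|^5)$. So the third and fourth cumulants are of the same order $s^4$. The relation $\kappa_3=-\tfrac32\kappa_2+\cdots$ that you invoke does not follow from the symmetry, since $\kappa_2$ is of order $s^2$.

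Your plan also cannot close on its own terms. If $\kappa_3$ were $o(s^4)$, Theorem~\ref{thm:sensitivity-gap} would leave $\tfrac{t^2(1-t^2)}{24}\kappa_4(L_s)$, which changes sign at $t=1$ and so is not proportional to $[t(t+1)]^2$. The $[t(t+1)]^2$ structure arises precisely because the cubic term $-\tfrac{t^2(t+1)}{6}\kappa_3(L_s)=2\chi_\beta t^2(t+1)s^4$ combines with the quartic term $-\chi_\beta t^2(1-t^2)s^4$.

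\textbf{How to repair it.} The paper obtains the cubic term by a delicate first-order computation across the kink. It splits the moments of $L_s/s$ at $|z|=2|s|$ and Taylor-expands away from the kink. It then uses the integration-by-parts identities $\EE[\phi''(Z)]=\EE[\phi'(Z)^2]$ and $\EE[\phi'(Z)^2\phi''(Z)]=\tfrac13\EE[\phi'(Z)^4]$ to get $\kappa_3(L_s)=-12\chi_\beta s^4+o(s^4)$.

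A correct version of your symmetry idea would be a genuine shortcut around that computation. Inserting $\kappa_3=\tfrac12\kappa_4+O(|s|^5)$ into the theorem gives $\Ga(t)=t(t+1)\{\rho_\beta(s)-\EE_{P_s}L_s\}-\tfrac{\kappa_4(L_s)}{24}[t(t+1)]^2+O(|s|^5)$ uniformly for $0<t\le T$. Then only $\kappa_4(L_s)=s^4\{\EE\phi'(Z)^4-3(\EE\phi'(Z)^2)^2\}+o(s^4)=-24\chi_\beta s^4+o(s^4)$ is needed, and your fourth-cumulant computation supplies exactly that.

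This shortcut uses the fifth-derivative bound, and contrary to your worry, that hypothesis holds. Derivatives of $K$ in the order $u$ are cumulants of $L_s$ under exponential tilts; the kink affects smoothness in $x$ and $s$, not in $u$. For $|s|\le1$ one has $|L_s|\le|s|(|Z|+1)^{\beta-1}$, and $(|Z|+1)^{\beta-1}$ has exponential moments of all orders under $f_\beta$. Together these give $\sup_{-1\le u\le T}|K^{(5)}(u)|\le C|s|^5$, which is how the paper verifies it.

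\textbf{Two smaller errors.} First, in Step~1 the Young exponent is inverted. One has $t|s||z|^{\beta-1}\le|z|^\beta/(2\beta)+C(t|s|)^\beta$, so $K(t)=O(t^\beta)$, which is $O(t^2)$ and in fact $o(t^2)$ for $\beta<2$. Growth like $t^{\beta/(\beta-1)}$ would have exponent exceeding $2$ for $1<\beta<2$ and would make $K(t)/\{t(t+1)\}$ unbounded, the opposite of what you conclude.

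Second, log-convexity of $\Gamma$ gives $\Gamma(4-3x)\Gamma(x)\ge\Gamma(2-x)^2$, which is the wrong direction for $\chi_\beta\ge0$. What is needed is $\Gamma(4-3x)\Gamma(x)\le3\Gamma(2-x)^2$ for $x=1/\beta\in[1/2,1]$. The paper proves this by showing, via the digamma integral representation, that the ratio decreases from $3$ to $1$.
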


For $1<\beta<2$, we do not have a closed form for
$\rho_\beta(s)-\EE_{P_s}L_s$ at a finite shift.
Since $\rho_\beta(s)\geq\EE_{P_s}L_s$, for every fixed
$T>0$, uniformly for $0<t\leq T$ and all sufficiently small nonzero $s$,
\begin{equation}\label{eq:gg-quartic-lower-bound}
 \Ga(t)
 \geq\chi_\beta[t(t+1)]^2s^4+o(|s|^4).
\end{equation}

\begin{figure}[!htbp]
\centering
\includegraphics[width=.85\linewidth]{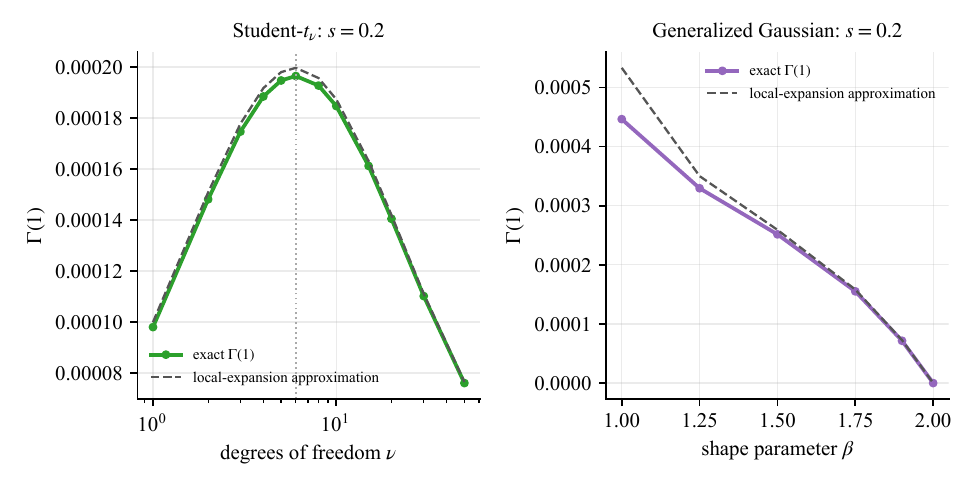}
\caption{Information gaps at $t=1$ and $s=0.2$ as the tail parameters
vary. Left: Student-$t_\nu$ noise, with quartic approximation
$4c_{4,\nu}s^4$ from \eqref{eq:student-general-gap}; the dotted line marks
the maximum of $c_{4,\nu}$. Right: generalized Gaussian noise, with
$\rho_\beta(s)$ obtained by numerically maximizing the normalized CGF and
quartic approximation $4\chi_\beta s^4$ from
\eqref{eq:gg-quartic-lower-bound}. The factor $4$ is $[t(t+1)]^2$ at
$t=1$. Solid curves show the computed gaps; dashed curves show the
small-shift approximations. The Student-$t_\nu$ gap is nonmonotone,
while the generalized Gaussian gap decreases to zero at $\beta=2$.
Appendix~\ref{app:alternative-order-robustness} repeats the comparison
at $t=1.5$.}
\label{fig:tail-weight}
\end{figure}

Figure~\ref{fig:tail-weight} compares the gaps at $t=1$ and $s=0.2$.
The Student-$t_\nu$ gap is nonmonotone in $\nu$, while the generalized
Gaussian gap decreases to zero at $\beta=2$ on the displayed grid.
For each generalized Gaussian pair, we obtain $\rho_\beta(s)$ by
numerically maximizing $K(u)/[u(u+1)]$ over $u>0$, including the
$u\downarrow0$ endpoint. The difference
$\rho_\beta(s)-\EE_{P_s}L_s$ is zero to numerical precision at every
displayed $\beta$. The plotted quartic term is therefore close to the
computed gap, although the full expansion in \eqref{eq:gg-single-gap}
also contains $2\{\rho_\beta(s)-\EE_{P_s}L_s\}$ at $t=1$.
The dashed curves are local approximations and are not asserted to bound
the gaps at this fixed shift.

\FloatBarrier

\subsection{Discrete and Lattice Mechanisms}
\label{subsec:discrete-noise}

For integer-valued queries, the RDP order reveals another difference
between noise laws. Continuous Gaussian noise has zero gap at every order;
the discrete Gaussian zCDP envelope is exact only at orders where the
shifted lattice realigns with $\Z$. We compare it with discrete Laplace noise, whose
privacy loss takes finitely many values and hence gives a finite-sum CGF.

\begin{proposition}[Exact gaps for discrete Laplace and discrete Gaussian noise]
\label{prop:discrete-examples}
Let $\theta$ be integer-valued and fix neighboring datasets $D,D'$ for which
$s:=\theta(D)-\theta(D')\ne0$ and
$r:=|s|\in\{1,2,\ldots\}$.  Because both noise laws below are centrally
symmetric, Corollary~\ref{cor:symmetric-additive} lets us use the positive
representative of each ordered location pair.  As in the continuous
families above, let $L_r$ denote the privacy loss of each displayed pair.
The following formulas hold for every $t>0$ and
extend continuously to $t=0$, where the gap is zero.
\begin{enumerate}[label=(\alph*)]
\item Fix $a>0$, let $Z$ have discrete Laplace mass proportional to
$e^{-a|z|}$,
put $Q=\operatorname{Law}(Z)$,
$P_r=\operatorname{Law}(Z+r)$, and $\lambda=e^{-a}$.  Then
\begin{align}
 K(t)
 &=\log\!\left[
 \frac{1-\lambda}{1+\lambda}\left\{
 \frac{e^{tar}+\lambda^r e^{-tar}}{1-\lambda}
 +\sum_{j=1}^{r-1}\lambda^j e^{ta(r-2j)}\right\}\right],\notag\\
 \rho(r)
 &=\EE_{P_r}L_r
 =a\left\{r+(\lambda^r-1)\operatorname{csch}(a)\right\},\notag\\
 \Ga(t)
 &=a\left\{r+(\lambda^r-1)\operatorname{csch}(a)\right\}t(t+1)-K(t).
 \label{eq:discrete-laplace-transform}
\end{align}
The finite sum is empty when $r=1$.
\item For discrete Gaussian noise with scale $\sigma>0$ \citep{Canonne2020discrete}, let
$Z$
have mass function
\[
 \Pbb(Z=k)
 =\frac{e^{-k^2/(2\sigma^2)}}{\Theta_\sigma(0)},
 \qquad
 \Theta_\sigma(u):=\sum_{k\in\mathbb Z}
 e^{-(k-u)^2/(2\sigma^2)}.
\]
Put $Q=\operatorname{Law}(Z)$ and $P_r=\operatorname{Law}(Z+r)$.
Then $\rho(r)=\EE_{P_r}L_r=r^2/(2\sigma^2)$, and
\begin{align}
 K(t)
 &=\frac{r^2}{2\sigma^2}t(t+1)
 +\log\frac{\Theta_\sigma((1+t)r)}{\Theta_\sigma(0)},\notag\\
 \Ga(t)
 &=\log\frac{\Theta_\sigma(0)}
 {\Theta_\sigma((1+t)r)}=4e^{-2\pi^2\sigma^2}\sin^2(\pi(1+t)r)
 +O(e^{-4\pi^2\sigma^2}),
 \label{eq:discrete-gaussian-gap}
\end{align}
uniformly in $t$ as $\sigma\to\infty$.  The gap is $1/r$-periodic in $t$
and vanishes exactly when $(1+t)r\in\mathbb Z$; it is positive otherwise.
\end{enumerate}
\end{proposition}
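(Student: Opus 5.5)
Both parts follow the same template. First compute $K(t)=\log\EE_Q[(p_r/q)^{1+t}]$ directly from the mass functions. Then identify $\rho(r)=\sup_{t>0}K(t)/\{t(t+1)\}$ by showing $K(t)\le \EE_{P_r}L_r\cdot t(t+1)$ for all $t>0$ with equality in the limit $t\downarrow0$. The $t\downarrow 0$ limit holds in general: $K(t)/\{t(t+1)\}\to K'(0)=\EE_{P_r}L_r$, since the privacy loss is bounded (discrete Laplace) or has all exponential moments (discrete Gaussian). Finally, $\Ga(t)=\rho t(t+1)-K(t)$ is just the definition \eqref{eq:gap-definition}. The continuous extension to $t=0$ follows because $K$ is analytic near $0$ with $K(0)=0$.

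\textbf{Discrete Laplace.} With $q(z)=\frac{1-\lambda}{1+\lambda}\lambda^{|z|}$, the loss is $L=a(|z|-|z-r|)$ at output $z$. Splitting $\Z$ into $z\le0$, $1\le z\le r-1$ and $z\ge r$, the loss equals $-ar$, $a(2z-r)$ and $ar$ respectively. Summing $q(z)e^{(1+t)L}$ over each piece is a geometric series, and reindexing $j=r-z$ in the middle block gives the displayed $K(t)$. The mean $\EE_{P_r}L_r$ is obtained by differentiating at $t=0$, or by direct summation, and simplifies to $a\{r+(\lambda^r-1)\operatorname{csch}(a)\}$.

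The key inequality is $K(t)\le t(t+1)K'(0)$, and I expect this to be the main obstacle. My plan is to write $K(t)=\log\EE_Q[e^{(1+t)L}]$, where $L\in[-ar,ar]$ takes finitely many values, and to set $g(t):=K(t)/\{t(t+1)\}$. I would then show $g$ is nonincreasing, for instance by checking $K''(t)\le 2K'(0)$ on $t>0$. Since $K''(t)$ is the variance of $L$ under the tilted law, the task is to bound that tilted variance by $2\,\mathrm{KL}$.

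If that route is messy, I would argue by convexity instead. The function $h(t):=K(t)-t(t+1)\rho$ is a log-sum-exp of affine functions minus a quadratic, with $h(0)=h(-1)=0$ by the $t\mapsto -1-t$ symmetry of symmetric additive noise. Comparing $K$ with a quadratic through $-1$, $0$ and the slope at $0$ would then give the bound. For $r=1$ the loss takes only the values $\pm a$, and this reduces to an explicit $\cosh$ inequality. For general $r$, I would try to represent $L$ as a sum of $r$ independent-type increments, or compare term by term via Hoeffding-type bounds.

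\textbf{Discrete Gaussian.} Expanding the exponent, one finds that $(1+t)\frac{-(k-r)^2}{2\sigma^2}-t\frac{-k^2}{2\sigma^2}$ equals
\[
-\frac{(k-(1+t)r)^2}{2\sigma^2}+\frac{r^2t(1+t)}{2\sigma^2}.
\]
Summing over $k$ gives $K(t)=\frac{r^2}{2\sigma^2}t(t+1)+\log\{\Theta_\sigma((1+t)r)/\Theta_\sigma(0)\}$.

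The theta function satisfies $\Theta_\sigma(u)\le\Theta_\sigma(0)$, which follows from Poisson summation since
\[
\Theta_\sigma(u)=\sqrt{2\pi}\sigma\sum_n e^{-2\pi^2\sigma^2n^2}\cos(2\pi nu)
\]
has all coefficients positive. Hence $K(t)\le \frac{r^2}{2\sigma^2}t(t+1)$, so $\rho\le r^2/(2\sigma^2)$. Equality holds at $t=1/r-1$ if $r=1$; more generally it holds along the orders where $(1+t)r\in\Z$, and in any case in the limit $t\downarrow0$. The same limit gives $\EE_{P_r}L_r=K'(0)=r^2/(2\sigma^2)$, because $\Theta_\sigma'(r)=0$ at the integer $r$ by periodicity and evenness.

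This yields the gap formula $\Ga(t)=\log\{\Theta_\sigma(0)/\Theta_\sigma((1+t)r)\}$. Its $1/r$-periodicity in $t$ is inherited from the $1$-periodicity of $\Theta_\sigma$. The gap vanishes exactly when $(1+t)r\in\Z$, because the Fourier series above attains its maximum only at integers (the $n=1$ cosine term is strictly less than $1$ otherwise).

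For the asymptotics, put $\epsilon=e^{-2\pi^2\sigma^2}$. Poisson summation gives the ratio
\[
\frac{1+2\epsilon+O(\epsilon^4)}{1+2\epsilon\cos(2\pi u)+O(\epsilon^4)},
\]
and taking logs yields $2\epsilon(1-\cos 2\pi u)+O(\epsilon^2)=4\epsilon\sin^2(\pi u)+O(\epsilon^2)$, uniformly in $u$.
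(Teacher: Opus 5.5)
\textbf{Part (b).} Your argument is correct and is essentially the paper's. You complete the square to get the $\Theta_\sigma((1+t)r)$ form of $K$. You then use the positive Poisson-summation coefficients to see that $\Theta_\sigma$ is maximized exactly on $\mathbb Z$, and expand to first order in $e^{-2\pi^2\sigma^2}$.

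One slip: ``$t=1/r-1$ if $r=1$'' is the order $t=0$, which is not a positive order. The paper's witness is $t=1/r$, where $(1+t)r=r+1\in\mathbb Z$. Your $t\downarrow0$ limit already gives $\rho\ge r^2/(2\sigma^2)$, so nothing breaks.

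\textbf{Part (a).} Your computation of $K$ and of $\EE_{P_r}L_r$ is also correct.

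The genuine gap is the claim $\rho(r)=\EE_{P_r}L_r$. This is exactly the inequality $K(t)\le K'(0)\,t(t+1)$ for all $t>0$. You flag it as the main obstacle but never establish it for $r\ge2$.

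Your first route, $\sup_{u\ge0}K''(u)\le 2K'(0)$, is a legitimate sufficient condition, and it does close for $r=1$. There $K''(u)=a^2\operatorname{sech}^2\bigl(a(u+\tfrac12)\bigr)\le a^2\operatorname{sech}^2(a/2)$ for $u\ge0$. This is at most $2K'(0)=2a\tanh(a/2)$ because $a\le\sinh a$. For general $r$, however, $K''(u)$ is the variance of a tilted law on $r+1$ atoms, and you give no argument that it stays below $2K'(0)$.

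Your fallback routes do not fill this in:
\begin{itemize}
\item Comparing $K$ with the quadratic through $(-1,0)$ and $(0,0)$ with slope $K'(0)$ at $0$ merely restates the target, since that quadratic is $K'(0)t(t+1)$. On $t>0$, convexity of the log-sum-exp $K$ yields only lower bounds such as $K(t)\ge K'(0)t$.
\item Hoeffding's lemma gives $K(t)\le tK'(0)+\tfrac12a^2r^2t^2$. Evaluating the same bound at $t=-1$, where $K(-1)=0$, shows $K'(0)<\tfrac12a^2r^2$. So Hoeffding-type bounds can never reach the sharp constant.
\item $L_r(x)=\sum_{i=0}^{r-1}a(|x-i|-|x-i-1|)$ is a sum of dependent functions of the same output. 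There is no independent-increment decomposition to exploit.
\end{itemize}

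The paper does not prove this step either. It invokes the exact discrete-Laplace zCDP formula of Harrison and Manurangsi, which gives $\rho(r)=\EE_{P_r}L_r$ directly. To complete your proof, you must either cite that result or verify a sufficient condition such as $\sup_{u\ge0}K''(u)\le2K'(0)$ for every $r\ge2$ and $a>0$.
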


\begin{figure}[!htp]
\centering
\includegraphics[width=.85\linewidth]{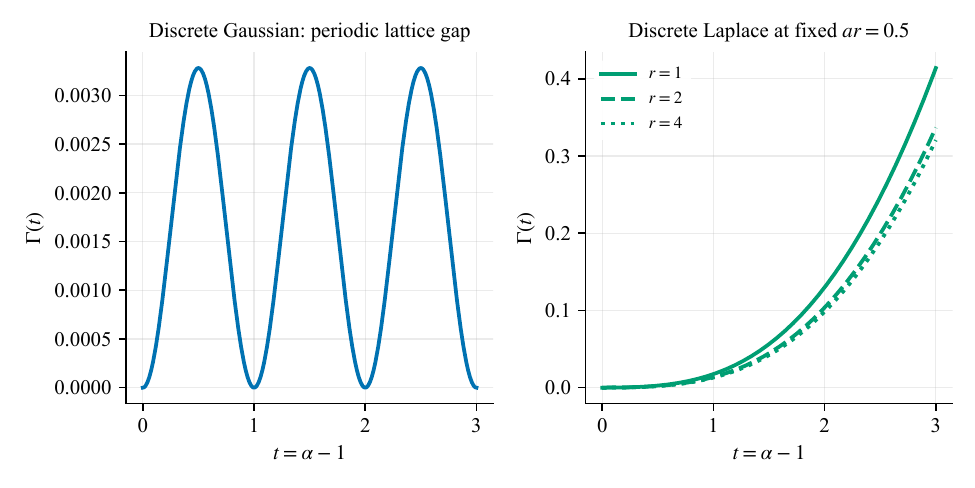}
\caption{The exact zCDP--RDP gaps $\Ga(t)$ of two lattice mechanisms are
compared over the same range of the order offset.  The periodic exact zeros
for discrete Gaussian noise and the positive discrete Laplace gaps for $t>0$
show how the accuracy of zCDP can vary with the RDP order.
Left: discrete Gaussian noise uses $r=1$ and
$\sigma=0.60$, with exact zeros whenever $(1+t)r\in\mathbb Z$.
Right: discrete Laplace noise uses the labeled shifts $r\in\{1,2,4\}$ at
fixed $ar=0.5$, so $a=0.5/r$; the curves are computed from
Proposition~\ref{prop:discrete-examples}.  Both panels use $0\leq t\leq3$ but
retain separate vertical scales.}
\label{fig:mechanisms}
\end{figure}

The discrete Gaussian gap is periodic in $t$, with period $1/r$.
Thus the same zCDP parameter is exact at some RDP orders and loses
information at others. The expansion in \eqref{eq:discrete-gaussian-gap}
also shows that this loss is exponentially small as $\sigma$ grows.
Figure~\ref{fig:mechanisms} fixes $\sigma=0.60$ so that the periodic
correction and its zeros remain visible, and compares it with the
discrete Laplace gaps computed from \eqref{eq:discrete-laplace-transform}.

\FloatBarrier

\section{Experiments}
\label{sec:real-data-experiments}

We test whether the predictions of Section~\ref{sec:practical-implications}
persist when both notions are calibrated to the same privacy level
$(\eps_0,\delta_0)=(3,10^{-5})$, and whether they affect utility, on an ACS
table release and on DP-SGD. Both accountants convert an RDP curve
$U=(U_\alpha)_{\alpha>1}$ to $(\eps,\delta)$-DP with
\citet{balle2020hypothesis}, optimized over the order:
\begin{equation}\label{eq:common-empirical-conversion}
 \mathcal C_\delta[U]
 :=\inf_{\alpha>1}\left\{
 U_\alpha
 -\frac{\log\delta+\log\alpha}{\alpha-1}
 +\log\frac{\alpha-1}{\alpha}
 \right\}.
\end{equation}
The RDP accountant applies $\mathcal C_\delta$ to the exact curve, and the
zCDP accountant to the zCDP envelope. Both apply the same conversion rule, isolating the effect of compression on noise calibration.

\subsection{ACS-sized Workload}
\label{subsec:acs-composition}

For each of the 20 states with the most counties, the workload consists of
five county-level median earnings by educational attainment from the ACS
\citep{Census2020ACSST5Y2020.S1501}, giving $d_i=5m_i$ queries for a state
with $m_i$ complete counties ($330\leq d_i\leq1060$), so
by~\eqref{eq:gap-additivity} the total gap is $d_i$ times the one-query gap.
This is a workload illustration built on published ACS
aggregates, not a Census person-level DP mechanism
(Appendix~\ref{app:acs-details}). For each state and noise family (Gaussian,
Gaussian-mixture, Laplace), we find the smallest per-query noise variance that
certifies $(\eps_0,\delta_0)$ under \eqref{eq:common-empirical-conversion},
$\operatorname{Var}(Z_{\rm RDP})$ and $\operatorname{Var}(Z_{\rm zCDP})$. The
noise is centered and independent across queries, so the per-query MSE equals
this variance; we report
\[
 \Delta\mathrm{MSE}:=\operatorname{Var}(Z_{\rm zCDP})
 -\operatorname{Var}(Z_{\rm RDP}).
\]

\begin{figure}[t]
\centering
\includegraphics[width=.74\linewidth]{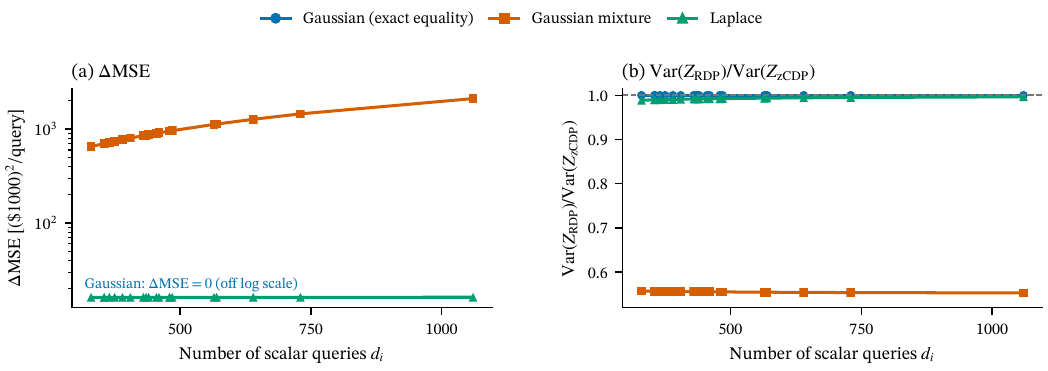}
\caption{Noise variance required at $(\eps_0,\delta_0)=(3,10^{-5})$ on
ACS-sized workloads. Left: $\Delta\mathrm{MSE}$ against the number of queries
$d_i$ (log scale; Gaussian is exactly $0$ and omitted). Right:
$\operatorname{Var}(Z_{\rm RDP})/\operatorname{Var}(Z_{\rm zCDP})$; smaller
means that retaining the RDP curve saves more noise. Release-level
distributions are in Figure~\ref{fig:acs-state-violins}.}
\label{fig:acs-accounting-utility}
\end{figure}

\textbf{Result.}
For Gaussian noise the two calibrations coincide ($\Delta\mathrm{MSE}=0$),
since the Gaussian RDP curve equals its envelope. For the Gaussian mixture,
retaining the RDP curve needs only $0.552$--$0.557$ of the zCDP variance,
about $45\%$ lower noise variance, with $\Delta\mathrm{MSE}$ from $652.3$ to $2115.7$ in
units of $(\$1000)^2$ per query. For Laplace noise the ratio is
$0.989$--$0.997$ and $\Delta\mathrm{MSE}$ is $16.170$--$16.205$. This ordering
matches the zero, quadratic, and quartic gaps of
Proposition~\ref{prop:common-mechanisms}, although the ratios also depend on
the expansion coefficients and the calibration. Monte Carlo releases with
$20{,}000$ paired draws per state and family agree with the analytic
$\Delta\mathrm{MSE}$ within $0.158\%$. Results at $(\eps,\delta)=(5,10^{-5})$
are in Appendix~\ref{app:acs-alternative-privacy}.

\begin{figure}[!htbp]
\centering
\includegraphics[width=.98\linewidth]{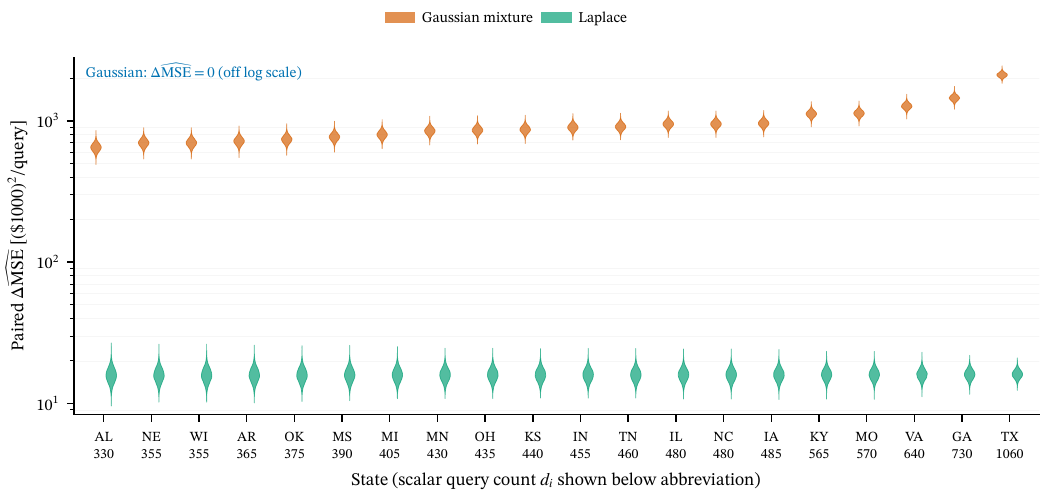}
\caption{The paired $\Delta\widehat{\mathrm{MSE}}$ distributions of the
two non-Gaussian families are compared across the 20 ACS
workloads.  Their separation reproduces the ordering in
Figure~\ref{fig:acs-accounting-utility}: Gaussian-mixture values exceed
Laplace values.  The horizontal axis gives states ordered by $d_i$; the
logarithmic vertical axis gives $\Delta\widehat{\mathrm{MSE}}$ in units of
$(\$1000)^2$ per query.  At each state, each family violin contains $20{,}000$
paired draws.  Gaussian is identically zero and is omitted from the
logarithmic axis.  The violin widths show the distribution of single-release
outcomes, not confidence intervals.}
\label{fig:acs-state-violins}
\end{figure}
\FloatBarrier

\subsection{DP-SGD}
\label{subsec:controlled-dpsgd}

DP-SGD is where zCDP's simple bookkeeping is most attractive and, by
Proposition~\ref{prop:subsampled-gaussian}, where it loses the most. We train
a two-convolutional-layer CNN on Fashion-MNIST \citep{xiao2017fashion} in
Opacus \citep{yousefpour2021opacus}, with Poisson sampling, per-example
clipping at $C=1$, and the sampled-Gaussian RDP bound
\citep{abadi2016deep,mironov2019sampled}. For each sampling rate $q$ and number
of steps $n$, we compute the smallest noise multiplier that certifies
$(\eps_0,\delta_0)$ under each notion and train with it; a third,
``clipping-only'' run adds no noise. One sweep fixes $n=500$ and varies
$q\in\{0.005,0.0075,0.01,0.02,0.05\}$; the other fixes $q=0.01$ and varies
$n\in\{250,500,1000,1500,2000\}$. The privacy guarantees rest on per-step RDP
bounds with adaptive composition; Proposition~\ref{prop:subsampled-gaussian}
only explains the trend. Each setting uses five seeds; within a seed, all runs
share initialization and minibatches, and the RDP and zCDP runs share noise
draws, so we report Student-$t$
intervals over paired differences. Optimizer settings are in
Appendix~\ref{app:reproducibility-details}.

\begin{figure}[t]
\centering
\includegraphics[width=.82\linewidth]{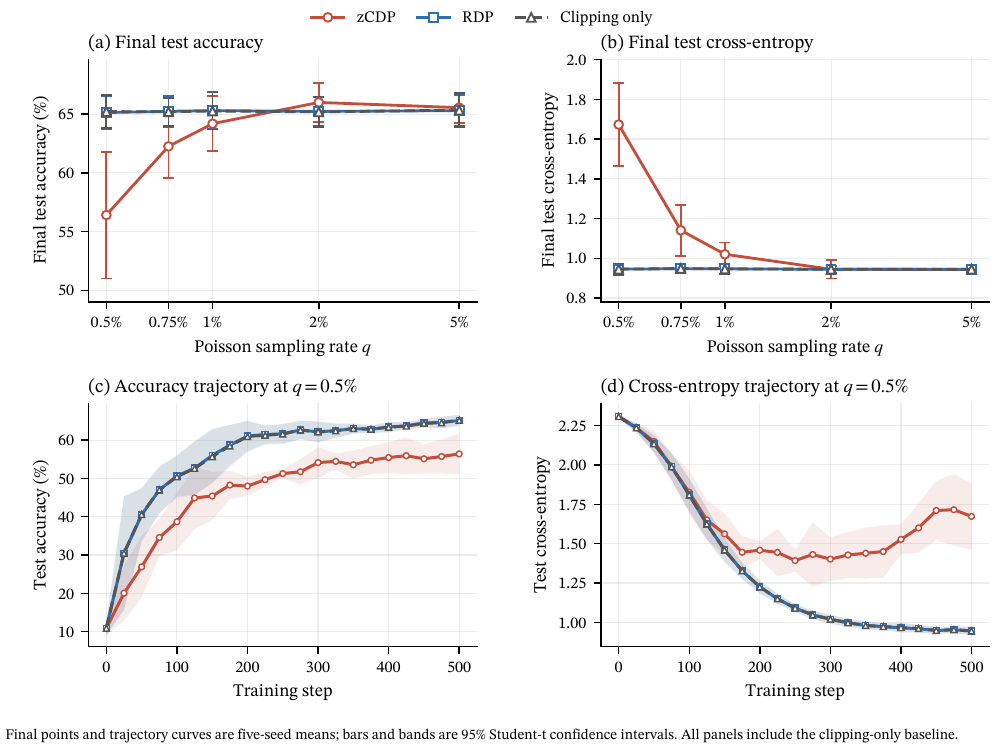}
\caption{The sampling-rate sweep at $n=500$ and $(\eps,\delta)=(3,10^{-5})$.
\textup{(a)}--\textup{(b)}: final test accuracy and cross-entropy of the
RDP-calibrated, zCDP-calibrated, and clipping-only runs as $q$ varies, with
the noise recalibrated for each $q$. \textup{(c)}--\textup{(d)}: the
corresponding training histories at $(q,n)=(0.005,500)$. Curves are
five-seed means; bars and bands are 95\% confidence intervals.}
\label{fig:dpsgd-q-main}
\label{fig:dpsgd-q-sweep}
\end{figure}

\paragraph{Sampling Rate.}
With RDP calibration, test accuracy stays near $65.2\%$ and cross-entropy
near $0.95$ across the $q$ sweep (Figure~\ref{fig:dpsgd-q-main}(a)--(b)), close to
the clipping-only run. With zCDP calibration, accuracy rises
from $56.408\%$ at $q=0.005$ to about $66\%$ for $q\geq0.02$ and cross-entropy falls
from $1.673$ to $0.943$; for $q\geq0.02$ the three methods agree within
seed-to-seed variability. At $q=0.005$, RDP calibration reaches $65.138\%$, a
paired gain of $8.73\pm6.10$ percentage points. The zCDP penalty shrinks as
$q$ grows, as
Proposition~\ref{prop:subsampled-gaussian} predicts. The seedwise log-$q$
trend is significant for test loss ($p=7.06\times10^{-4}$) and accuracy
($p=0.0108$). Since changing $q$ at fixed $n$
also changes the batch size and the number of epochs ($nq$ from $2.5$ to
$25$), this trend is consistent with, but not a direct consequence of,
Proposition~\ref{prop:subsampled-gaussian}.

\paragraph{Composition Count.}
In Figure~\ref{fig:dpsgd-composition-sweep}(a)--(b), RDP accuracy rises from
$61.866\%$ at $n=250$ to $70.522\%$ at $n=2000$; across all nine settings of
both sweeps, RDP accuracy is within $0.08$ percentage points of the
clipping-only run. zCDP accuracy peaks at
$66.354\%$ at $n=1000$ and falls to $64.600\%$ at $n=2000$, while its
cross-entropy falls from $1.120$ to $1.020$ at $n=500$ and then rises to
$2.031$. The paired test-loss difference grows with $n$ (seedwise log-$n$
trend $p=2.9\times10^{-5}$). The paired accuracy difference is not monotone,
and neither its trend ($p=0.161$) nor the endpoint contrast ($p=0.146$) is
significant: each horizon is recalibrated, and optimization turns noise into
accuracy nonlinearly, so cross-entropy is the cleaner measure here
(training histories in Figure~\ref{fig:dpsgd-composition-sweep}(c)--(d)). Full results
and sweeps at
$(\eps,\delta)=(5,10^{-5})$ are in Appendix~\ref{app:controlled-dpsgd}.

\section{Discussion and Limitations}
\label{sec:discussion}

Our results give a simple rule for choosing a privacy notion. The two
one-sided $(\eps,\delta)$-DP curves, the trade-off functions, the PLD, and,
when it is finite at one order, the exact RDP curve carry the same information
about a pair (Theorem~\ref{thm:equivalence}); choosing among them affects
interpretation and computation, not information. The usual two-sided
$(\eps,\delta)$ curve discards directional information, whereas zCDP is lossy even for
a fixed pair.
Its loss, measured by $\Ga$, is zero for Gaussian noise and grows for
mixture-type noise and under Poisson subsampling at small $q$
(Figure~\ref{fig:privacy-overview}). zCDP is therefore lossless for Gaussian noise
without subsampling; otherwise, retaining one of the equivalent curves can
reduce the noise needed for the same certified privacy.

For future work, a natural next step is to extend the analysis to
mechanism-level guarantees and adaptive composition.
Theorem~\ref{thm:equivalence} concerns one fixed pair: a mechanism-level RDP
bound takes the worst pair at each order and need not be the curve of any
single pair, so the theorem does not apply to it directly. The composition
results assume independent components and add their zCDP parameters, while
the local expansions hold for small shifts and bounded orders.
The gap isolates the RDP-to-zCDP step; studying the additional loss in
conversion to $(\eps,\delta)$-DP would help explain why the final bound can
be loose even when $\Ga=0$.
The experiments cover three noise families, ACS-sized workloads, and one
Fashion-MNIST model with five seeds. Broader experiments could examine
whether the observed utility patterns persist across other datasets,
optimizers, and deployments.
Another direction is rigorous numerical evaluation of $\rho$ when the
maximizing order has no closed form.

\bibliographystyle{abbrvnat}
\bibliography{ref}

@article{su2026sieve,
  title={A Sieve-Accelerated Quadrature Method for Exact Privacy Accounting in the 2020 {US Decennial Census}},
  author={Su, Buxin and Su, Weijie and Wang, Chendi},
  journal={arXiv preprint arXiv:2606.29835},
  year={2026}
}

@inproceedings{bun2016concentrated,
  title={Concentrated Differential Privacy: Simplifications, Extensions, and Lower Bounds},
  author={Bun, Mark and Steinke, Thomas},
  booktitle={Theory of Cryptography: 14th International Conference, {TCC} 2016-B, Part I},
  series={Lecture Notes in Computer Science},
  volume={9985},
  pages={635--658},
  year={2016},
  publisher={Springer},
  doi={10.1007/978-3-662-53641-4_24}
}

@inproceedings{Canonne2020discrete,
 author = {Canonne, Cl\'{e}ment L. and Kamath, Gautam and Steinke, Thomas},
 booktitle = {Advances in Neural Information Processing Systems},
 pages = {15676--15688},
 publisher = {Curran Associates, Inc.},
 title = {The Discrete {G}aussian for Differential Privacy},
 volume = {33},
 year = {2020},
}

@article{Dong2022Gaussian,
  title={Gaussian Differential Privacy},
  author={Dong, Jinshuo and Roth, Aaron and Su, Weijie J.},
  journal={Journal of the Royal Statistical Society: Series B (Statistical Methodology)},
  volume={84},
  number={1},
  pages={3--37},
  year={2022},
  publisher={Wiley},
  doi={10.1111/rssb.12454}
}

@article{dwork2014algorithmic,
  title={The algorithmic foundations of differential privacy},
  author={Dwork, Cynthia and Roth, Aaron},
  journal={Foundations and Trends in Theoretical Computer Science},
  volume={9},
  number={3--4},
  pages={211--407},
  year={2014},
  doi={10.1561/0400000042}
}

@inproceedings{dwork2006differential,
  author       = {Cynthia Dwork},
  title        = {Differential Privacy},
  booktitle    = {Automata, Languages and Programming, 33rd International Colloquium,
                  {ICALP} 2006, Venice, Italy, July 10-14, 2006, Proceedings, Part {II}},
  series       = {Lecture Notes in Computer Science},
  volume       = {4052},
  pages        = {1--12},
  publisher    = {Springer},
  year         = {2006},
  doi          = {10.1007/11787006_1}
}

@article{abowd20222020,
  title={The 2020 {C}ensus {D}isclosure {A}voidance {S}ystem {T}op{D}own {A}lgorithm},
  author={Abowd, John M and Ashmead, Robert and Cumings-Menon, Ryan and Garfinkel, Simson and Heineck, Micah and Heiss, Christine and Johns, Robert and Kifer, Daniel and Leclerc, Philip and Machanavajjhala, Ashwin and Moran, Brett and Sexton, William and Spence, Matthew and Zhuravlev, Pavel},
  journal={Harvard Data Science Review},
  note={Special Issue 2},
  year={2022},
  doi={10.1162/99608f92.529e3cb9}
}

@article{kifer2022bayesian,
  title={Bayesian and frequentist semantics for common variations of differential privacy: Applications to the 2020 {C}ensus},
  author       = {Daniel Kifer and
                  John M. Abowd and
                  Robert Ashmead and
                  Ryan Cumings{-}Menon and
                  Philip Leclerc and
                  Ashwin Machanavajjhala and
                  William Sexton and
                  Pavel Zhuravlev},
  journal={arXiv preprint arXiv:2209.03310},
  year={2022}
}

@article{kairouz2017composition,
  author       = {Peter Kairouz and
                  Sewoong Oh and
                  Pramod Viswanath},
  title        = {The Composition Theorem for Differential Privacy},
  journal      = {{IEEE} Trans. Inf. Theory},
  volume       = {63},
  number       = {6},
  pages        = {4037--4049},
  year         = {2017}
}

@article{balle2020privacy,
  author       = {Borja Balle and
                  Gilles Barthe and
                  Marco Gaboardi},
  title        = {Privacy Profiles and Amplification by Subsampling},
  journal      = {Journal of Privacy and Confidentiality},
  volume       = {10},
  number       = {1},
  year         = {2020},
  doi          = {10.29012/jpc.726}
}

@inproceedings{bun2018composable,
  title={Composable and Versatile Privacy via Truncated {CDP}},
  author={Bun, Mark and Dwork, Cynthia and Rothblum, Guy N and Steinke, Thomas},
  booktitle={Proceedings of the 50th Annual ACM SIGACT Symposium on Theory of Computing},
  pages={74--86},
  year={2018},
  publisher={Association for Computing Machinery},
  doi={10.1145/3188745.3188946}
}

@inproceedings{gopi2021numerical,
  author       = {Gopi, Sivakanth and
                  Lee, Yin Tat and
                  Wutschitz, Lukas},
  title        = {Numerical Composition of Differential Privacy},
  booktitle    = {Advances in Neural Information Processing Systems},
  volume       = {34},
  pages        = {11631--11642},
  publisher    = {Curran Associates, Inc.},
  year         = {2021}
}

@inproceedings{Koskela2020computing,
  author       = {Koskela, Antti and
                  J{\"{a}}lk{\"{o}}, Joonas and
                  Honkela, Antti},
  title        = {Computing Tight Differential Privacy Guarantees Using {FFT}},
  booktitle    = {Proceedings of the Twenty Third International Conference on Artificial Intelligence and Statistics},
  series       = {Proceedings of Machine Learning Research},
  volume       = {108},
  pages        = {2560--2569},
  publisher    = {{PMLR}},
  year         = {2020},
}

@article{doroshenko2022connect,
  title={Connect the Dots: Tighter Discrete Approximations of Privacy Loss Distributions},
  author={Doroshenko, Vadym and Ghazi, Badih and Kamath, Pritish and Kumar, Ravi and Manurangsi, Pasin},
  journal={Proceedings on Privacy Enhancing Technologies},
  volume={2022},
  number={4},
  pages={552--570},
  year={2022},
  doi={10.56553/popets-2022-0122},
}

@inproceedings{ghazi2022faster,
  title={Faster Privacy Accounting via Evolving Discretization},
  author={Ghazi, Badih and Kamath, Pritish and Kumar, Ravi and Manurangsi, Pasin},
  booktitle={Proceedings of the 39th International Conference on Machine Learning},
  series={Proceedings of Machine Learning Research},
  volume={162},
  pages={7470--7483},
  publisher={PMLR},
  year={2022},
}

@inproceedings{balle2020hypothesis,
  author       = {Borja Balle and
                  Gilles Barthe and
                  Marco Gaboardi and
                  Justin Hsu and
                  Tetsuya Sato},
  editor       = {Silvia Chiappa and
                  Roberto Calandra},
  title        = {Hypothesis Testing Interpretations and {R{\'e}nyi} Differential Privacy},
  booktitle    = {Proceedings of the Twenty Third International Conference on Artificial Intelligence and Statistics},
  series       = {Proceedings of Machine Learning Research},
  volume       = {108},
  pages        = {2496--2506},
  publisher    = {{PMLR}},
  year         = {2020},
}

@misc{Census2020ACSST5Y2020.S1501,
  author       = {{U.S. Census Bureau}},
  title        = {{S1501: Educational Attainment (American Community Survey
                  5-Year, 2020)}},
  year         = {2020},
  publisher    = {U.S. Census Bureau},
  howpublished = {\url{https://data.census.gov/table/ACSST5Y2020.S1501}},
  urldate      = {2025-10-17},
  note         = {Table ID: S1501; Version: 2016--2020 5-year estimates}
}

@inproceedings{mironov2017renyi,
  title={{R{\'e}nyi} Differential Privacy},
  author={Mironov, Ilya},
  booktitle={2017 IEEE 30th Computer Security Foundations Symposium (CSF)},
  pages={263--275},
  year={2017},
  publisher={IEEE},
  doi={10.1109/CSF.2017.11}
}

@article{su2025statistical,
  title={A {S}tatistical {V}iewpoint on {D}ifferential {P}rivacy: {H}ypothesis {T}esting, {R}epresentation, and {B}lackwell's Theorem},
  author={Su, Weijie J},
  journal={Annual Review of Statistics and Its Application},
  volume={12},
  number={1},
  pages={157--175},
  year={2025},
  publisher={Annual Reviews}
}

@inproceedings{zhu2022optimal,
  title={Optimal accounting of differential privacy via characteristic function},
  author={Zhu, Yuqing and Dong, Jinshuo and Wang, Yu-Xiang},
  booktitle={Proceedings of The 25th International Conference on Artificial Intelligence and Statistics},
  series={Proceedings of Machine Learning Research},
  volume={151},
  pages={4782--4817},
  year={2022},
  publisher={PMLR}
}

@article{su2025census,
author = {Buxin Su  and Weijie J. Su  and Chendi Wang },
title = {The 2020 {US} {D}ecennial {C}ensus is more private than you (might) think},
journal = {Proceedings of the National Academy of Sciences},
volume = {122},
number = {45},
pages = {e2500337122},
year = {2025},
doi = {10.1073/pnas.2500337122},
eprint = {https://www.pnas.org/doi/pdf/10.1073/pnas.2500337122}}

@inproceedings{wang2019subsampled,
  title={Subsampled {R}{\'e}nyi Differential Privacy and Analytical Moments Accountant},
  author={Wang, Yu-Xiang and Balle, Borja and Kasiviswanathan, Shiva Prasad},
  booktitle={Proceedings of the Twenty-Second International Conference on Artificial Intelligence and Statistics},
  pages={1226--1235},
  year={2019},
  editor={Chaudhuri, Kamalika and Sugiyama, Masashi},
  volume={89},
  series={Proceedings of Machine Learning Research},
  publisher={PMLR},
}

@inproceedings{zhu2019poission,
  title={Poission Subsampled {R}{\'e}nyi Differential Privacy},
  author={Zhu, Yuqing and Wang, Yu-Xiang},
  booktitle={Proceedings of the 36th International Conference on Machine Learning},
  pages={7634--7642},
  year={2019},
  editor={Chaudhuri, Kamalika and Salakhutdinov, Ruslan},
  volume={97},
  series={Proceedings of Machine Learning Research},
  publisher={PMLR},
}

@inproceedings{abadi2016deep,
  title={Deep learning with differential privacy},
  author={Abadi, Mart{\'i}n and Chu, Andy and Goodfellow, Ian and McMahan, H Brendan and Mironov, Ilya and Talwar, Kunal and Zhang, Li},
  booktitle={Proceedings of the 2016 ACM SIGSAC conference on computer and communications security},
  pages={308--318},
  year={2016}
}

@inproceedings{harrison2025exact,
  title={Exact {zCDP} Characterizations for Fundamental Differentially Private Mechanisms},
  author={Harrison, Charlie and Manurangsi, Pasin},
  booktitle={7th Symposium on Foundations of Responsible Computing (FORC 2026)},
  series={Leibniz International Proceedings in Informatics (LIPIcs)},
  volume={368},
  pages={3:1--3:18},
  publisher={Schloss Dagstuhl -- Leibniz-Zentrum f{\"u}r Informatik},
  year={2026},
  doi={10.4230/LIPIcs.FORC.2026.3},
}

@inproceedings{chourasia2025laplace,
  title={Laplace Transform Interpretation of Differential Privacy},
  author={Chourasia, Rishav and Javaid, Uzair and Sikdar, Biplab},
  booktitle={6th Symposium on Foundations of Responsible Computing (FORC 2025)},
  series={Leibniz International Proceedings in Informatics (LIPIcs)},
  volume={329},
  pages={11:1--11:23},
  publisher={Schloss Dagstuhl -- Leibniz-Zentrum f{\"u}r Informatik},
  year={2025},
  doi={10.4230/LIPIcs.FORC.2025.11},
  note={Extended version: arXiv:2411.09142},
}

@article{asoodeh2021three,
  title={Three Variants of Differential Privacy: Lossless Conversion and Applications},
  author={Asoodeh, Shahab and Liao, Jiachun and Calmon, Flavio P. and Kosut, Oliver and Sankar, Lalitha},
  journal={IEEE Journal on Selected Areas in Information Theory},
  volume={2},
  number={1},
  pages={208--222},
  year={2021},
  doi={10.1109/JSAIT.2021.3054692}
}

@inproceedings{riess2026optimal,
  title={Optimal conversion from {R{\'e}nyi} Differential Privacy to {$f$}-Differential Privacy},
  author={Riess, Anneliese and Gomez, Juan Felipe and {du Pin Calmon}, Flavio and Schnabel, Julia Anne and Kaissis, Georgios},
  booktitle={Proceedings of the 43rd International Conference on Machine Learning},
  year={2026},
  note={Accepted at ICML 2026; arXiv:2602.04562},
  url={https://arxiv.org/abs/2602.04562},
}

@article{xiao2017fashion,
  title={Fashion-{MNIST}: A Novel Image Dataset for Benchmarking Machine Learning Algorithms},
  author={Xiao, Han and Rasul, Kashif and Vollgraf, Roland},
  journal={arXiv preprint arXiv:1708.07747},
  year={2017},
  doi={10.48550/arXiv.1708.07747}
}

@article{yousefpour2021opacus,
  title={Opacus: User-Friendly Differential Privacy Library in {PyTorch}},
  author={Yousefpour, Ashkan and Shilov, Igor and Sablayrolles, Alexandre and Testuggine, Davide and Prasad, Karthik and Malek, Mani and Nguyen, John and Ghosh, Sayan and Bharadwaj, Akash and Zhao, Jessica and Cormode, Graham and Mironov, Ilya},
  journal={arXiv preprint arXiv:2109.12298},
  year={2021},
  doi={10.48550/arXiv.2109.12298}
}

@article{mironov2019sampled,
  title={{R{\'e}nyi} Differential Privacy of the Sampled {G}aussian Mechanism},
  author={Mironov, Ilya and Talwar, Kunal and Zhang, Li},
  journal={arXiv preprint arXiv:1908.10530},
  year={2019},
  doi={10.48550/arXiv.1908.10530}
}

@inproceedings{kaissis2024beyond,
  author       = {Georgios Kaissis and
                  Stefan Kolek and
                  Borja Balle and
                  Jamie Hayes and
                  Daniel Rueckert},
  title        = {Beyond the Calibration Point: Mechanism Comparison in Differential
                  Privacy},
  booktitle    = {Forty-first International Conference on Machine Learning, {ICML} 2024,
                  Vienna, Austria, July 21-27, 2024},
  series       = {Proceedings of Machine Learning Research},
  volume       = {235},
  pages        = {22840--22860},
  publisher    = {{PMLR} / OpenReview.net},
  year         = {2024},
}

@article{gomez2025reporting,
  title={Gaussian DP for reporting differential privacy guarantees in machine learning},
  author={Gomez, Juan Felipe and Kulynych, Bogdan and Kaissis, Georgios and Calmon, Flavio P and Hayes, Jamie and Balle, Borja and Honkela, Antti},
  journal={arXiv preprint arXiv:2503.10945},
  year={2025}
}

@inproceedings{koskela2021tight,
  author       = {Antti Koskela and
                  Joonas J{\"{a}}lk{\"{o}} and
                  Lukas Prediger and
                  Antti Honkela},
  editor       = {Arindam Banerjee and
                  Kenji Fukumizu},
  title        = {Tight Differential Privacy for Discrete-Valued Mechanisms and for
                  the Subsampled Gaussian Mechanism Using {FFT}},
  booktitle    = {The 24th International Conference on Artificial Intelligence and Statistics,
                  {AISTATS} 2021, April 13-15, 2021, Virtual Event},
  series       = {Proceedings of Machine Learning Research},
  volume       = {130},
  pages        = {3358--3366},
  publisher    = {{PMLR}},
  year         = {2021},
  url          = {http://proceedings.mlr.press/v130/koskela21a.html},
  bibsource    = {dblp computer science bibliography, https://dblp.org}
}

\clearpage
\appendix

\pdfbookmark[0]{Appendix contents}{appendix.contents}
\section*{Appendix Contents}
\begingroup
\small
\setlength{\parskip}{0pt}
\makeatletter
\renewcommand{\@pnumwidth}{2em}
\renewcommand{\@tocrmarg}{3em}
\newcommand{\appcontentssection}[2]{%
  \addvspace{0.65\baselineskip}%
  \@dottedtocline{1}{0pt}{1.8em}%
    {\bfseries\hyperref[#1]{\numberline{\ref*{#1}}#2}}%
    {\bfseries\pageref{#1}}%
}
\newcommand{\appcontentssubsection}[2]{%
  \addvspace{0.15\baselineskip}%
  \@dottedtocline{2}{1.8em}{2.6em}%
    {\hyperref[#1]{\numberline{\ref*{#1}}#2}}%
    {\pageref{#1}}%
}
\appcontentssection{app:proof-information-equivalence}{Proofs for Section~\ref*{sec:information-equivalence}}
\appcontentssubsection{app:proof-equivalence}{Proof of Theorem~\ref*{thm:equivalence}}
\appcontentssubsection{app:symmetric-additive}{Proof of Corollary~\ref*{cor:symmetric-additive}}
\appcontentssubsection{app:proof-finite-order-needed}{Proof of Example~\ref*{ex:finite-order-needed}}
\appcontentssubsection{app:proof-one-side-counterexample}{Proof of Example~\ref*{ex:one-side-counterexample}}
\appcontentssection{app:proof-gap-composition}{Proofs for Section~\ref*{sec:defect}}
\appcontentssubsection{app:proof-cher}{Proof of Lemma~\ref*{lem:cher}}
\appcontentssubsection{subsec:proof-pairwise-gap}{Proof of the conversion bounds in Section~\ref*{subsec:optimized-certificates}}
\appcontentssubsection{app:proof-optimized-gap}{Proof of Proposition~\ref*{prop:optimized-gap}}
\appcontentssubsection{app:proof-plrv-quartic}{Proof of Lemma~\ref*{lem:plrv-quartic}}
\appcontentssubsection{app:proof-local-expansion}{Proof of Theorem~\ref*{thm:sensitivity-gap}}
\appcontentssubsection{app:proof-composition}{Proof of Theorem~\ref*{thm:composition-extension}}
\appcontentssection{app:deferred-section4}{Proofs for Section~\ref*{sec:practical-implications}}
\appcontentssubsection{app:proof-localization}{Proof of Lemma~\ref*{lem:localization}}
\appcontentssubsection{app:proof-smooth-cumulants}{Proof of~(\ref*{eq:smooth-third-cumulant}) and~(\ref*{eq:smooth-fourth-cumulant})}
\appcontentssubsection{app:proof-common-mechanisms}{Proof of Proposition~\ref*{prop:common-mechanisms}}
\appcontentssubsection{app:proof-subsampled-gaussian}{Proof of Proposition~\ref*{prop:subsampled-gaussian}}
\appcontentssubsection{app:proof-zcdp-budget-allocations}{Proof of Proposition~\ref*{prop:zcdp-budget-allocations}}
\appcontentssubsection{app:proof-student-general}{Proof of Proposition~\ref*{prop:student-general}}
\appcontentssubsection{app:proof-gg-single}{Proof of Proposition~\ref*{prop:gg-single}}
\appcontentssubsection{app:proof-discrete-examples}{Proof of Proposition~\ref*{prop:discrete-examples}}
\appcontentssection{app:section4-additional-results}{Supplementary Figures and Numerical Details}
\appcontentssubsection{app:numerics}{Computational setup}
\appcontentssubsection{app:alternative-order-robustness}{Results at an alternative reporting order}
\appcontentssubsection{app:acs-details}{ACS workloads and additional results}
\appcontentssubsection{app:controlled-dpsgd}{DP-SGD settings and additional results}
\makeatother
\endgroup
\clearpage

\section{Proofs for Section~\ref{sec:information-equivalence}}
\label{app:proof-information-equivalence}

\subsection{Proof of Theorem~\ref{thm:equivalence}}
\label{app:proof-equivalence}

\begin{proof}[Proof of Theorem~\ref{thm:equivalence}]
For arbitrary $P$ and $Q$, take $\mu=P+Q$ and define
\[
 \ell(x)=\log\frac{dP/d\mu(x)}{dQ/d\mu(x)}.
\]
Set $\ell=+\infty$ where only $dP/d\mu$ is positive and $\ell=-\infty$
where only $dQ/d\mu$ is positive. For $X\sim P$, the privacy loss
$L=\ell(X)$ has no mass at $-\infty$; its mass at $+\infty$ is the mass of
the part of $P$ singular to $Q$. We use $e^{-\infty}=0$.
For $\gamma\geq1$, write
$H_\gamma(P\|Q)=\sup_A\{P(A)-\gamma Q(A)\}$, so that
$\delta_{P\|Q}(\eps)=H_{e^\eps}(P\|Q)$ for $\eps\geq0$.

Let $\mathcal E_{P,Q}$ be the closed convex testing-error region consisting of
pairs
\[
 \bigl(\EE_P[\phi],1-\EE_Q[\phi]\bigr),
 \qquad 0\leq\phi\leq1.
\]
The trade-off function $T_{P,Q}$ is the lower boundary of
$\mathcal E_{P,Q}$.  Since $(a,b)\mapsto(b,a)$ maps $\mathcal E_{P,Q}$
onto $\mathcal E_{Q,P}$, $T_{Q,P}$ is the generalized inverse of
$T_{P,Q}$, so $T_{P,Q}$ determines the pair $(T_{P,Q},T_{Q,P})$.
For $\gamma\geq1$, the two
hockey-stick divergences are equivalently the support values
\[
 H_\gamma(P\|Q)=1-\inf_{(a,b)\in\mathcal E_{P,Q}}(a+\gamma b),
 \qquad
 H_\gamma(Q\|P)=1-\inf_{(a,b)\in\mathcal E_{P,Q}}(\gamma a+b).
\]
As $\gamma$ ranges over $[1,\infty)$, the two families cover all positive supporting slopes. Convex duality therefore shows that the pair of one-sided $(\eps,\delta)$-DP curves and the pair of trade-off functions determine one another.

We next connect the two one-sided curves to privacy loss.  The hockey-stick
identity gives, for $x\geq0$,
\[
 \delta_{P\|Q}(x)
 =\EE_P\!\left[(1-e^{x-L})_+\right],
 \qquad
 \delta'_{P\|Q,+}(x)
 =-e^x\EE_P\!\left[e^{-L}\mathbf 1_{\{L>x\}}\right].
\]
Consequently,
\[
 P\{L>x\}=\delta_{P\|Q}(x)-\delta'_{P\|Q,+}(x),
 \qquad
 P\{L<-x\}=-e^{-x}\delta'_{Q\|P,+}(x).
\]
These strict-tail identities, together with normalization and
$P\{L=+\infty\}=\lim_{x\to\infty}\delta_{P\|Q}(x)$, recover the law of
$L$ under $P$.  Conversely, that law determines both one-sided curves via
\[
 \delta_{P\|Q}(\eps)=\EE_P[(1-e^{\eps-L})_+],
 \qquad
 \delta_{Q\|P}(\eps)
 =1-\EE_P[e^{-L}]
  +\EE_P[(e^{-L}-e^\eps)_+].
\]
The identity $D_{1+t}(P\|Q)=K(t)/t$ then shows that the privacy-loss law
determines the exact RDP curve.

For the converse, suppose $D_{\alpha_0}(P\|Q)<\infty$ for some
$\alpha_0>1$, and put $c:=\alpha_0-1$.  Since $K$ is convex with $K(0)=0$
and $K(c)<\infty$, $K$ is finite on $[0,c]$, and the RDP curve determines
$K(t)=tD_{1+t}(P\|Q)$ for $0<t<c$.  In particular $P\{L=+\infty\}=0$, so
$L$ is real-valued $P$-a.s.  Fix $t_0\in(0,c)$ and exponentially tilt the law of $L$ by the
density $e^{t_0L-K(t_0)}$.  The tilted law has moment-generating function
\[
 \EE[e^{uL_{t_0}}]
 =\exp\{K(t_0+u)-K(t_0)\},
 \qquad -t_0<u<c-t_0.
\]
This function is known from RDP and is finite on a neighborhood of zero, so
moment-generating-function uniqueness determines the tilted law.  Untilting
by
\[
 P\{L\in B\}
 =e^{K(t_0)}\EE\!\left[e^{-t_0L_{t_0}}
   \mathbf 1_{\{L_{t_0}\in B\}}\right]
\]
recovers the original privacy-loss law.  The preceding formulas recover both
one-sided curves, and convex duality for $\mathcal E_{P,Q}$ then recovers the
pair of trade-off functions.
\end{proof}

\subsection{Proof of Corollary~\ref{cor:symmetric-additive}}
\label{app:symmetric-additive}
\begin{corollary}[Centrally symmetric additive noise]
\label{cor:symmetric-additive}
Let $\theta$ be a real-valued query and let
$\mathcal M(D)=\theta(D)+Z$.  Fix ordered neighboring datasets $D,D'$, put
$s=\theta(D)-\theta(D')\in\R$, and translate both outputs by
$-\theta(D')$.  Suppose $Z\overset{d}{=}-Z$, and set
$P=\operatorname{Law}(s+Z)$, $Q=\operatorname{Law}(Z)$. The reflection $R_s(x)=s-x$ exchanges
$P$ and $Q$, so
\[
 \delta_{P\|Q}(\eps)=\delta_{Q\|P}(\eps),\qquad
 D_\alpha(P\|Q)=D_\alpha(Q\|P),\qquad
 T_{P,Q}=T_{Q,P}.
\]
The common one-sided $(\eps,\delta)$-DP curve, the exact $f$-DP curve, and
the law of $L$ determine one another and determine every exact RDP value
through $D_{1+t}(P\|Q)=K(t)/t$.  Conversely, if
$D_{\alpha_0}(P\|Q)<\infty$ for some real $\alpha_0>1$, the exact RDP curve
recovers the other three notions.
\end{corollary}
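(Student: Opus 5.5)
The plan is to prove the three identities through one measurable bijection, and then read the remaining claims off Theorem~\ref{thm:equivalence}. Let $R_s(x)=s-x$ on $\R$; it is an involution. Since $Z\overset{d}{=}-Z$, we get $s-Z\overset{d}{=}s+Z$, so $R_s$ pushes $Q=\operatorname{Law}(Z)$ forward to $P=\operatorname{Law}(s+Z)$. It also pushes $P$ forward to $\operatorname{Law}(s-(s+Z))=\operatorname{Law}(-Z)=Q$. Thus $(R_s)_\#P=Q$ and $(R_s)_\#Q=P$.

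Each of the three notions is invariant under applying a common measurable bijection to both measures.
\begin{itemize}
\item \emph{Curves.} The map $A\mapsto R_s^{-1}(A)$ is a bijection of Borel sets with $P(R_s^{-1}A)=Q(A)$ and $Q(R_s^{-1}A)=P(A)$. Hence
\[
\sup_A\{P(A)-e^\eps Q(A)\}=\sup_A\{Q(A)-e^\eps P(A)\},
\]
which is $\delta_{P\|Q}=\delta_{Q\|P}$.
\item \emph{Trade-off functions.} Tests $\phi$ correspond bijectively to $\phi\circ R_s$, with $\EE_P[\phi\circ R_s]=\EE_Q[\phi]$ and $\EE_Q[\phi\circ R_s]=\EE_P[\phi]$. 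The testing regions therefore satisfy $\mathcal E_{P,Q}=\mathcal E_{Q,P}$, so $T_{P,Q}=T_{Q,P}$.
\item \emph{R\'enyi divergences.} Take $\mu=P+Q$, which is $R_s$-invariant. The densities then satisfy $dQ/d\mu=(dP/d\mu)\circ R_s$ $\mu$-a.e., and substituting into the integral defining $D_\alpha$ exchanges the roles of $P$ and $Q$. As a consistency check, the privacy-loss function satisfies $\ell\circ R_s=-\ell$ a.e.
\end{itemize}

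The remaining claims follow from Theorem~\ref{thm:equivalence}. Since $\delta_{P\|Q}=\delta_{Q\|P}$, the pair in (a) is determined by the single common curve, and the pair in (b) is determined by the single curve $T_{P,Q}$. The theorem then says that the common one-sided curve, the exact $f$-DP curve $T_{P,Q}$, and the law of $L$ determine one another. Each determines the RDP values through $D_{1+t}(P\|Q)=K(t)/t$. Under $D_{\alpha_0}(P\|Q)<\infty$, the theorem's converse recovers all three.

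One point needs care: ``determine'' is stated over the class of pairs. Two symmetric pairs with equal single curves have equal pairs of curves, so the theorem applies directly. I expect no real obstacle. The only delicate point is the density identity under $R_s$ when $P$ and $Q$ are not mutually absolutely continuous. Working with the symmetric dominating measure $\mu=P+Q$ and the convention $\ell=\pm\infty$ on the singular parts handles this.
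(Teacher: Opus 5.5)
Your proposal is correct and follows essentially the same route as the paper: the involution $R_s$ swaps $P$ and $Q$ and preserves $\mu=P+Q$, so transporting sets, tests, and densities gives the three identities, and the common curve then supplies both inputs to Theorem~\ref{thm:equivalence}. Two of your steps are slightly more explicit versions of the paper's: the identification $\mathcal E_{P,Q}=\mathcal E_{Q,P}$ corresponds to its transport of tests, and the change of variables in the R\'enyi integral corresponds to its appeal to invariance under a bimeasurable map.
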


\begin{proof}[Proof of Corollary~\ref{cor:symmetric-additive}]
Set $R_s(x)=s-x$.  If $X=s+Z\sim P$, then
$R_s(X)=-Z\sim Q$ by central symmetry.  If $X=Z\sim Q$, then
$R_s(X)=s-Z\sim P$.  Thus $R_s$ exchanges $P$ and $Q$,
$R_s\circ R_s$ is the identity, and $R_s$ preserves $\mu=P+Q$.  By
uniqueness of Radon--Nikodym derivatives, versions may be chosen so that
$\left(dP/d\mu\right)\circ R_s=dQ/d\mu$ and
$\left(dQ/d\mu\right)\circ R_s=dP/d\mu$ $\mu$-almost everywhere.
Consequently,
$\ell\circ R_s=-\ell$, including when $\ell$ takes an extended value.
Transforming measurable sets by $R_s$ now gives
$\delta_{P\|Q}=\delta_{Q\|P}$.  Invariance of R\'enyi divergence under the
bimeasurable transformation $R_s$ similarly gives
$D_\alpha(P\|Q)=D_\alpha(Q\|P)$.
If a randomized test $\phi$ of $P$ against $Q$ has type I and type II errors
$(a,b)$, then $\phi\circ R_s$, viewed as a test of $Q$ against $P$, also has
error pair $(a,b)$.  Applying the reflection argument with $(Q,P)$ in place
of $(P,Q)$ gives
$T_{P,Q}=T_{Q,P}$.

Because the two one-sided $(\eps,\delta)$-DP curves coincide, the common
curve supplies both inputs required by Theorem~\ref{thm:equivalence}.
Theorem~\ref{thm:equivalence} gives the unconditional forward recoveries and,
under the finite-order condition stated in
Corollary~\ref{cor:symmetric-additive}, the converse recovery from the RDP
curve.
\end{proof}

\subsection{Proof of Example~\ref{ex:finite-order-needed}}
\label{app:proof-finite-order-needed}

\begin{proof}[Proof of Example~\ref{ex:finite-order-needed}]
The masses of $Q_w$ sum to one by the definition of
$\operatorname{Li}_w(e^{-1})$. At each $k\geq1$, the privacy loss is
$L_w=k$, which gives the mean and the hockey-stick divergence stated in
the example. For every $t>0$,
\[
 \EE_{X\sim P_w}\!\left[e^{tL_w}\right]
 =\frac{1}{\zeta(w)}\sum_{k=1}^\infty\frac{e^{tk}}{k^w}=+\infty.
\]
Hence the RDP curve is everywhere infinite for every $w>2$, in particular
for $w=3$ and $w=4$, whereas the forward curves differ because
$\delta_{P_3\|Q_3}(1)\neq\delta_{P_4\|Q_4}(1)$.
\end{proof}

\subsection{Proof of Example~\ref{ex:one-side-counterexample}}
\label{app:proof-one-side-counterexample}

\begin{proof}[Proof of Example~\ref{ex:one-side-counterexample}]
The masses on the nonnegative and negative support points sum to
$4c/3=8/11$ and $c/2=3/11$, respectively.  Thus both are full-support
probability mass functions.  If $W_i(k)=P_i\{k\}/Q_i\{k\}$, direct division
gives
\[
\begin{array}{cccc}
 &k\geq1&k=0&k\leq-1\\
 W_A(k)&4&1/4&1/2\\
 W_B(k)&4&1/3&1/3.
\end{array}
\]
For $\gamma=e^\eps\geq1$, recall that
$H_\gamma(P_i\|Q_i)=\delta_{P_i\|Q_i}(\eps)$. Since
$H_\gamma(P_i\|Q_i)=\sum_k q_i(k)(W_i(k)-\gamma)_+$, only $k\geq1$
contributes to the forward divergence when $\gamma\geq1$.  This gives the
common formula in the example.  Similarly,
$H_\gamma(Q_i\|P_i)=\sum_k q_i(k)(1-\gamma W_i(k))_+$ gives the two reverse
formulas.

Writing $F$ for the common forward one-sided $(\eps,\delta)$-DP curve and $R_A,R_B$ for
the reverse one-sided $(\eps,\delta)$-DP curves, direct subtraction gives
\[
 F-R_A=
 \begin{cases}
  c(\gamma-1)/6,&1\leq\gamma\leq2,\\
  c(4-\gamma)/12,&2\leq\gamma\leq4,
 \end{cases}
\]
and
\[
 F-R_B=
 \begin{cases}
  c(\gamma-1)/6,&1\leq\gamma\leq3,\\
  c(4-\gamma)/3,&3\leq\gamma\leq4.
 \end{cases}
\]
All curves vanish after their stated cutoffs, so the pointwise maxima
agree.  The strongest symmetric $f$-DP guarantee depends on the pair only
through this maximum \citep[Propositions~5--6]{Dong2022Gaussian}, so it also
agrees.  Finally,
\[
 \exp(D_2(P_i\|Q_i))=\sum_k q_i(k)W_i(k)^2.
\]
Substituting the table and the three region masses gives $265/88$ for pair
$A$ and $3$ for pair $B$.  The table also shows that both $W_i$ and
$W_i^{-1}$ are bounded by $4$.
\end{proof}

\FloatBarrier
\section{Proofs for Section~\ref{sec:defect}}
\label{app:proof-gap-composition}

\subsection{Proof of Lemma~\ref{lem:cher}}
\label{app:proof-cher}

\begin{lemma}[Hockey-stick Chernoff bound]\label{lem:cher}
For $x\in\R$ and $t>0$,
\[
 (1-e^x)_+\leq\frac{t^t}{(1+t)^{1+t}}e^{-tx}.
\]
\end{lemma}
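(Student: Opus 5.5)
The plan is a direct one-variable calculus argument. For $x\geq0$ the left side vanishes while the right side is positive, so only $x<0$ needs work. In that range I write $y=e^x\in(0,1)$. The inequality becomes
\[
 (1-y)\,y^t\leq\frac{t^t}{(1+t)^{1+t}},
\]
because $e^{-tx}=y^{-t}$ and multiplying both sides by $y^t>0$ preserves the direction. It therefore suffices to show that $g(y):=(1-y)y^t$ has maximum value $t^t/(1+t)^{1+t}$ on $[0,1]$.

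To locate the maximum, I compute $g'(y)=y^{t-1}\{t(1-y)-y\}=y^{t-1}\{t-(1+t)y\}$. This derivative is positive for $y<t/(1+t)$ and negative for $y>t/(1+t)$. Since $g(0)=g(1)=0$, the unique maximizer is $y^\star=t/(1+t)$.

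Evaluating at the maximizer gives
\[
 g(y^\star)=\frac{1}{1+t}\Bigl(\frac{t}{1+t}\Bigr)^t=\frac{t^t}{(1+t)^{1+t}},
\]
which proves the bound. Equality holds at $x=\log\{t/(1+t)\}$, so the constant $c_t$ is sharp.

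No step is a real obstacle. The only point needing care is the reduction for $x\geq0$, where the positive part is zero. This is exactly the form needed for~\eqref{eq:rdp-delta-bound}: one applies the lemma with $x=\eps-L$ inside $\delta_{P\|Q}(\eps)=\EE_P[(1-e^{\eps-L})_+]$ and takes expectations to obtain $c_t\EE_P[e^{tL}]e^{-t\eps}$.
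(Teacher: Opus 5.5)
Your proof is correct and follows the paper's argument: both dispose of $x\geq0$ trivially, substitute $y=e^x\in(0,1)$, and bound $y^t(1-y)$ by its maximum $t^t/(1+t)^{1+t}$, attained at $y=t/(1+t)$. Your explicit sign analysis of $g'$ and the remark that equality holds at $x=\log\{t/(1+t)\}$, so $c_t$ is sharp, are valid details the paper leaves implicit.
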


\begin{proof}[Proof of Lemma~\ref{lem:cher}]
If $x\geq0$, the left side is zero.  Now suppose $x<0$ and set
$y=e^x\in(0,1)$.  The function $y^t(1-y)$ is maximized at
$y=t/(1+t)$, where $y^t(1-y)$ equals $t^t/(1+t)^{1+t}$.  Therefore
\[
 (1-e^x)e^{tx}\leq\frac{t^t}{(1+t)^{1+t}},
\]
which is the stated inequality.
\end{proof}

\subsection{Proof of the conversion bounds in Section~\ref{subsec:optimized-certificates}}
\label{subsec:proof-pairwise-gap}

We prove the conversion bounds stated in
Section~\ref{subsec:optimized-certificates}.

Applying Lemma~\ref{lem:cher} with $x=\eps-L$ and taking expectation under $P$ gives
\[
 \delta_{P\|Q}(\eps)
 \leq \frac{t^t}{(1+t)^{1+t}}\exp\{K(t)-t\eps\}.
\]
Since $K(t)=tD_{t+1}(P\|Q)$, an order-$1+t$ bound
$D_{t+1}(P\|Q)\leq\bar D$ therefore implies
\[
 \delta_{P\|Q}(\eps)
 \leq \frac{t^t}{(1+t)^{1+t}}\exp\{t(\bar D-\eps)\}.
\]
This is the refined fixed-order conversion of
\citet{balle2020hypothesis}; dropping the prefactor recovers the original
conversion of \citet{mironov2017renyi}.  If
$\bar D=\widetilde\rho(1+t)$, optimization over $t>0$ gives
\eqref{eq:zcdp-delta-bound}.

Let $b(t)$ denote the logarithm of the objective in
\eqref{eq:zcdp-delta-bound}. Its first two derivatives are
\[
 b'(t)=\log\frac{t}{1+t}+\widetilde\rho(2t+1)-\eps,
 \qquad
 b''(t)=\frac{1}{t(1+t)}+2\widetilde\rho>0.
\]
For $\widetilde\rho>0$, the first derivative tends to $-\infty$ as
$t\downarrow0$ and to $+\infty$ as $t\to\infty$. Hence the minimizer is
unique and satisfies \eqref{eq:chernoff-stationarity}.
Finally, when $\rho>0$, substitute
$K(t)=\rho t(t+1)-\Ga(t)$ into \eqref{eq:rdp-delta-bound} at
$t=t_\eps(\rho)$ to obtain \eqref{eq:pairwise-transfer}.

\subsection{Proof of Proposition~\ref{prop:optimized-gap}}
\label{app:proof-optimized-gap}

\begin{proof}[Proof of Proposition~\ref{prop:optimized-gap}]
Since $F_{\rm R}(t_{\rm Z})\geq\eps_{\rm R}$,
\[
 \eps_{\rm Z}
 =F_{\rm R}(t_{\rm Z})+\frac{\Ga(t_{\rm Z})}{t_{\rm Z}}
 \geq\eps_{\rm R}+\frac{\Ga(t_{\rm Z})}{t_{\rm Z}}.
\]
Similarly, evaluating $F_{\rm Z}$ at $t_{\rm R}$ gives
\[
 \eps_{\rm Z}
 \leq F_{\rm Z}(t_{\rm R})
 =\eps_{\rm R}+\frac{\Ga(t_{\rm R})}{t_{\rm R}}.
\]
The final assertion follows from $\Ga\geq0$ and
$\Ga(t_{\rm R})/t_{\rm R}\leq\sup_{t\in I}\Ga(t)/t\leq\eta$, since
$t_{\rm R}\in I$.
\end{proof}

\subsection{Proof of Lemma~\ref{lem:plrv-quartic}}
\label{app:proof-plrv-quartic}

Lemma~\ref{lem:plrv-quartic} isolates the quantities needed for a
quantitative fourth-order expansion: the excess of $\rho(s)$ over
$\EE_{P_s}L_s$ and the third and fourth cumulants of privacy loss.  The lemma applies equally to
continuous and discrete mechanisms and retains an explicit fifth-order
remainder.

\begin{lemma}[Quantitative fourth-order identity for the CGF gap]\label{lem:plrv-quartic}
Fix $T>0$.  Let $L_s$ be the privacy-loss variables of mutually absolutely continuous
pairs, with CGF $K$, so that $K(-1)=0$.  As usual, the dependence
of $K$ on $s$ is implicit.  Suppose, for all sufficiently small
$s\neq0$, that $\rho(s)<\infty$,
$K\in C^5([-1,T])$, and
there is an $M_T<\infty$, independent of $s$, such that
\begin{equation}\label{eq:plrv-fourth-remainder}
 \sup_{-1\leq u\leq T}|K^{(5)}(u)|\leq M_T|s|^5.
\end{equation}
Then, for $0<t\leq T$,
\begin{align*}
 \rho(s)t(t+1)-K(t)
 &=(\rho(s)-\kappa_1(L_s))t(t+1)
   -\frac{\kappa_3(L_s)}6t^2(t+1)\\
 &\quad+\frac{\kappa_4(L_s)}{24}t^2(1-t^2)+E_s(t).
\end{align*}
Moreover,
\begin{equation}\label{eq:plrv-fifth-error}
 |E_s(t)|\leq\frac{M_T|s|^5}{120}(t^2+t^5),
 \qquad 0<t\leq T.
\end{equation}
\end{lemma}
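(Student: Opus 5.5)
The plan is a single Taylor expansion of $K$ about $u=0$ with a Lagrange fifth-order remainder. The constraint $K(-1)=0$ then eliminates the variance $\kappa_2(L_s)$. Nothing beyond calculus and the definition of the CGF is needed.

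\textbf{Step 1: the two anchor values and the cumulants.} First record $K(0)=0$. Next, by mutual absolute continuity,
\[
 K(-1)=\log\EE_{P_s}[e^{-L_s}]=\log Q_s(\Omega)=0.
\]
Since $K\in C^5([-1,T])$ and $0$ is an interior point of this interval, the moment-generating function is finite near $0$. Hence $K^{(j)}(0)=\kappa_j(L_s)$ for $j=1,\ldots,4$ by the standard identification of CGF derivatives with cumulants.

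\textbf{Step 2: Taylor expansion at $t$ and at $-1$.} For every $u\in[-1,T]$, Taylor's theorem gives
\[
 K(u)=\kappa_1u+\frac{\kappa_2}{2}u^2+\frac{\kappa_3}{6}u^3+\frac{\kappa_4}{24}u^4+R(u),
 \qquad |R(u)|\leq\frac{M_T|s|^5}{120}|u|^5,
\]
using the bound \eqref{eq:plrv-fourth-remainder} on $K^{(5)}$ along the segment between $0$ and $u$. Evaluating at $u=-1$ and using $K(-1)=0$ yields
\[
 \frac{\kappa_2}{2}=\kappa_1+\frac{\kappa_3}{6}-\frac{\kappa_4}{24}-R(-1).
\]

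\textbf{Step 3: substitute and regroup.} Substituting this expression for $\kappa_2/2$ into the expansion at $u=t$ and collecting terms gives
\[
 K(t)=\kappa_1t(t+1)+\frac{\kappa_3}{6}t^2(t+1)-\frac{\kappa_4}{24}t^2(1-t^2)-t^2R(-1)+R(t).
\]
Subtracting from $\rho(s)t(t+1)$ produces exactly the displayed identity, with
\[
 E_s(t)=t^2R(-1)-R(t).
\]

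\textbf{Step 4: the remainder bound.} Applying the Taylor remainder bound at $u=-1$ and at $u=t$ gives
\[
 |E_s(t)|\leq\frac{M_T|s|^5}{120}\,(t^2+t^5),
\]
which is \eqref{eq:plrv-fifth-error}.

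The only point needing care is Step 1: the justification that $K(-1)=0$, and that the derivatives at $0$ are the cumulants. This is where mutual absolute continuity and the $C^5$ hypothesis on an interval containing $0$ in its interior are used. The remaining steps are algebra. Theorem~\ref{thm:sensitivity-gap} is then this lemma restated, with $\kappa_1(L_s)=\EE_{P_s}L_s$.
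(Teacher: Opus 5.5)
Your proof is correct and follows the paper's argument essentially line for line. Both expand $K$ to fifth order at $0$, with the remainder controlled by the bound on $K^{(5)}$, evaluate at $u=-1$ using $K(-1)=0$ to eliminate $\kappa_2(L_s)$, and arrive at the same remainder $E_s(t)=t^2R(-1)-R(t)$; your explicit justification of $K(-1)=0$ from $Q_s\ll P_s$, and of $K^{(j)}(0)=\kappa_j(L_s)$, is a small addition the paper leaves implicit.
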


\begin{proof}[Proof of Lemma~\ref{lem:plrv-quartic}]
Taylor's theorem and \eqref{eq:plrv-fourth-remainder} give
\[
 K(u)=\sum_{j=1}^4\frac{\kappa_j(L_s)}{j!}u^j+\Delta_s(u),
 \qquad
 |\Delta_s(u)|\leq\frac{M_T|s|^5}{120}|u|^5
\]
for $-1\leq u\leq T$.  Applying this identity at $u=-1$ gives
\[
 \kappa_1(L_s)=\frac{\kappa_2(L_s)}{2}
 -\frac{\kappa_3(L_s)}{6}
 +\frac{\kappa_4(L_s)}{24}+\Delta_s(-1).
\]
Substituting this identity into the cumulant expansion of $K(t)$ gives
\[
 \kappa_1(L_s)t(t+1)-K(t)
 =-\frac{\kappa_3(L_s)}6t^2(t+1)
  +\frac{\kappa_4(L_s)}{24}t^2(1-t^2)
  +t^2\Delta_s(-1)-\Delta_s(t).
\]
Thus $E_s(t)=t^2\Delta_s(-1)-\Delta_s(t)$.  The two Taylor-remainder bounds give
\eqref{eq:plrv-fifth-error}; adding
$(\rho(s)-\kappa_1(L_s))t(t+1)$ proves the identity.
\end{proof}

\subsection{Proof of Theorem~\ref{thm:sensitivity-gap}}
\label{app:proof-local-expansion}

\begin{proof}[Proof of the local expansion theorem, Theorem~\ref{thm:sensitivity-gap}]
Apply Lemma~\ref{lem:plrv-quartic} with
$\kappa_1(L_s)=\EE_{P_s}L_s$. Its identity is
\eqref{eq:unified-profile-cumulant-expansion}, and
\eqref{eq:plrv-fifth-error} gives
\eqref{eq:one-pair-fifth-order-bound}.
\end{proof}

\subsection{Proof of Theorem~\ref{thm:composition-extension}}
\label{app:proof-composition}

\begin{proof}[Proof of Theorem~\ref{thm:composition-extension}]
Independence gives
\[
 tD_{t+1}(P^{(n)}\|Q^{(n)})
 =\sum_{i=1}^n K_i(t)=\bK(t).
\]
Subtracting this identity from
$\rho_{\mathrm{sum}}t(t+1)=\sum_i\rho_i t(t+1)$ proves
\eqref{eq:gap-additivity}.
For $n$ copies of one pair, $\bK=nK$, so
$\sup_{t>0}\bK(t)/\{t(t+1)\}=n\rho$ and $\bGa=n\Ga$.
\end{proof}

\FloatBarrier
\section{Proofs for Section~\ref{sec:practical-implications}}
\label{app:deferred-section4}

For the mechanism calculations, it is useful to divide out powers of the
shift. For $s\ne0$, define
\begin{equation}\label{eq:profile-defect}
 U_s:=\frac{L_s-\EE_{P_s}L_s}{s},
 \qquad
 A_s:=\frac{2\{\rho(s)-\EE_{P_s}L_s\}}{s^2}\geq0.
\end{equation}
The nonnegativity follows from
$\lim_{t\downarrow0}K(t)/\{t(t+1)\}=K'(0)=\EE_{P_s}L_s$
and $\rho(s)=\sup_{t>0}K(t)/\{t(t+1)\}$.
These definitions give $\rho(s)-\EE_{P_s}L_s=s^2A_s/2$,
$\kappa_3(L_s)=s^3\kappa_3(U_s)$, and
$\kappa_4(L_s)=s^4\kappa_4(U_s)$.

\subsection{Proof of Lemma~\ref{lem:localization}}
\label{app:proof-localization}

Throughout the
additive-noise calculations, $s$ is a signed output shift.  For even $f$,
$K$ depends on $s$ only through $r=|s|$, and we use $r$ when rescaling the
order.

For the following endpoint-localization lemma, specialize the location
convention~\eqref{eq:location-pair-convention} to a noise variable $Z$ with
density $f$.  It gives sufficient conditions for $\rho(s)=K'(0)=\EE_{P_s}L_s$;
Proposition~\ref{prop:student-general} verifies them for Student-$t$
noise.  The privacy-loss and CGF definitions give
\[
 L_s=\log f(Z)-\log f(Z+s),\qquad
 K(t)=\log\EE_{Z\sim f}[e^{tL_s}],
\]
and define the location score and its CGF
\[
 S:=-(\log f)'(Z),\qquad
 \Lambda_S(\tau):=\log\EE_{Z\sim f}[e^{\tau S}].
\]
These equalities hold wherever the expectations are finite.  Since
$\EE_{Z\sim f}[S]=0$, the fourth cumulant used
below is
\[
 \kappa_4(S)
 =\EE_{Z\sim f}[S^4]
 -3\bigl(\EE_{Z\sim f}[S^2]\bigr)^2.
\]

\begin{lemma}[Uniform control over large RDP orders]\label{lem:localization}
Assume that $f$ is even. For each nonzero shift $s$, put $r=|s|$.
Symmetry makes the CGFs at shifts $s$ and $r$ identical.
For the current shift, define
\[
 \mathcal H_s(\tau)
 =\frac{K(\tau/r)}{\tau(\tau+r)},\qquad \tau>0.
\]
Assume that $\mathcal H_s$ extends continuously to $\tau=0$ with
$\mathcal H_s(0)=K'(0)/r^2$, and that
\begin{equation}\label{eq:lan-assumptions}
 \frac{K'(0)}{s^2}
 \longrightarrow\frac12\EE_{Z\sim f}[S^2],\hspace{2em}
 K(t)=\frac12\EE_{Z\sim f}[S^2]s^2t(t+1)+o(|s|^2)
\end{equation}
uniformly on every compact $J\subset(0,\infty)$.

For endpoint localization, assume that differentiation under the defining
integral is valid through the required order and that the following
uniform bound holds for every sufficiently small fixed
$\eta>0$:
\begin{equation}\label{eq:tail-localization}
 \limsup_{s\to0}\sup_{\tau\geq\eta}\mathcal H_s(\tau)
 \leq\sup_{\tau\geq\eta}\frac{\Lambda_S(\tau)}{\tau^2}.
\end{equation}
Near $(r,\tau)=(0,0)$ suppose the Taylor remainder is uniform and satisfies
\begin{equation}
 \mathcal H_s(\tau)-\mathcal H_s(0)
 =\frac{\kappa_4(S)}{24}\tau(\tau+r)
   +O\!\left(\tau(\tau+r)^2\right).
\label{eq:H4}
\end{equation}
Assume
$\kappa_4(S)<0$ and, for some $a_4>0$,
\begin{equation}\label{eq:quant-subg-4}
 \frac12\EE_{Z\sim f}[S^2]-\frac{\Lambda_S(\tau)}{\tau^2}
 \geq a_4\min(\tau^2,1),\qquad \tau>0.
\end{equation}
Then
\[
 \rho(s)=K'(0)=\EE_{P_s}L_s
\]
for all sufficiently small nonzero $s$.  Hence $A_s=0$ exactly.
\end{lemma}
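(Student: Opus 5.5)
Since $\rho(s)=\sup_{t>0}K(t)/\{t(t+1)\}$ and the limit of this ratio as $t\downarrow0$ is $K'(0)$, it suffices to show that $K(t)/\{t(t+1)\}\le K'(0)$ for every $t>0$ once $s$ is small. Substitute $t=\tau/r$. Then
\[
\frac{K(t)}{t(t+1)}=\frac{r^2K(\tau/r)}{\tau(\tau+r)}=r^2\mathcal H_s(\tau),
\]
so the goal becomes $\sup_{\tau>0}\mathcal H_s(\tau)\le\mathcal H_s(0)$ for all sufficiently small $r$. I would split $\tau\in(0,\infty)$ into a \emph{near region} $(0,\eta]$ and a \emph{far region} $[\eta,\infty)$, with $\eta>0$ small and fixed. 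The near region is handled with \eqref{eq:H4} and the far region with \eqref{eq:tail-localization} and \eqref{eq:quant-subg-4}.

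\textbf{Near region.} Apply \eqref{eq:H4}. Uniformly for $0<\tau\le\eta$ and small $r$,
\[
\mathcal H_s(\tau)-\mathcal H_s(0)\le\tau(\tau+r)\Bigl\{\frac{\kappa_4(S)}{24}+C(\tau+r)\Bigr\}.
\]
Because $\kappa_4(S)<0$, we can choose $\eta$ and $r_0$ so that $C(\eta+r_0)<|\kappa_4(S)|/24$. The bracket is then negative, and hence $\mathcal H_s(\tau)\le\mathcal H_s(0)$ on $(0,\eta]$. This step fixes $\eta$.

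\textbf{Far region.} Set $I:=\tfrac12\EE_f[S^2]$. With $\eta$ fixed, \eqref{eq:quant-subg-4} gives
\[
\sup_{\tau\ge\eta}\frac{\Lambda_S(\tau)}{\tau^2}\le I-a_4\min(\eta^2,1)<I.
\]
Combining this with \eqref{eq:tail-localization} yields $\sup_{\tau\ge\eta}\mathcal H_s(\tau)\le I-a_4\min(\eta^2,1)/2$ for all small $s$. On the other hand, the first condition in \eqref{eq:lan-assumptions} says that $\mathcal H_s(0)=K'(0)/r^2\to I$. So for small $s$ we get $\mathcal H_s(0)>I-a_4\min(\eta^2,1)/2$, which exceeds the far supremum.

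\textbf{Conclusion.} The two regions together give $\rho(s)=r^2\mathcal H_s(0)=K'(0)$. Finally, $K'(0)=\EE_{P_s}L_s$ by differentiation under the integral at $0$, which is assumed valid. Therefore $A_s=0$.

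\textbf{Main obstacle.} The delicate point is the order of quantifiers. The constant $\eta$ must be chosen from the near-region Taylor bound first. The $\limsup$ in \eqref{eq:tail-localization} and the convergence of $\mathcal H_s(0)$ are then used only for that fixed $\eta$. This requires that the $O(\cdot)$ constant in \eqref{eq:H4} is uniform over a neighborhood of $(0,0)$, which is exactly what the hypothesis provides. The second condition in \eqref{eq:lan-assumptions} (uniform on compacts) is not needed beyond a consistency check.
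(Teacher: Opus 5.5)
Your proposal is correct and follows the paper's proof essentially step for step. The paper also rescales to $K(t)/\{t(t+1)\}=r^2\mathcal H_s(\tau)$ and fixes $\eta$ from the near region, where \eqref{eq:H4} and $\kappa_4(S)<0$ give $\mathcal H_s(\tau)<\mathcal H_s(0)$; it absorbs the remainder into $-\kappa_4(S)\tau(\tau+r)/48$, which is equivalent to your condition $C(\eta+r_0)<|\kappa_4(S)|/24$. In the far region it likewise uses \eqref{eq:tail-localization} and \eqref{eq:quant-subg-4} to keep the supremum a fixed distance below $\tfrac12\EE_{Z\sim f}[S^2]$, which is the limit of $\mathcal H_s(0)$.
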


\begin{proof}[Proof of Lemma~\ref{lem:localization}]
We begin with the factorization used for endpoint localization.  The
likelihood identity gives $K(0)=K(-1)=0$.  Therefore, the definition of
$\mathcal H_s$ gives the exact factorization
\[
 K(t)=r^2t(t+1)\mathcal H_s(rt).
\]
Moreover, $K'(0)=\EE_{P_s}L_s$.  Dividing the numerator and denominator
in the definition of $\mathcal H_s(\tau)$ by $\tau$ and letting $\tau\to0$
gives $\mathcal H_s(0)=K'(0)/r^2$, as stated.

Choose $\eta>0$ small enough that, for all
$|s|$ sufficiently small, the absolute value of the remainder in
\eqref{eq:H4} is bounded by
\[
 -\frac{\kappa_4(S)}{48}\tau(\tau+r)
\]
whenever $0<\tau\leq\eta$.  It follows that
\[
 \mathcal H_s(\tau)-\mathcal H_s(0)
 \leq \frac{\kappa_4(S)}{48}\tau(\tau+r)<0,
 \qquad 0<\tau\leq\eta.
\]
For $\tau\geq\eta$, \eqref{eq:tail-localization} and
\eqref{eq:quant-subg-4} put the supremum a fixed distance below
$\tfrac12\EE_{Z\sim f}[S^2]$.  Because
$\mathcal H_s(0)=K'(0)/r^2$ tends to
$\tfrac12\EE_{Z\sim f}[S^2]$, the region $\tau\geq\eta$ also lies
below $\mathcal H_s(0)$ for all sufficiently small nonzero $s$.  Hence
$r^2\sup_{\tau>0}\mathcal H_s(\tau)=K'(0)$, which gives
$\rho(s)=K'(0)=\EE_{P_s}L_s$.

This completes the proof.
\end{proof}

The hypotheses of Lemma~\ref{lem:localization} can be checked from $K$ and
$\Lambda_S$ without
first evaluating $\rho$.  For a $C^5$ log-density, a common integrable
bound on the required derivatives justifies the Taylor expansions in
\eqref{eq:H4}.  The uniform bound in
\eqref{eq:tail-localization} must be
verified separately for each mechanism, as for the Student-$t_\nu$ family in
Proposition~\ref{prop:student-general}.  Nondifferentiable log-densities
require piecewise calculations.  The sign and uniform-control conditions exclude a
zero fourth cumulant and rule out a competing second-order maximum away
from the origin.

\subsection{Derivation sketch for~(\ref{eq:smooth-third-cumulant}) and~(\ref{eq:smooth-fourth-cumulant})}
\label{app:proof-smooth-cumulants}

For a smooth location family, continue to use the location-score notation
$S$ introduced above.  Suppose $\log f$ is
five times continuously differentiable, differentiation under the relevant
expectations is dominated through fifth order, the required score moments are
finite, and the integration-by-parts boundary terms vanish.  Taylor expansion
of $L_s=\log f(Z)-\log f(Z+s)$ and integration by parts then give
\begin{align}
 \kappa_3(L_s)
 &=\kappa_3(S)s^3+\frac12\kappa_4(S)s^4+O(|s|^5),
 \label{eq:smooth-third-cumulant}\\
 \kappa_4(L_s)
 &=\kappa_4(S)s^4+O(|s|^5).
 \label{eq:smooth-fourth-cumulant}
\end{align}
Indeed, the integration-by-parts identities
\begin{align*}
 \EE[(\log f)''(Z)]
 &=-\EE[((\log f)'(Z))^2],\\
 \EE[((\log f)'(Z))^2(\log f)''(Z)]
 &=-\frac13\EE[((\log f)'(Z))^4]
\end{align*}
give the
fourth-order coefficient in \eqref{eq:smooth-third-cumulant}.  These
identities also show why a kink in the log-density can change the
fourth-order coefficient.

\subsection{Proof of Proposition~\ref{prop:common-mechanisms}}
\label{app:proof-common-mechanisms}
\paragraph{Exact CGFs.}
In the setting of Proposition~\ref{prop:common-mechanisms}, the following hold
for every $s\in\R$ and $t>0$. For Gaussian noise,
\begin{equation}\label{eq:gaussian-exact-cgf}
 K(t)=\frac{t(t+1)}{2\sigma^2}s^2.
\end{equation}
For the Gaussian mixture, write $Z=m\xi+G$ with $\xi$ uniform on $\{-1,1\}$
and $G\sim N(0,1)$ independent; then
\begin{equation}\label{eq:mixture-exact-cgf}
 K(t)=\log\EE\!\left[
  \exp\{t(sZ+s^2/2)\}
  \left\{\frac{\cosh(mZ)}{\cosh(m(Z+s))}\right\}^{t}
 \right].
\end{equation}
For unit-scale Laplace noise,
\begin{equation}\label{eq:laplace-exact-cgf}
 K(t)=\log\!\left(
 \frac{(1+t)e^{t|s|}+te^{-(1+t)|s|}}{1+2t}\right).
\end{equation}

\begin{proof}[Proof of Proposition~\ref{prop:common-mechanisms}]
For part \textup{(a)}, completing the square in the Gaussian likelihood
ratio gives~\eqref{eq:gaussian-exact-cgf} and the stated optimal parameter.  Under
$P_s$,
\[
 L_s=\frac{sZ}{\sigma^2}+\frac{s^2}{2\sigma^2},
 \qquad Z\sim N(0,\sigma^2),
\]
so $A_s=0$, $U_s=Z/\sigma^2$, and
$\kappa_3(U_s)=\kappa_4(U_s)=0$.  The CGF is quadratic, so the remainder in
the local expansion theorem also vanishes.  The theorem therefore gives
$\Ga(t)=0$ exactly.

For part \textup{(b)}, the mixture density is proportional to
$e^{-x^2/2}\cosh(mx)$, so
\[
 (\log f_m)'(x)=-x+m\tanh(mx),\qquad
 (\log f_m)''(x)=-1+m^2\operatorname{sech}^2(mx).
\]
Evaluating $\log\{f_m(Z)/f_m(Z+s)\}$ gives~\eqref{eq:mixture-exact-cgf}.
Integration by parts gives
\[
 \EE[((\log f_m)'(Z))^2]
 =1-m^2\EE[\operatorname{sech}^2(mZ)].
\]
Generating $Z+s$
by first shifting the unit-variance Gaussian component $G$ by $s$ and then
adding the independent term $m\xi$ shows by data processing that $\rho(s)\leq s^2/2$.
The density is even, so both the CGF and $\rho(s)$ are even in $s$.
For $z\geq0$,
\[
 \log\frac{f_m(z)}{f_m(z+|s|)}
 \geq |s|z+\frac{s^2}{2}-m|s|,
 \qquad f_m(z)\geq\frac12\phi(z-m).
\]
Consequently,
\[
 e^{K(t)}
 \geq\frac12e^{s^2t(t+1)/2}
 \Pbb_{Y\sim N(m+t|s|,1)}[Y\geq0].
\]
After division by $t(t+1)$ and passage to $t\to\infty$, this proves the
matching lower bound $\rho(s)\geq s^2/2$.

Put
\[
 S_m:=-(\log f_m)'(Z)=Z-m\tanh(mZ),
 \qquad I_m:=\EE[S_m^2].
\]
The integration-by-parts identity above gives
$I_m=1-m^2\EE[\operatorname{sech}^2(mZ)]$.  Since $f_m$ is even,
$\EE_{P_s}L_s$ is even in $s$, and its second-order expansion is
\[
 \EE_{P_s}L_s=\frac{I_m}{2}s^2+O(s^4).
\]
Together with $\rho(s)=s^2/2$, this yields
\[
 A_s=m^2\EE[\operatorname{sech}^2(mZ)]+O(s^2).
\]
Moreover, $S_m$ is symmetric.  The smooth-location cumulant expansions
\eqref{eq:smooth-third-cumulant}--\eqref{eq:smooth-fourth-cumulant} give
\[
 \kappa_3(U_s)=\tfrac12\kappa_4(S_m)s+O(s^2)=O(s),
 \qquad
 \kappa_4(U_s)=\kappa_4(S_m)+O(|s|)=O(1).
\]
Substitution in the local expansion theorem,
Theorem~\ref{thm:sensitivity-gap}, proves
\eqref{eq:gaussian-mixture-leading}.

For part \textup{(c)}, symmetry makes the CGF depend on the shift only through
$|s|$.  Splitting at the two kinks gives~\eqref{eq:laplace-exact-cgf}.
The exact value $\rho(s)=|s|+e^{-|s|}-1$ is the continuous-Laplace
formula established by \citet{harrison2025exact}; differentiating the exact
CGF at zero also gives $\EE_{P_s}L_s=\rho(s)$ and hence $A_s=0$.
For $s>0$, differentiation of the same transform gives
\[
 \kappa_3(U_s)=-s+O(s^2),
 \qquad
 \kappa_4(U_s)=-2+O(s).
\]
Symmetry gives $U_{-s}\overset d=-U_s$, so the signed-$s$ form of the first
display is $\kappa_3(U_s)=-s+O(s^2)$, while the fourth-cumulant remainder is
$O(|s|)$.  The local expansion theorem therefore gives
\begin{align*}
 \Ga(t)
 &=\frac{s^4}{6}t^2(t+1)
   -\frac{s^4}{12}t^2(1-t^2)+O(|s|^5)\\
 &=\frac{[t(t+1)]^2}{12}s^4+O(|s|^5),
\end{align*}
which proves item \emph{(c)}.

It remains to verify the common remainder hypothesis used above.  It is
immediate for Gaussian noise.  For the Gaussian mixture and $|s|\leq s_0$,
\[
 |L_s|\leq |s|\bigl(|Z|+m+s_0/2\bigr),
\]
and the mixture has uniformly bounded exponentially tilted moments of every
fixed order.  For Laplace noise,
\[
 |L_s|\leq|s|.
\]
Thus the fifth absolute centered moments under every tilt with
$-1\leq u\leq T$ are $O(|s|^5)$. Since $K^{(5)}(u)$ is the fifth
cumulant under the tilt, it is also $O(|s|^5)$ uniformly on this interval.
The apparent singularities in the displayed Laplace formulas
are removable because these bounded privacy-loss variables have positive
smooth moment-generating functions on the real line.  Hence all three
families satisfy the hypotheses of the local expansion theorem.

\end{proof}

\subsection{Proof of Proposition~\ref{prop:subsampled-gaussian}}
\label{app:proof-subsampled-gaussian}

\begin{proof}[Proof of Proposition~\ref{prop:subsampled-gaussian}]
Fix $z>0$ and $\|\mathbf v\|=C$, let $Q=N(0,z^2C^2I)$, and, for $0\leq q\leq1$, set
$P_q=(1-q)Q+qN(\mathbf v,z^2C^2I)$.  Let $W$ be the likelihood ratio
of $N(\mathbf v,z^2C^2I)$ with respect to $Q$.  Under $Q$,
\[
 \log W=\frac{G}{z}-\frac{1}{2z^2},
 \qquad G\sim N(0,1),
\]
and hence, for $\alpha\geq1$,
\begin{equation}\label{eq:gaussian-lr-moment}
 \EE_Q[W^\alpha]
 =\exp\left\{\frac{\alpha(\alpha-1)}{2z^2}\right\}.
\end{equation}
Put $W_q=1-q+qW=dP_q/dQ$.  Convexity of
$x\mapsto x^\alpha$ gives
\[
 \EE_Q[W_q^\alpha]
 \leq1-q+q\EE_Q[W^\alpha]
 \leq \exp\left\{\frac{\alpha(\alpha-1)}{2z^2}\right\}.
\]
Thus the one-step optimal zCDP parameter is at most $1/(2z^2)$.  Since
$W_q\geq qW$,
\[
 K(t)\geq(t+1)\log q+\frac{t(t+1)}{2z^2}.
\]
Dividing by $t(t+1)$ and letting $t\to\infty$ proves
$\rho(P_q,Q)=1/(2z^2)$ for every $q>0$.  With $z=z_n$, $1/(2z_n^2)=\rho_{\mathrm{sum}}/n$,
which proves the one-step identities in \eqref{eq:subsampled-one-step-gap}.

At $t=1$,
\[
 e^{K(1)}=\EE_Q[W_q^2]
 =1+q^2\EE_Q[(W-1)^2]
 =1+\left(e^{1/z^2}-1\right)q^2,
\]
which, with $e^{1/z_n^2}=e^{2\rho_{\mathrm{sum}}/n}$, proves the formula for $\Ga(1)$.

It remains to establish monotonicity.  For fixed $z$ and $t>0$, write
\[
 M_t(q):=\EE_Q\!\left[(1+q(W-1))^{t+1}\right].
\]
Then $M_t'(0)=(t+1)\EE(W-1)=0$, whereas, for $0<q<1$,
\[
 M_t''(q)
 =t(t+1)\EE_Q\!\left[
 (W-1)^2\{1+q(W-1)\}^{t-1}\right]>0.
\]
Hence $M_t(q)$, and therefore $K(t)$ viewed as a
function of $q$, is strictly increasing.  The one-step gap is consequently
strictly decreasing.  At $q=1$,
the pair is an ordinary Gaussian shift and its gap is zero.  Continuity of
the expectation at $q=0$ gives the displayed one-step right-hand limit.

For part~\textup{(b)}, set $z=z_n$, so that
$1/(2z_n^2)=\rho_{\mathrm{sum}}/n$.  Exact product additivity gives
$\bK(t)=nK(t)$, and its normalized supremum is
$n/(2z_n^2)=\rho_{\mathrm{sum}}$.  Thus the standard-composition parameter is optimal
for the product pair, and item \emph{(b)} follows.
The monotonicity and endpoint statements follow immediately from
part~\textup{(a)}.
\end{proof}

\subsection{Proof of Proposition~\ref{prop:zcdp-budget-allocations}}
\label{app:proof-zcdp-budget-allocations}

\begin{proof}[Proof of Proposition~\ref{prop:zcdp-budget-allocations}]
Independence makes the product likelihood ratio the product of the component
likelihood ratios, so their log moment-generating functions add.  Hence
$\bGa(t)=\sum_i\Phi_t(\rho_i)$ exactly.  If $\Phi_t$ is convex, Jensen's inequality and
$\sum_i\rho_i=\rho_{\mathrm{sum}}$ give
\[
 \bGa(t)
 \geq
 n\Phi_t\!\left(\frac{\rho_{\mathrm{sum}}}{n}\right).
\]
Because $\psi(0)=0$ and the single-pair gap is zero there, the chord inequality between
$0$ and $\rho_{\mathrm{sum}}$ gives
\[
 \Phi_t(\rho_i)
 \leq\frac{\rho_i}{\rho_{\mathrm{sum}}}\Phi_t(\rho_{\mathrm{sum}}).
\]
Summing this inequality proves the upper bound in
\eqref{eq:convex-zcdp-allocation-bounds}.  Concavity reverses Jensen's
inequality and the chord inequality, and therefore reverses both bounds.
The equality statements follow from the strict forms of the same two
inequalities.

For the unit-scale Laplace illustration, parameterize the selected inverse
branch by the magnitude $r\geq0$.  Here
$x=\rho(r)=r+e^{-r}-1$, equivalently $r=\psi(x)$.  Direct differentiation of the exact
CGF~\eqref{eq:laplace-exact-cgf} gives, for $r>0$,
\[
 \left.
 \frac{d^2}{dx^2}\Phi_1(x)
 \right|_{x=\rho(r)}
 =\frac{2e^{2r}\{2e^{3r}+6e^{2r}+3e^r+1\}}
 {(2e^{3r}+1)^2}>0.
\]
The right limit at $r=0$ is $8/3$, so
$\Phi_1$ is strictly convex throughout the interval
used in Figure~\ref{fig:zcdp-budget-allocation}.  For the Gaussian-mixture
panel, $\psi(x)=\sqrt{2x}$, and strict concavity of $\Phi_1$ on the
displayed interval is supported numerically, but not proved, by negative
centered finite differences on the plotted grid.
\end{proof}

\subsection{Proof of Proposition~\ref{prop:student-general}}
\label{app:proof-student-general}

\begin{proof}[Proof of Proposition~\ref{prop:student-general}]
Fix $\nu$ and retain the notation $Z,P_s,Q$ and $\rho_\nu(s)$ from the
proposition.  Write
$\ell_\nu=\log f_\nu$.  The location score is
\[
 S_\nu:=-\ell_\nu'(Z)=\frac{(\nu+1)Z}{\nu+Z^2},
 \qquad
 |S_\nu|\leq B_\nu:=\frac{\nu+1}{2\sqrt\nu}.
\]
The density is even, so the CGF and $\rho_\nu(s)$ are even in $s$.  Put
$r=|s|$.  The mean-value theorem gives
$|L_s|\leq B_\nu r$.  Consequently,
\[
 0\leq\frac{K(t)}{t(t+1)}\leq\frac{B_\nu r}{t+1},
\]
which tends to zero as $t\to\infty$.  The normalized CGF is continuous on
compact $t$-intervals and has the finite endpoint
$K'(0)=\EE_{P_s}L_s$, so $\rho_\nu(s)<\infty$ for every $s\in\R$.

To localize the envelope for all sufficiently small nonzero $s$, put
$Y_\nu=Z^2/(\nu+Z^2)$.  The beta representation of a Student random variable
gives $Y_\nu\sim\operatorname{Beta}(1/2,\nu/2)$ and hence, for $j\geq1$,
\begin{equation}\label{eq:student-score-moments}
 \mu_{2j,\nu}:=\EE[S_\nu^{2j}]
 =\frac{(\nu+1)^{2j}}{\nu^j}
  \frac{(1/2)_j(\nu/2)_j}{((\nu+1)/2)_{2j}}.
\end{equation}
In particular,
\begin{equation}\label{eq:student-score-cumulants}
 I_\nu:=\mu_{2,\nu}=\frac{\nu+1}{\nu+3},
 \qquad
 \kappa_4(S_\nu)
 =-\frac{18(\nu+1)^2(\nu^2+4\nu-1)}
 {\nu(\nu+3)^2(\nu+5)(\nu+7)}<0.
\end{equation}
Comparison with the even moments of $N(0,I_\nu)$ gives
\[
 \frac{\mu_{2j+2,\nu}/\mu_{2j,\nu}}{(2j+1)I_\nu}
 =\frac{(\nu+1)(\nu+3)(2j+\nu)}
 {\nu(4j+\nu+1)(4j+\nu+3)}<1,
 \qquad j\geq1,
\]
where the difference between denominator and numerator is
$2j(8\nu j+3\nu^2+4\nu-3)>0$ for $\nu\geq1$.
Because $S_\nu$ is symmetric and bounded, termwise comparison of the entire
moment series yields
\[
 \EE[e^{\tau S_\nu}]<e^{I_\nu\tau^2/2},
 \qquad \tau\ne0.
\]
Equation~\eqref{eq:student-score-cumulants} supplies a quadratic separation
near $\tau=0$, continuity supplies uniform separation on compact intervals
away from zero, and
$\log\EE[e^{\tau S_\nu}]\leq B_\nu\tau$ controls large $\tau$.
Thus the quantitative sub-Gaussian condition
\eqref{eq:quant-subg-4} holds.

All derivatives of positive order of $\ell_\nu$ are bounded.  Uniformly in
$Z$,
\[
 L_r=-\sum_{j=1}^5\frac{r^j}{j!}\ell_\nu^{(j)}(Z)+O(r^6),
\]
so $L_r/r\to S_\nu$ uniformly.  Expanding the log moment-generating function
and using the usual integration-by-parts identities gives
\eqref{eq:lan-assumptions} and \eqref{eq:H4}, uniformly on the stated compact
sets.  Symmetry makes the CGFs at shifts $s$ and $r=|s|$ equal, and
$K(\tau/r)\to\log\EE[e^{\tau S_\nu}]$ locally uniformly in $\tau$.  The bound
\[
 0\leq\frac{K(\tau/r)}{\tau(\tau+r)}
 \leq\frac{B_\nu}{\tau+r}
\]
controls $\tau\geq R$ uniformly once $R$ is large, while local uniform
convergence controls each interval $[\eta,R]$.  This proves
\eqref{eq:tail-localization}.  Lemma~\ref{lem:localization}
therefore implies
\[
 \rho_\nu(s)=\EE_{P_s}L_s
\]
for every sufficiently small nonzero $s$.

Thus $A_s=0$.  Symmetry gives
$\kappa_3(S_\nu)=0$, and
\eqref{eq:smooth-third-cumulant}--\eqref{eq:smooth-fourth-cumulant}, followed
by the local expansion theorem, Theorem~\ref{thm:sensitivity-gap}, gives
\[
 \Ga(t)
 =-\frac{\kappa_4(S_\nu)}{24}[t(t+1)]^2s^4+O(|s|^5).
\]
The coefficient obtained from \eqref{eq:student-score-cumulants} is exactly
$c_{4,\nu}$ in \eqref{eq:student-general-gap}.  Bounded derivatives justify
six uniform differentiations in $s$.  Since both $K(t)$ and
$\EE_{P_s}L_s=\rho_\nu(s)$ are even functions of $s$, the gap is even and its
fifth-order Taylor coefficient vanishes.  The remainder is therefore
$O(|s|^6)$.

Finally,
\[
 c_{4,\nu}'
 =-\frac{3(\nu+1)p(\nu)}
 {4\nu^2(\nu+3)^3(\nu+5)^2(\nu+7)^2},
\]
where
\[
 p(\nu)=\nu^6+8\nu^5-19\nu^4-288\nu^3
 -613\nu^2-72\nu-105.
\]
The coefficients of $p$ have one sign change, so Descartes' rule shows that
$p$ has a unique positive root.  Since $p(1)<0$ and $p(\nu)\to\infty$,
this root, numerically $6.013052641\ldots$, is the unique maximizer on
$[1,\infty)$.  Direct expansion gives
$c_{4,\nu}=3/(4\nu)-9/\nu^2+O(\nu^{-3})$.
\end{proof}

\subsection{Proof of Proposition~\ref{prop:gg-single}}
\label{app:proof-gg-single}

\begin{proof}[Proof of Proposition~\ref{prop:gg-single}]
The density $f_\beta$ is even, so the CGF and
$\rho_\beta(s)$ are even functions of the shift, and
$U_{-s}\overset d=-U_s$.  Put $r=|s|$ when only the shift
magnitude is relevant.
Under $P_s$, write
\[
 L_s=\frac{|Z+s|^\beta-|Z|^\beta}{\beta}.
\]
For $1<\beta<2$, the mean-value inequality and Young's inequality give,
for $t\geq1$,
\[
 tL_s
 \leq tr(|Z|+r)^{\beta-1}
 \leq\frac{|Z|^\beta}{2\beta}+C_{\beta,r}(1+t^\beta).
\]
Integration against $f_\beta$ therefore yields
$K(t)\leq C'_{\beta,r}(1+t^\beta)$, so
$K(t)/\{t(t+1)\}\to0$ as $t\to\infty$.  Continuity on compact
order intervals and the finite endpoint $K'(0)=\EE_{P_s}L_s$ prove
$\rho_\beta(s)<\infty$.  For $\beta=1$, $|L_s|\leq r$ gives
the same conclusion.  For $\beta=2$, completing the square gives
$K(t)=s^2t(t+1)/2$.

The absolute moments of $Z$ are
\[
 \EE|Z|^p
 =\beta^{p/\beta}
   \frac{\Gamma((p+1)/\beta)}{\Gamma(1/\beta)},
 \qquad p>-1.
\]
For this proof, define
\[
 S_\beta(z):=\operatorname{sgn}(z)|z|^{\beta-1},\qquad
 I_\beta:=\EE[S_\beta(Z)^2],\qquad
 J_\beta:=\EE[S_\beta(Z)^4].
\]
Below, an unqualified $S_\beta$ or $S_\beta'$ inside an expectation is
evaluated at $Z$.
The moment formula gives
\[
 I_\beta=\beta^{2-2/\beta}
 \frac{\Gamma(2-1/\beta)}{\Gamma(1/\beta)},\qquad
 J_\beta=\beta^{4-4/\beta}
 \frac{\Gamma(4-3/\beta)}{\Gamma(1/\beta)}.
\]
In particular, $J_\beta-3I_\beta^2=-24\chi_\beta$.  To prove the sign of
$\chi_\beta$, put
$x=1/\beta\in[1/2,1]$ and
\[
 \mathcal C(x):=\frac{J_\beta}{I_\beta^2}
 =\frac{\Gamma(4-3x)\Gamma(x)}{\Gamma(2-x)^2}.
\]
The integral representation of the digamma function gives
\[
 \frac{d}{dx}\log \mathcal C(x)
 =\int_0^\infty
 \frac{3e^{-(4-3x)u}-e^{-xu}-2e^{-(2-x)u}}
      {1-e^{-u}}\,du.
\]
For $x<1$, the numerator equals
\[
 e^{-xu}(y^2-1)(3y^2+1)<0,
 \qquad y=e^{-(1-x)u}\in(0,1).
\]
Thus $\mathcal C$ decreases from $\mathcal C(1/2)=3$ to
$\mathcal C(1)=1$, proving
$J_\beta\leq3I_\beta^2$ and hence $\chi_\beta\geq0$.

Define the difference quotient
\[
 V_s(z)
 :=\frac{|z+s|^\beta-|z|^\beta}{\beta s}.
\]
Then $L_s=sV_s(Z)$ and, by
\eqref{eq:profile-defect},
\[
 U_s=V_s(Z)-\EE[V_s(Z)],
\]
and
$V_s(Z)\to S_\beta(Z)$ in every fixed $L^p$.
Consequently,
\[
 \kappa_4(U_s)=J_\beta-3I_\beta^2+o(1).
\]
For $1<\beta<2$, write
$S_\beta'(z)=(\beta-1)|z|^{\beta-2}$ for $z\ne0$.  For $k=1,2,3$, split the
integral defining $\EE[V_s^k]$ into $|z|\leq2|s|$ and
$|z|>2|s|$.  On the first region, both $V_s$ and
$S_\beta$ are $O(|s|^{\beta-1})$, so the absolute value of its contribution to
$s^{-1}\{\EE[V_s^k]-\EE[S_\beta^k]\}$ is
$O(|s|^{k(\beta-1)})=o(1)$.  On the second region, Taylor's formula gives
\[
 V_s(z)=S_\beta(z)+\frac{s}{2}S_\beta'(z)
 +o\!\left(|s|\,|z|^{\beta-2}\right).
\]
The required dominating powers are integrable because, for $k\leq3$,
$S_\beta^{k-1}S_\beta'$ behaves at zero as
$|z|^{(k-1)(\beta-1)+\beta-2}$, whose exponent is greater than $-1$.
Hence
\[
 \left.\frac{d}{ds}\EE[V_s^k]\right|_{s=0}
 =\frac{k}{2}\EE[S_\beta^{k-1}S_\beta'],\qquad k=1,2,3.
\]
Since $\EE[S_\beta]=\EE[S_\beta^3]=0$, differentiating
$\kappa_3(X)=\EE X^3-3(\EE X)(\EE X^2)+2(\EE X)^3$ gives
\[
 \lim_{s\to0}\frac{\kappa_3(U_s)}{s}
 =\frac32\{\EE[S_\beta^2S_\beta']
               -I_\beta\EE[S_\beta']\}.
\]
Integration by parts yields
$\EE[S_\beta']=I_\beta$ and
$\EE[S_\beta^2S_\beta']=J_\beta/3$.  Hence
\begin{equation}\label{eq:gg-cumulant-limits}
 \kappa_3(U_s)
 =-12\chi_\beta s+o(|s|),
 \qquad
 \kappa_4(U_s)=-24\chi_\beta+o(1).
\end{equation}
For $\beta=1$, differentiation of the exact Laplace transform gives
$\kappa_3(U_s)=-s+O(s^2)$ and
$\kappa_4(U_s)=-2+O(|s|)$, the same identity.  At $\beta=2$ both
cumulants vanish.

It remains to verify the fifth-derivative hypothesis. For the pair
$(P_s,Q)$, let $P_{s,u}$ have density
$dP_{s,u}/dP_s=\exp\{uL_s-K(u)\}$. Then $K^{(5)}(u)$ is the fifth
cumulant of $L_s$ under $P_{s,u}$. For $0<|s|\leq1$,
\[
 |L_s|\leq |s|G_\beta(Z),\qquad
 G_\beta(z):=(|z|+1)^{\beta-1}.
\]
Let $C_T=\max(1,T)$.  Since $\beta-1<\beta$,
\[
 \int (1+G_\beta(z)^5)e^{C_T G_\beta(z)}f_\beta(z)\,dz<\infty.
\]
Moreover $\sup_{-1\leq u\leq T}|K(u)|\to0$, so the normalizing
constants of these tilted laws are bounded away from zero for all
sufficiently small nonzero $s$.
Consequently,
\[
 \sup_{-1\leq u\leq T}
 \EE_{P_{s,u}}|L_s-\EE_{P_{s,u}}L_s|^5
 \leq C_{\beta,T}|s|^5.
\]
The fifth cumulant equals the fifth centered moment minus ten times the
product of the third and second centered moments; Lyapunov's inequality
bounds that product by the fifth absolute centered moment.  Therefore
$\sup_{-1\leq u\leq T}|K^{(5)}(u)|\leq C'_{\beta,T}|s|^5$.
For $\beta=1$, the same conclusion follows from $|L_s|\leq |s|$; for
$\beta=2$ it is immediate from the quadratic CGF.

Insert \eqref{eq:gg-cumulant-limits} in the local expansion theorem,
Theorem~\ref{thm:sensitivity-gap}.  The
cubic and quartic terms combine to
$\chi_\beta[t(t+1)]^2s^4+o(|s|^4)$.  Since
\[
 \frac{t(t+1)}2
 \frac{2\{\rho_\beta(s)-\EE_{P_s}L_s\}}{s^2}s^2
 =t(t+1)\{\rho_\beta(s)-\EE_{P_s}L_s\},
\]
this proves \eqref{eq:gg-single-gap}.

The endpoint $t\downarrow0$ in the definition of
$\rho_\beta(s)$ gives
$\rho_\beta(s)\geq\EE_{P_s}L_s$.  Replacing
$\rho_\beta(s)$ by this smaller endpoint value shows that the first
term on the right-hand side of \eqref{eq:gg-single-gap} is nonnegative.
This proves \eqref{eq:gg-quartic-lower-bound}.

At $\beta=1$, the exact Laplace CGF~\eqref{eq:laplace-exact-cgf} gives
$K'(0)=|s|+e^{-|s|}-1=\rho_1(s)$, and
$I_1=J_1=1$ gives $\chi_1=1/12$.  At $\beta=2$, the quadratic Gaussian
CGF gives $\rho_2(s)=\EE_{P_s}L_s=s^2/2$, $\chi_2=0$, and
$\Ga(t)=0$ for every $t>0$.
\end{proof}

\subsection{Proof of Proposition~\ref{prop:discrete-examples}}
\label{app:proof-discrete-examples}

\begin{proof}[Proof of Proposition~\ref{prop:discrete-examples}]
For part \textup{(a)}, the discrete Laplace privacy loss is $ar$ for
$z\geq0$, $-ar$ for $z\leq-r$, and $a(r-2j)$ at $z=-j$ for
$1\leq j<r$.  Therefore
\[
 e^{K(t)}=\frac{1-\lambda}{1+\lambda}\left\{
 e^{tar}\sum_{z=0}^\infty \lambda^z
 +e^{-tar}\sum_{j=r}^\infty \lambda^j
 +\sum_{j=1}^{r-1}\lambda^j e^{ta(r-2j)}\right\}.
\]
Summing the two geometric series and taking the logarithm gives $K(t)$ in
\eqref{eq:discrete-laplace-transform}.  The exact discrete Laplace formula
established by \citet{harrison2025exact} gives
$\rho(r)=\EE_{P_r}L_r$ as displayed.  Substitution into the gap
definition~\eqref{eq:gap-definition} proves the general-$t$ formula.

For part \textup{(b)}, completing the square in the discrete Gaussian sum
gives
\begin{align*}
 e^{K(t)}
 &=e^{r^2t(t+1)/(2\sigma^2)}
   \frac{\Theta_\sigma(tr)}{\Theta_\sigma(0)}\\
 &=e^{r^2t(t+1)/(2\sigma^2)}
   \frac{\Theta_\sigma((1+t)r)}{\Theta_\sigma(0)},
\end{align*}
where the second equality uses integer periodicity.  Thus
\[
 K(t)=\frac{r^2}{2\sigma^2}t(t+1)
 +\log\frac{\Theta_\sigma((1+t)r)}{\Theta_\sigma(0)}.
\]
The theta sum is maximized at integers, with equality only at integers.  The
logarithmic term is therefore nonpositive, while $t=1/r$ makes
$(1+t)r=r+1$ an integer and attains the quadratic envelope.  This proves
$\rho(r)=r^2/(2\sigma^2)$; differentiating at $t=0$
also gives
$\EE_{P_r}L_r
=r^2/(2\sigma^2)$.  Periodicity of the theta sum gives the stated
period and zero set.  Poisson summation gives, uniformly in $u$,
\[
 \Theta_\sigma(u)
 =\sqrt{2\pi}\sigma\{1+2e^{-2\pi^2\sigma^2}
 \cos(2\pi u)+O(e^{-8\pi^2\sigma^2})\}.
\]
Expanding the logarithm proves
\eqref{eq:discrete-gaussian-gap}.
\end{proof}

\FloatBarrier
\section{Supplementary Figures and Numerical Details}
\label{app:section4-additional-results}
\label{app:section5-additional-results}

The supplementary figures check whether the comparisons in the main text
persist under different settings: the analytical plots use $t=1.5$ in
place of $t=1$, and the experiments use $(\eps,\delta)=(5,10^{-5})$ in
place of $(3,10^{-5})$. Each figure is linked to the corresponding
main-text figure so that the effect of the changed setting can be compared
directly. The numerical checks assess quadrature accuracy, order
truncation, and noise calibration; the experimental details and tables
record the workloads, training settings, and complete results behind
Section~\ref{sec:real-data-experiments}. These checks support the reported
patterns within the settings examined.

\subsection{Computational Setup}\label{app:numerics}

The analytical figures in Sections~\ref{sec:defect}
and~\ref{sec:practical-implications} are deterministic.
Gaussian, Laplace, and lattice curves use the closed forms
\eqref{eq:gaussian-exact-cgf}, \eqref{eq:laplace-exact-cgf}, and
Proposition~\ref{prop:discrete-examples}.
Gaussian-mixture expectations use 192-node Gauss--Hermite quadrature;
Student-$t_\nu$ expectations use 2048-node Gauss--Legendre quadrature after
$x=\sqrt\nu\tan\theta$; and generalized-Gaussian transforms use adaptive
quadrature split at $-s$ and $0$.  For the last family, we maximize
$K(u)/[u(u+1)]$ over $0<u\leq32$, including the $u\downarrow0$ endpoint.
The large-order truncation is checked separately from the asymptotic
statement in Proposition~\ref{prop:gg-single}.  Specifically, for
$1<\beta<2$, $r=|s|$, and $0<\zeta<1$, the bound
\[
 \frac{K(u)}{u(u+1)}
 \leq B_{\beta,r,\zeta}(u)
 :=\frac{ur^\beta+
 (ur)^\beta(\beta-1)^{\beta-1}/(\beta\zeta^{\beta-1})
 -\beta^{-1}\log(1-\zeta)}{u(u+1)}
\]
follows from
$L_s\leq r|Z|^{\beta-1}+r^\beta$, Young's inequality, and
$\EE\exp\{\zeta|Z|^\beta/\beta\}=(1-\zeta)^{-1/\beta}$.
For fixed $\zeta$, every term of $B_{\beta,r,\zeta}(u)$ is nonincreasing
when $u\geq(\beta-1)/(2-\beta)$.  At the displayed $r=0.2$ and
$\beta\in\{1.25,1.5,1.75,1.9\}$, this threshold is at most $9$.
Choosing $\zeta$ to minimize $B_{\beta,r,\zeta}(32)$ gives an upper
bound below the numerically evaluated endpoint $K'(0)\leq\rho(s)$ in
every case; the smallest margin is $2.7296\times10^{-4}$.
These parameter-specific tail checks justify the cutoff used in the plots.
The endpoints $\beta=1,2$ use the exact Laplace and Gaussian formulas.
Exponential moments use log-sum-exp arithmetic where needed.

Figure~\ref{fig:gap-metric} uses the captioned four-point pair and, for its
Gaussian panel, $\rho=1/2$ on 100 log-spaced values
$10^{-2}\leq t\leq10^{1.2}$.  Figure~\ref{fig:common-mechanism-gaps}(a) fixes
$t=1$ and uses 61 log-spaced shifts $10^{-3}\leq s\leq10^{-1}$, with
$m=1$ and unit-scale Laplace noise; slopes are fitted over $s\leq0.01$.
Figure~\ref{fig:common-mechanism-gaps}(b)--(c) uses 1001 values of $q\in[0,1]$ with
$(n,\rho_{\rm sum},\|\mathbf v\|/C,t)=(20,5,1,1)$ and plots the positive-$q$
right limit at $q=0$.

Figure~\ref{fig:zcdp-budget-allocation} uses 241 values
of $\omega\in[0,1]$ with $(n,\rho_{\rm sum},t)=(8,1/2,1)$; repeating the
Gaussian-mixture calculation with 384 rather than 192 Hermite nodes leaves
every interior centered second difference of $\Phi_1$ negative on a
501-point grid in $[0,1/2]$, providing numerical evidence for concavity.

Figure~\ref{fig:tail-weight} fixes
$(t,s)=(1,0.2)$ and uses
$\nu\in\{1,2,3,4,5,6.013052641,8,10,15,20,30,50\}$ and
$\beta\in\{1,1.25,1.5,1.75,1.9,2\}$.  Finally,
Figure~\ref{fig:mechanisms} uses 361 values on $0\leq t\leq3$, with
$(r,\sigma)=(1,0.60)$ for discrete Gaussian noise and
$r\in\{1,2,4\}$ at $ar=0.5$ for discrete Laplace noise.  In
Figures~\ref{fig:common-mechanism-gaps}(a) and~\ref{fig:tail-weight}, solid
and dashed curves denote the complete gaps and the stated local-expansion
terms, respectively; the dotted line in Figure~\ref{fig:tail-weight} marks
the maximizer $\nu\approx6.013$.

\subsection{Results at an Alternative Reporting Order}
\label{app:alternative-order-robustness}

The fixed-order displays in Section~\ref{sec:practical-implications} use
$t=1$ (R\'enyi order $\alpha=2$).  Figures~\ref{fig:order-robustness-mechanisms}
and~\ref{fig:order-robustness-composition} repeat every such comparison at
$t=1.5$ ($\alpha=2.5$), with all mechanism parameters and numerical grids
unchanged.  Figure~\ref{fig:mechanisms} already spans $0\leq t\leq3$, so the
lattice examples require no duplicate fixed-order panel.

The fitted log--log slopes of the gap against $s$ are $2.000$ and $3.997$ for the
Gaussian-mixture and Laplace families, respectively, agreeing with the
quadratic and quartic local expansions. The Gaussian gap remains exactly
zero.  The Student-$t_\nu$ curve remains
nonmonotone, while the generalized Gaussian curve decreases to its exact
zero at $\beta=2$.

For the noninteger Poisson order, we evaluate the expectation in
\eqref{eq:subsampled-one-step-gap} by a 192-node Gauss--Hermite rule and
repeat it with 384 nodes; the largest CGF discrepancy is
$7.11\times10^{-15}$. The exact composition agrees with $n\Ga(1.5)$ to
$1.78\times10^{-15}$.  The allocation calculations reuse the exact Laplace
transform and the two Gaussian-mixture quadrature resolutions. On the
displayed $\rho$ grid, the minimum Laplace curvature is $1.15$ and the
maximum Gaussian-mixture curvature is $-1.41$. Thus the computed curves
retain the two allocation directions; the Gaussian-mixture concavity
check remains numerical.
\begin{figure}[!htp]
\centering
\includegraphics[width=.99\linewidth]{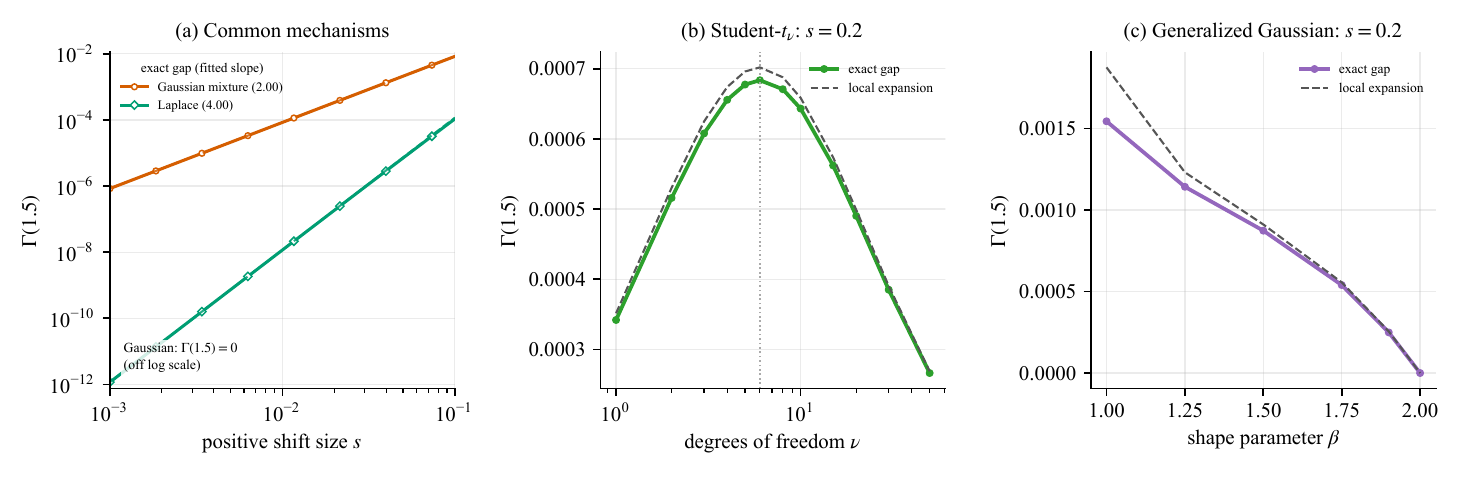}
\caption{The comparisons in Figures~\ref{fig:common-mechanism-gaps}(a)
and~\ref{fig:tail-weight} at $t=1.5$. The zero, quadratic, and quartic
local expansions and the two tail-parameter patterns seen at $t=1$
remain visible.
\textup{(a)} Gaussian, $m=1$ Gaussian-mixture, and unit-Laplace gaps over $10^{-3}\leq s\leq10^{-1}$; solid curves
are exact numerical gaps and dashed curves are the local leading terms, with
fitted slopes shown in parentheses.  \textup{(b)} Student-$t_\nu$ and
\textup{(c)} generalized Gaussian gaps at $s=0.2$, with dashed quartic
local-expansion approximations.}
\label{fig:order-robustness-mechanisms}
\end{figure}

\begin{figure}[!htp]
\centering
\includegraphics[width=.72\linewidth]{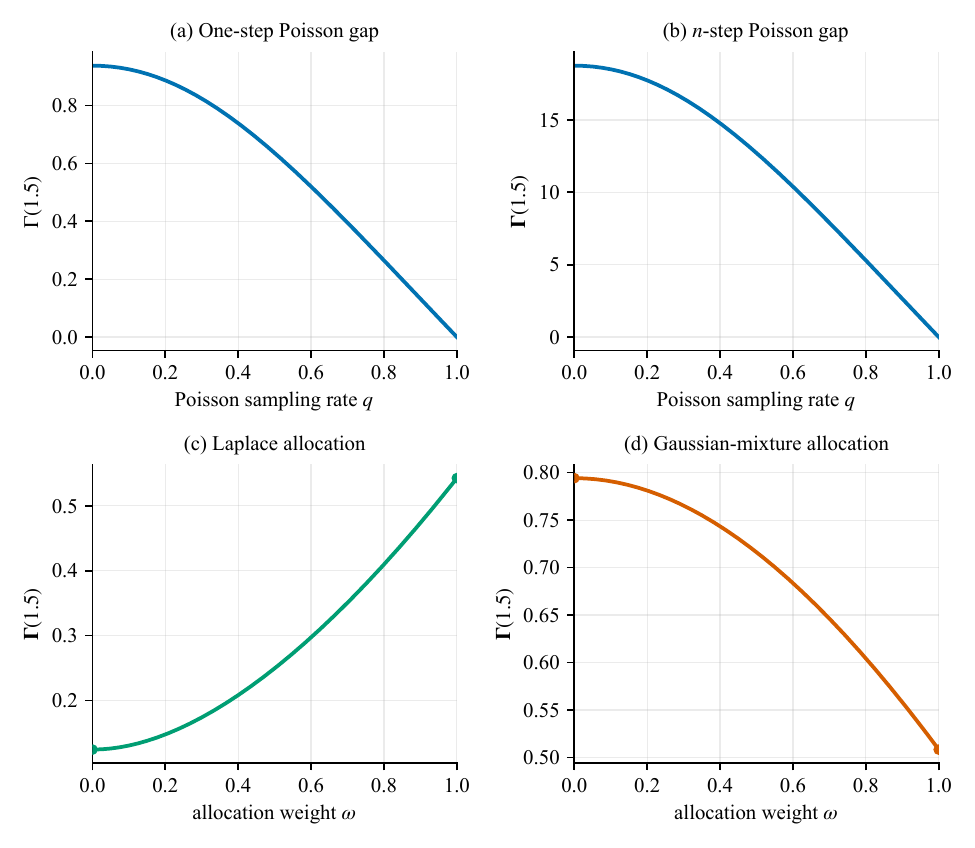}
\caption{The Poisson-subsampling and allocation comparisons in
Figures~\ref{fig:common-mechanism-gaps}(b)--(c)
and~\ref{fig:zcdp-budget-allocation} at $t=1.5$.  The Poisson gaps still decrease from their
positive-$q$ right limits to zero, while concentrating the allocation still
increases the Laplace gap and decreases the computed Gaussian-mixture gap.
\textup{(a)} One-step $\Ga(1.5)$ and \textup{(b)} $n$-step
$\bGa(1.5)=n\Ga(1.5)$ for $n=20$ and
$\rho_{\mathrm{sum}}=5$.  \textup{(c)} Unit-Laplace and \textup{(d)} $m=1$
Gaussian-mixture composition gaps along \eqref{eq:zcdp-allocation-path} for
$n=8$ and $\rho_{\mathrm{sum}}=1/2$.  The Poisson values at $q=0$ are
right-hand limits of the positive-$q$ family.}
\label{fig:order-robustness-composition}
\end{figure}

\FloatBarrier

\subsection{ACS Workloads and Additional Results}
\label{app:acs-details}

\subsubsection{Experimental Setup and Metadata}

The ACS experiment in Section~\ref{subsec:acs-composition} uses published
ACS 5-year aggregates \citep{Census2020ACSST5Y2020.S1501}. The source file
contains $3{,}221$ county and county-equivalent rows. Excluding the District
of Columbia and Puerto Rico leaves $3{,}142$ rows in the 50 states, of which
$2{,}902$ have numeric values in all five earnings columns. We rank states
by raw county count, breaking ties alphabetically by state name. The top
20 states contain $2{,}098$ rows; $1{,}961$ are complete in all five columns,
so $137$ ($6.53\%$) are excluded without imputation. 
Table~\ref{tab:acs-state-workloads} lists the selected states and their
workload sizes: $m_i$ is the number of complete counties in state $i$, and
$d_i=5m_i$ is the number of county-by-education queries.

Earnings are kept in dollars, and every component uses the ordered-pair
shift $\Delta=\$1{,}000$. At a candidate dollar scale $b$, the
standardized shift is $s=\Delta/b=1000/b$. The centered base
noises are $N(0,1)$, the equal-weight Gaussian mixture
$N(-1,1)/2+N(1,1)/2$, and unit-scale Laplace, with variances $1$, $2$, and
$2$, respectively. These choices determine the one-query CGF and optimal
zCDP parameter used in Section~\ref{subsec:acs-composition}. For each state
and family, the RDP and zCDP releases use $20{,}000$ paired draws with shared
base noise, so their MSE difference compares the two calibrated scales on
the same draws.

\begin{table}[!htbp]
\caption{The 20 ACS workloads in Section~\ref{subsec:acs-composition}.
Here $m_i$ is the number of
counties or county-equivalents with all five earnings entries numeric, and
$d_i=5m_i$ is the number of scalar queries in the composed workload. States
are ranked by total county count before incomplete rows are excluded.}
\label{tab:acs-state-workloads}
\centering
\setlength{\tabcolsep}{5.0pt}
\begin{tabular}{r lrr@{\qquad}r lrr}
\toprule
Rank & State & $m_i$ & $d_i$ & Rank & State & $m_i$ & $d_i$ \\
\midrule
1 & TX & 212 & 1060 & 11 & NE & 71 & 355 \\
2 & GA & 146 & 730  & 12 & IN & 91 & 455 \\
3 & VA & 128 & 640  & 13 & OH & 87 & 435 \\
4 & KY & 113 & 565  & 14 & MN & 86 & 430 \\
5 & MO & 114 & 570  & 15 & MI & 81 & 405 \\
6 & KS & 88  & 440  & 16 & MS & 78 & 390 \\
7 & IL & 96  & 480  & 17 & OK & 75 & 375 \\
8 & NC & 96  & 480  & 18 & AR & 73 & 365 \\
9 & IA & 97  & 485  & 19 & WI & 71 & 355 \\
10 & TN & 92 & 460  & 20 & AL & 66 & 330 \\
\bottomrule
\end{tabular}

\end{table}

For each state and family, our experiment records (available from the authors upon request) give the two calibrated
shifts and dollar scales, active orders, achieved privacy, RDP and zCDP
MSE, analytic $\Delta\mathrm{MSE}$, and Monte Carlo summaries. The earnings
values cancel from the paired squared-error difference; they define and
label the workload, while the calibrated noise determines its MSE.
\FloatBarrier

\subsubsection{Results at an Alternative Privacy Target}
\label{app:acs-alternative-privacy}

We repeat the ACS experiment at $(\eps,\delta)=(5,10^{-5})$, changing only
the privacy target.  The workloads, queries, mechanisms, and $20{,}000$
paired draws are unchanged, so Figures~\ref{fig:acs-accounting-utility-eps5}
and~\ref{fig:acs-state-violins-eps5} repeat the comparisons in
Figures~\ref{fig:acs-accounting-utility} and~\ref{fig:acs-state-violins}.
The Gaussian calibrations still coincide, and the Gaussian-mixture MSE
differences still exceed the Laplace differences. The absolute differences
are smaller at the weaker privacy target.
Table~\ref{tab:acs-eps5-summary} gives their ranges across the 20 workloads.

\begin{figure}[!htbp]
\centering
\includegraphics[width=.94\linewidth]{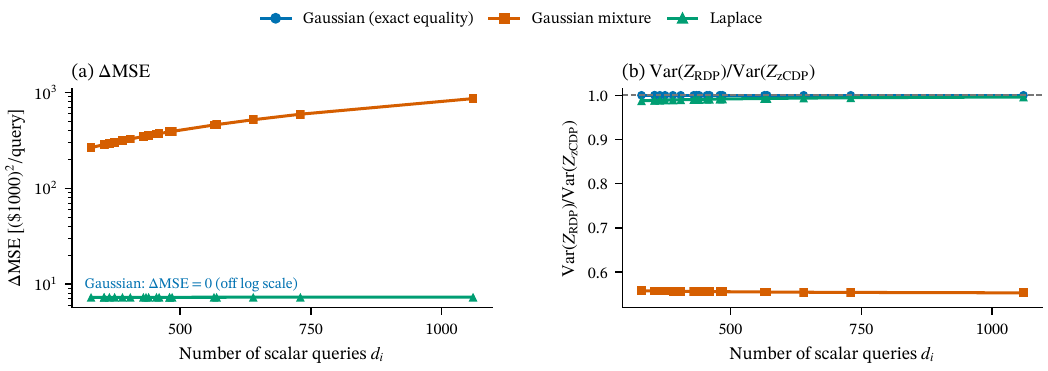}
\caption{Figure~\ref{fig:acs-accounting-utility} repeated at
$(\eps,\delta)=(5,10^{-5})$ on the same 20 ACS workloads. Left:
analytic $\Delta\mathrm{MSE}$ versus $d_i=5m_i$; Gaussian is exactly zero
and is omitted from the logarithmic axis. Right:
$\operatorname{Var}(Z_{\rm RDP})/\operatorname{Var}(Z_{\rm zCDP})$ versus
$d_i$, with Gaussian at one. Gaussian-mixture MSE differences still exceed
Laplace differences.}
\label{fig:acs-accounting-utility-eps5}
\end{figure}

\begin{figure}[!htbp]
\centering
\includegraphics[width=.94\linewidth]{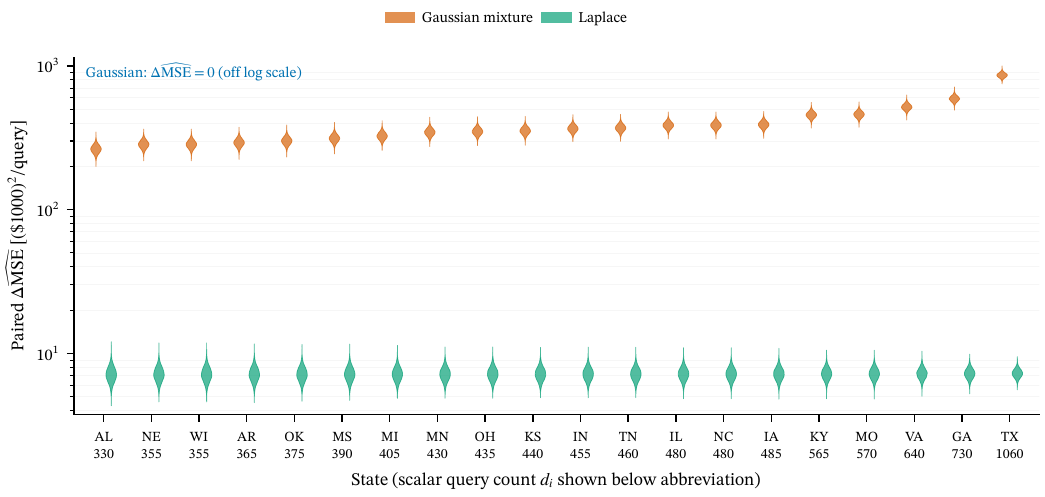}
\caption{Figure~\ref{fig:acs-state-violins} repeated at
$(\eps,\delta)=(5,10^{-5})$. States are ordered by $d_i$; the logarithmic
vertical axis gives $\Delta\widehat{\mathrm{MSE}}$ in $(\$1000)^2$ per
query. Each violin contains $20{,}000$ paired draws and describes
single-release outcomes, not confidence intervals. Gaussian is identically
zero and is omitted. Gaussian-mixture differences still exceed Laplace
differences.}
\label{fig:acs-state-violins-eps5}
\end{figure}

\begin{table}[!htbp]
\caption{ACS MSE differences and variance ratios at
$(\eps,\delta)=(5,10^{-5})$. Ranges are
over the 20 workloads in Table~\ref{tab:acs-state-workloads};
$\Delta\mathrm{MSE}$ is analytic and is measured in $(\$1000)^2$ per query.}
\label{tab:acs-eps5-summary}
\centering
\setlength{\tabcolsep}{6pt}
\begin{tabular}{lcc}
\toprule
Noise family & $\Delta\mathrm{MSE}$ range &
$\operatorname{Var}(Z_{\rm RDP})/\operatorname{Var}(Z_{\rm zCDP})$ range \\
\midrule
Gaussian             & $0$                 & $1$ \\
Gaussian mixture     & $265.0$--$860.6$   & $0.553$--$0.557$ \\
Laplace              & $7.279$--$7.313$   & $0.988$--$0.996$ \\
\bottomrule
\end{tabular}

\end{table}
\FloatBarrier

\subsection{DP-SGD Settings and Additional Results}
\label{app:controlled-dpsgd}
\label{app:reproducibility-details}

\subsubsection{Experimental Setup and Metadata}

The DP-SGD experiment in Section~\ref{subsec:controlled-dpsgd} uses the
standard Fashion-MNIST split of $60{,}000$ training and $10{,}000$ test
images \citep{xiao2017fashion}, with fixed normalization and no
augmentation. The network has two convolution and max-pooling stages,
followed by a 64-unit hidden layer, with no batch normalization. Opacus
clips each example at $C=1$ using flat clipping, adds Gaussian noise once
per training step, and divides by the expected Poisson batch size $qN$, where $N=60{,}000$.
We use SGD with initial learning rate $0.05$, no momentum, weight decay
$10^{-4}$, and a $2000$-step cosine learning-rate schedule shared by all
horizons. The runs use Opacus 1.6.0 and PyTorch 2.11.0 on an
NVIDIA A100-SXM4-80GB.

The sampling-rate sweep uses $q\in\{0.005,0.0075,0.01,0.02,0.05\}$ at
$n=500$; the composition-count sweep uses
$n\in\{250,500,1000,1500,2000\}$ at $q=0.01$. Each setting uses five
seeds. Within a setting and seed, all three methods share the
initialization and Poisson sampling stream, and the RDP and zCDP runs
share the standard-normal stream used to generate their noise. This
pairing holds at both privacy targets, $(\eps,\delta)=(3,10^{-5})$ and
$(5,10^{-5})$.

Test metrics and a fixed deterministic training probe are evaluated every
25 steps; minibatch metrics are recorded at every step. The training
histories appear in Figures~\ref{fig:dpsgd-q-main}(c)--(d)
and~\ref{fig:dpsgd-composition-sweep}(c)--(d).
Table~\ref{tab:dpsgd-full-grid} gives the nine distinct settings at
$(\eps,\delta)=(3,10^{-5})$ and their paired RDP--zCDP summaries. Mean test
accuracies of the RDP and clipping-only runs differ by at most $0.08$
percentage points, while the zCDP results vary with the calibrated noise.

\begin{table}[!htbp]
\caption{Final DP-SGD test metrics at $(\eps,\delta)=(3,10^{-5})$ for
the sweeps in Figures~\ref{fig:dpsgd-q-main}
and~\ref{fig:dpsgd-composition-sweep}. The shared
$(q,n)=(0.01,500)$ cell is listed once, and $nq$ is the expected number of
epochs.  In the upper block, accuracies are percentages and $L_Z$, $L_R$, and
$L_0$ are final test cross-entropies for the zCDP, RDP, and clipping-only
runs.  The lower block reports paired means of $\Delta L_{\rm test}=L_Z-L_R$ and $\Delta{\rm Acc}={\rm Acc}_R-{\rm Acc}_Z$ (positive values favor RDP) and 95\%
confidence intervals; $\Delta{\rm Acc}$ is in percentage points.}
\label{tab:dpsgd-full-grid}
\centering
\setlength{\tabcolsep}{2.0pt}
\begin{tabular}{ccr rrr rrr}
\toprule
$q$ & $n$ & $nq$ & ${\rm Acc}_Z$ & ${\rm Acc}_R$ & ${\rm Acc}_0$
& $L_Z$ & $L_R$ & $L_0$ \\
\midrule
$0.0050$ & $500$  & $2.5$  & $56.408$ & $65.138$ & $65.218$ & $1.673$ & $0.945$ & $0.945$ \\
$0.0075$ & $500$  & $3.75$ & $62.256$ & $65.244$ & $65.204$ & $1.141$ & $0.947$ & $0.948$ \\
$0.0100$ & $500$  & $5$    & $64.188$ & $65.304$ & $65.270$ & $1.020$ & $0.947$ & $0.946$ \\
$0.0200$ & $500$  & $10$   & $66.000$ & $65.232$ & $65.192$ & $0.945$ & $0.944$ & $0.945$ \\
$0.0500$ & $500$  & $25$   & $65.554$ & $65.304$ & $65.342$ & $0.943$ & $0.945$ & $0.945$ \\
\midrule
$0.0100$ & $250$  & $2.5$ & $57.972$ & $61.866$ & $61.928$ & $1.120$ & $1.089$ & $1.089$ \\
$0.0100$ & $1000$ & $10$  & $66.354$ & $69.502$ & $69.512$ & $1.337$ & $0.963$ & $0.962$ \\
$0.0100$ & $1500$ & $15$  & $66.144$ & $70.432$ & $70.460$ & $1.653$ & $0.979$ & $0.978$ \\
$0.0100$ & $2000$ & $20$  & $64.600$ & $70.522$ & $70.564$ & $2.031$ & $0.981$ & $0.980$ \\
\bottomrule
\end{tabular}

\vspace{1.5ex}

\begin{tabular}{cc rr}
\toprule
$q$ & $n$ & $\Delta L_{\rm test}$ & $\Delta{\rm Acc}$ \\
\midrule
$0.0050$ & $500$  & $0.728\pm0.231$  & $8.73\pm6.10$ \\
$0.0075$ & $500$  & $0.193\pm0.123$  & $2.99\pm2.60$ \\
$0.0100$ & $500$  & $0.073\pm0.050$  & $1.12\pm2.92$ \\
$0.0200$ & $500$  & $0.000\pm0.039$  & $-0.77\pm1.44$ \\
$0.0500$ & $500$  & $-0.001\pm0.006$ & $-0.25\pm0.36$ \\
\midrule
$0.0100$ & $250$  & $0.031\pm0.035$  & $3.89\pm1.77$ \\
$0.0100$ & $1000$ & $0.375\pm0.104$  & $3.15\pm1.15$ \\
$0.0100$ & $1500$ & $0.673\pm0.102$  & $4.29\pm1.31$ \\
$0.0100$ & $2000$ & $1.050\pm0.110$  & $5.92\pm1.48$ \\
\bottomrule
\end{tabular}

\end{table}

At $(\eps,\delta)=(3,10^{-5})$, all 115 scheduled trajectories
(25 clipping-only and 90 private) completed and produced all 135 endpoint
results; the $(5,10^{-5})$ target adds 90 private trajectories and reuses
the clipping-only runs. Initialization and Poisson-stream hashes
agree within each group of three methods, and standard-normal stream
hashes agree between the two private runs. At $q=0.01$, shorter horizons
are exact prefixes of the longest clipping-only trajectory.

The maximum calibration errors in $\eps$ are $8.67\times10^{-10}$ at
$\eps=3$ and $1.52\times10^{-9}$ at $\eps=5$. For the main-text target,
refining the order grid changes a noise multiplier by at most
$7.38\times10^{-7}$ in relative terms, and no optimum lies on a search
boundary. At $q=1$, calibration agrees with the Gaussian formula to a
relative error of $9.46\times10^{-10}$. These checks assess the numerical
calibration separately from the variation across training seeds.
\FloatBarrier

\subsubsection{Results at an Alternative Privacy Target}
\label{app:dpsgd-alternative-privacy}

We repeat the controlled DP-SGD experiment at
$(\eps,\delta)=(5,10^{-5})$, changing only the privacy target.  The
architecture, sweep grids, five seeds, and paired random streams remain
unchanged.  Figures~\ref{fig:dpsgd-q-sweep-eps5}
and~\ref{fig:dpsgd-composition-sweep-eps5} repeat the comparisons in
Figures~\ref{fig:dpsgd-q-main} and~\ref{fig:dpsgd-composition-sweep}.
The RDP--zCDP test-cross-entropy differences are smaller at the weaker
privacy target. They still decrease toward zero as $q$ increases and grow
with $n$.

\begin{figure}[!htbp]
\centering
\includegraphics[width=.96\linewidth]{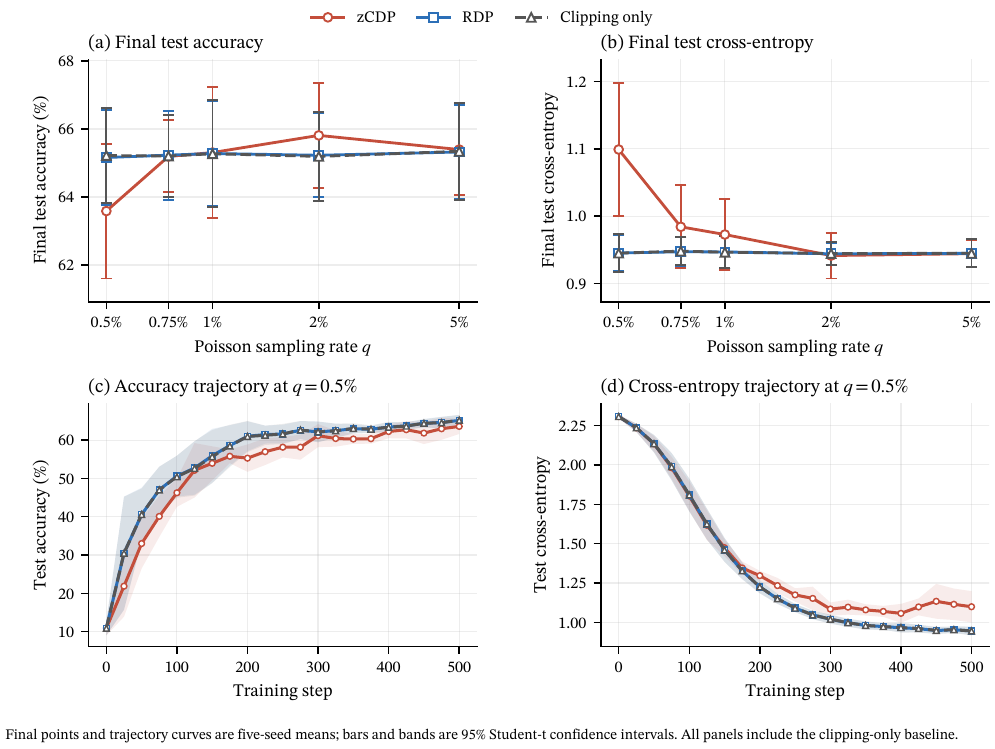}
\caption{The sampling-rate sweep in Figure~\ref{fig:dpsgd-q-main} at
$(\eps,\delta)=(5,10^{-5})$. The low-$q$ zCDP penalty is smaller than at
$(3,10^{-5})$, and the zCDP curve again approaches the RDP and
clipping-only curves as $q$ increases.  All cells use $n=500$; panels
\textup{(a)}--\textup{(b)} report final accuracy and test cross-entropy, and
panels~\textup{(c)}--\textup{(d)} report their trajectories at
$(q,n)=(0.005,500)$.  All panels include the three methods; bars and bands are
five-seed 95\% confidence intervals.}
\label{fig:dpsgd-q-sweep-eps5}
\end{figure}

\begin{figure}[!htbp]
\centering
\includegraphics[width=.96\linewidth]{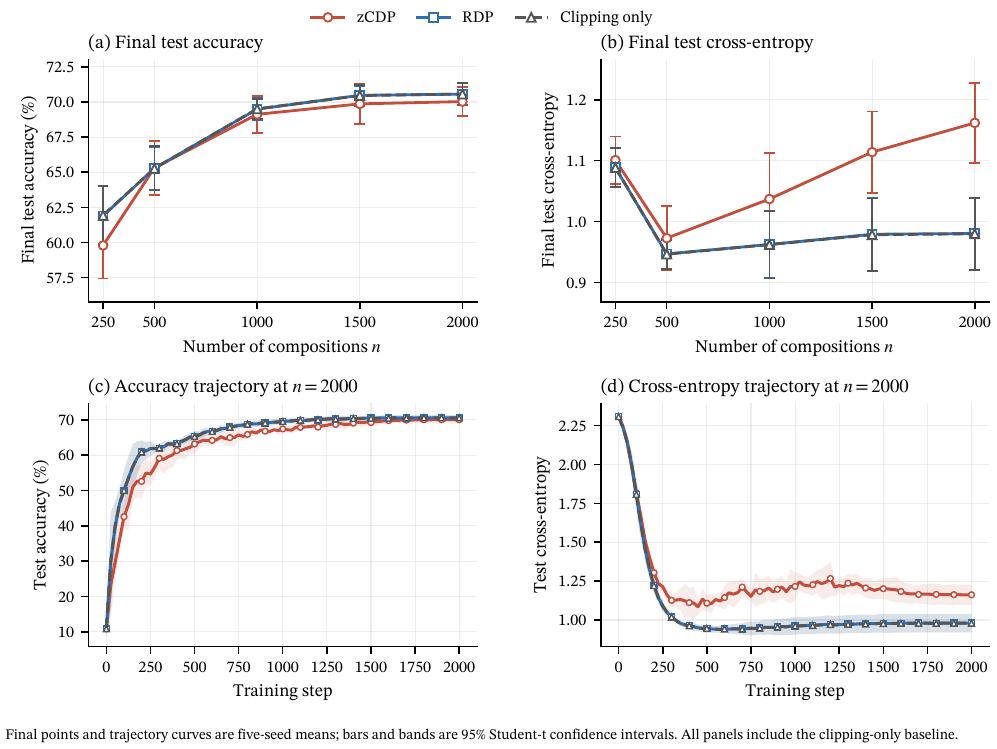}
\caption{The composition-count sweep in
Figure~\ref{fig:dpsgd-composition-sweep} at $(\eps,\delta)=(5,10^{-5})$.
The paired test-cross-entropy difference again grows with $n$, while the
accuracy difference is nonmonotone.  All cells use $q=0.01$; panels~\textup{(a)}--\textup{(b)} report
final accuracy and test cross-entropy, and panels~\textup{(c)}--\textup{(d)}
report the longest-cell trajectories at $(q,n)=(0.01,2000)$.  All panels
include the three methods.  The trajectories use the noise calibrated for
$n=2000$ throughout, so their value at training step 500 is not the
separately calibrated $n=500$ endpoint in panels~\textup{(a)}--\textup{(b)}.
Bars and bands are five-seed 95\% confidence intervals.}
\label{fig:dpsgd-composition-sweep-eps5}
\end{figure}

Table~\ref{tab:dpsgd-full-grid-eps5} gives all endpoint values.  The paired
test-loss difference decreases from $0.154$ at $q=0.005$ to approximately
zero at $q=0.05$ and increases from $0.012$ at $n=250$ to $0.182$ at
$n=2000$.  The seedwise log-$q$ and log-$n$ trends have $p=0.0107$ and $p=0.0152$,
respectively. The accuracy trends are not significant at this target
($p=0.106$ for $q$ and $p=0.361$ for $n$). Thus the test-loss pattern
persists at the weaker privacy target, without a corresponding monotone
accuracy pattern.

\begin{table}[!htbp]
\caption{Final DP-SGD test metrics at
$(\eps,\delta)=(5,10^{-5})$.  Notation and units match
Table~\ref{tab:dpsgd-full-grid}; the lower block reports paired means of $\Delta L_{\rm test}=L_Z-L_R$ and $\Delta{\rm Acc}={\rm Acc}_R-{\rm Acc}_Z$ (positive values favor RDP) and 95\%
confidence intervals.}
\label{tab:dpsgd-full-grid-eps5}
\centering
\setlength{\tabcolsep}{2.0pt}
\begin{tabular}{ccr rrr rrr}
\toprule
$q$ & $n$ & $nq$ & ${\rm Acc}_Z$ & ${\rm Acc}_R$ & ${\rm Acc}_0$
& $L_Z$ & $L_R$ & $L_0$ \\
\midrule
$0.0050$ & $500$  & $2.5$  & $63.586$ & $65.160$ & $65.218$ & $1.099$ & $0.945$ & $0.945$ \\
$0.0075$ & $500$  & $3.75$ & $65.192$ & $65.230$ & $65.204$ & $0.984$ & $0.947$ & $0.948$ \\
$0.0100$ & $500$  & $5$    & $65.306$ & $65.278$ & $65.270$ & $0.973$ & $0.947$ & $0.946$ \\
$0.0200$ & $500$  & $10$   & $65.810$ & $65.226$ & $65.192$ & $0.942$ & $0.944$ & $0.945$ \\
$0.0500$ & $500$  & $25$   & $65.394$ & $65.322$ & $65.342$ & $0.944$ & $0.945$ & $0.945$ \\
\midrule
$0.0100$ & $250$  & $2.5$ & $59.784$ & $61.888$ & $61.928$ & $1.101$ & $1.089$ & $1.089$ \\
$0.0100$ & $1000$ & $10$  & $69.120$ & $69.490$ & $69.512$ & $1.037$ & $0.962$ & $0.962$ \\
$0.0100$ & $1500$ & $15$  & $69.866$ & $70.474$ & $70.460$ & $1.114$ & $0.979$ & $0.978$ \\
$0.0100$ & $2000$ & $20$  & $70.028$ & $70.560$ & $70.564$ & $1.162$ & $0.980$ & $0.980$ \\
\bottomrule
\end{tabular}

\vspace{1.5ex}

\begin{tabular}{cc rr}
\toprule
$q$ & $n$ & $\Delta L_{\rm test}$ & $\Delta{\rm Acc}$ \\
\midrule
$0.0050$ & $500$  & $0.154\pm0.111$  & $1.57\pm2.10$ \\
$0.0075$ & $500$  & $0.037\pm0.055$  & $0.04\pm1.17$ \\
$0.0100$ & $500$  & $0.026\pm0.042$  & $-0.03\pm2.44$ \\
$0.0200$ & $500$  & $-0.003\pm0.027$ & $-0.58\pm1.03$ \\
$0.0500$ & $500$  & $0.000\pm0.005$  & $-0.07\pm0.20$ \\
\midrule
$0.0100$ & $250$  & $0.012\pm0.022$  & $2.10\pm1.12$ \\
$0.0100$ & $1000$ & $0.075\pm0.063$  & $0.37\pm0.69$ \\
$0.0100$ & $1500$ & $0.135\pm0.084$  & $0.61\pm1.67$ \\
$0.0100$ & $2000$ & $0.182\pm0.080$  & $0.53\pm1.38$ \\
\bottomrule
\end{tabular}

\end{table}
\FloatBarrier

\end{document}